\documentclass[a4paper,UKenglish,cleveref, autoref, thm-restate, authorcolumns]{lipics-v2021}

\usepackage{cite}
\usepackage{amsmath,amssymb,amsfonts}
\usepackage{algorithmic}
\usepackage{graphicx}
\usepackage{textcomp}
\usepackage{xcolor}
\usepackage{stmaryrd}

\usepackage[ruled,vlined]{algorithm2e}

\usepackage{multirow}

\usepackage{tikz}
\usepackage{quantikz}
\usepackage{subcaption}

\usepackage{url}

\usepackage{pgfplots}

\usepackage{thmtools}
\usepackage{thm-restate}

\usetikzlibrary{patterns}

\ifdefined \VersionWithComments
\usepackage{marginnote}
\newcommand{\marginX}{\marginnote{\huge{\quad\quad\textbf{!}\quad\quad}}}
\newcommand{\lsw}[1]{\mbox{}{\color{orange}\marginX{}\textbf{[Shang-Wei}: #1]}}
\newcommand{\cyr}[1]{\mbox{}{\color{green!50!black}\marginX{}\textbf{[Yean-Ru}: #1]}}
\newcommand{\yjq}[1]{\mbox{}{\color{teal}\marginX{}\textbf{[Ji-Qing}: #1]}}
\newcommand{\zhe}[1]{\mbox{}{\color{violet}\marginX{}\textbf{[Zhe}: #1]}}
\newcommand{\david}[1]{\mbox{}{\color{purple}\marginX{}\textbf{[David}: #1]}}
\newcommand{\instructions}[1]{{\color{red}\marginX{}\textbf{[Instructions: ``#1'']}}}
\newcommand{\reviewer}[2]{\mbox{}{\color{red}\marginX{}\textbf{[Reviewer #1}: ``#2'']}}
\newcommand{\todo}[1]{\mbox{}{\color{blue}{\marginX{}\textbf{TODO}\ifx#1\\\else:\ \fi #1}}} 

\else
\newcommand{\instructions}[1]{}
\newcommand{\cyr}[1]{}
\newcommand{\lsw}[1]{}
\newcommand{\yjq}[1]{}
\newcommand{\david}[1]{}
\newcommand{\zhe}[1]{}
\newcommand{\reviewer}[2]{}
\newcommand{\todo}[1]{}
\fi

\usepackage{thmtools}
\usepackage{thm-restate}

\nolinenumbers

\title{\ourwork: A Quantum SAT Solver Based on Entanglement and Equivalence Checking} 

\author{Shang-Wei LIN}{InfoComm Technology (ICT) Cluster, Singapore Institute of Technology (SIT), Singapore \and \url{https://shangweilin.github.io/} }{shangwei.lin@singaporetech.edu.sg}{https://orcid.org/0000-0002-9726-3434}{}

\author{Ji-Qing YAN}{Department of Electrical Engineering, National Cheng Kung University, Taiwan \and \url{} }{n26124939@gs.ncku.edu.tw}{}{}

\author{Yean-Ru CHEN}{Department of Electrical Engineering, National Cheng Kung University, Taiwan \and \url{} }{chenyr@mail.ncku.edu.tw}{}{}

\author{Zhe HOU}{InfoComm Technology (ICT) Cluster, Singapore Institute of Technology (SIT), Singapore \and \url{} }{z.hou@griffith.edu.au}{}{}

\author{David Sanán}{InfoComm Technology (ICT) Cluster, Singapore Institute of Technology (SIT), Singapore \and \url{} }{sanan.baena@gmail.com}{}{}

\authorrunning{S.-W. Lin et al.} 

\Copyright{Shang-Wei LIN and Yean-Ru CHEN} 

\ccsdesc[100]{\textcolor{red}{Replace ccsdesc macro with valid one}} 

\keywords{quantum computing, SAT solving, quantum SAT solving} 

\category{} 

\relatedversion{} 

\newcommand{\ourwork}[0]{\texttt{Quantangle-SAT}}

\begin{document}

\maketitle

\begin{abstract}
Satisfiability (SAT) is a central problem in computer science, and advances in SAT-solving algorithms have a far-reaching impact across many fields. Recent works have proposed quantum SAT solvers based on Grover’s algorithm, a quantum search technique. However, Grover-based approaches face a key limitation: they typically require prior knowledge of the number of satisfying assignments of the target Boolean formula. This information is unavailable in most practical settings. Quantum counting can be used to estimate this quantity, but it incurs a computational overhead that is several orders of magnitude higher than Grover search. In this paper, we propose a novel quantum SAT solver based on entanglement and equivalence checking. Our method does not assume prior knowledge of the number of solutions and is computationally more efficient than quantum counting. Although the worst-case time complexity is inevitably exponential, we prove that the expected time complexity of our approach is only constant time $O(1)$ over random Boolean functions. Experimental results also support our theoretical claim.
\end{abstract}

\section{Introduction} \label{sec:Intro}

Satisfiability (SAT) is a fundamental problem in computer science. Finding efficient algorithms for SAT solving has significant impacts on many areas of computer science and beyond. 
Quantum computing generalizes classical computing from binary bits to quantum bits (qubits), which could represent both 0's and 1's simultaneously in {\em superpositions}. This characteristic provides the ability to execute all the possible computational paths simultaneously, known as quantum parallelism. Furthermore, qubits can become {\em entangled} with each other, which is a strictly quantum mechanical phenomenon with no classical analogue and provides a computing resource enabling quantum computers to achieve \emph{quantum supremacy} over their classical counterparts. These properties of quantum computing lead to substantial speed-up compared to certain classical computing algorithms.

Recently, quantum SAT solvers have been proposed based on Grover's algorithm~\cite{Grover96}, a quantum searching technique. However, Grover-based quantum SAT solvers~\cite{FS19,LWC24} have a fundamental limitation: the number of solutions of the target Boolean formula must be known a priori, which is 
usually not the case. Although quantum counting~\cite{Mos98,BHMT02,AR21} can be used to estimate the number of solutions to fulfill the prerequisite of Grover's algorithm, its computational complexity is higher than Grover's algorithm by multiple orders of magnitude. This makes Grover-based quantum SAT solvers impractical in applications.

\begin{figure}[tb]
\begin{subfigure}[b]{0.12\linewidth}
\scalebox{0.6}{
\begin{quantikz}[row sep={7mm,between origins}, column sep=5mm, font=\Large]
\lstick{$\ket{x}$} & \ctrl{1}\gategroup[2,steps=1,style={fill=yellow!40},background]{} & \qw \\
\lstick{$\ket{F}$} & \targ{} & \qw
\end{quantikz}
}
\vspace{5mm}
\caption{Simple}\label{subfig:Ex_Simple}
\end{subfigure}
~
\begin{subfigure}[b]{0.18\linewidth}
\scalebox{0.6}{
\begin{quantikz}[row sep={7mm,between origins}, column sep=3mm, font=\Large]
\lstick{$\ket{x}$} & \qw & \ctrl{2}\gategroup[3,steps=3,style={fill=yellow!40},background]{} & & \ctrl{2} & \qw & \qw \\
\lstick{$\ket{F}$} & \qw & & \targ{} &  & \qw & \qw \\
\lstick{$\ket{0}$} & \qw & \targ{} & \ctrl{-1} & \targ{} & \qw & \qw
\end{quantikz}
}
\vspace{5mm}
\caption{Good} \label{subfig:Ex_Good}
\end{subfigure}
~
\begin{subfigure}[b]{0.62\linewidth}
\includegraphics[width=\linewidth]{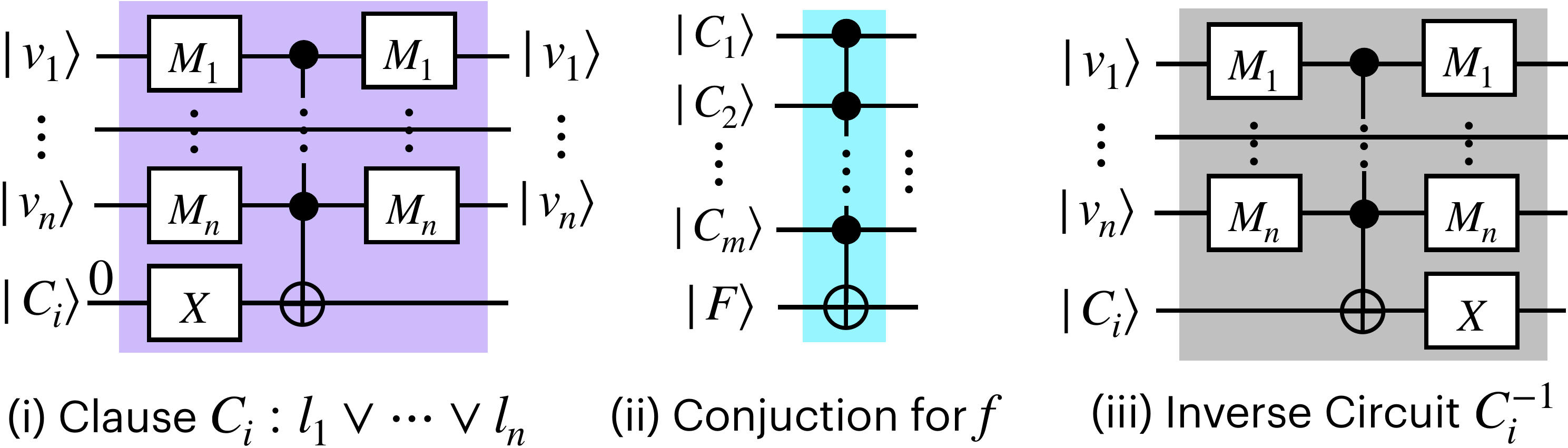}
\caption{Quantum Circuit Construction for A CNF Formula} \label{subfig:ClassicConstruction}
\end{subfigure}
\caption{Quantum Circuits} \label{fig:Running_Example}
\end{figure}

In this work, we propose a novel quantum SAT solver that does not require prior knowledge of the number of solutions.
We tackle the problem from a different angle.  
The intuition behind our approach is neat:
Given a Boolean formula $f$, we determine whether $f$ is satisfiable by comparing it with 
\emph{falsity}
$f_{\bot}$, which has no solutions. If $f$ is equivalent to $f_\bot$, then $f$ is unsatisfiable; otherwise, $f$ is satisfiable. To utilize 
quantum parallelism, we convert both formulas, $f$ and $f_\bot$, into quantum circuits, and then compare them 
using
equivalence checking of quantum circuits~\cite{SW22}. However, 
the original
quantum equivalence checking~\cite{SW22} 
has two limitations:
\begin{enumerate}
\item The widths (the number of qubits) of the two quantum circuits have to be exactly the same. However, when constructing a quantum circuit, 
different designers may use different numbers of ancillary qubits, resulting in different widths, even for equivalent quantum circuits.
For example, the two circuits, as shown in Figs.~\ref{subfig:Ex_Simple} and ~\ref{subfig:Ex_Good}, implement the same Boolean function $f: (x)$ consisting of one clause with only one positive literal $x$. Since these two circuits have different widths, the work~\cite{SW22} is not applicable here. 

\item Even when two quantum circuits $U_1$ and $U_2$ have the same width $n$, the work~\cite{SW22} further requires to extend these two circuits into their corresponding larger versions $U_1'$ and $U_2'$,
each of which has width $2n$,
 for precise distance calculation. How this extension is done will be introduced in Section~\ref{subsec:EquivalenceChecking}. This additional overhead becomes substantial when $n$ (the number of variables) is large,
which is common when scaling up SAT solving.
\end{enumerate}

Inspired by~\cite{SW22}, we propose \ourwork, a quantum SAT solver based on entanglement and equivalence checking. 
We not only address the above two limitations in quantum equivalence checking but also go beyond to use the improved checking method to solve SAT problems. Our contributions can be summarized as follows:
\begin{itemize}
    \item We solve the problem of inconsistent width when applying equivalence checking on quantum circuits with different widths (c.f.~Section~\ref{subsec:ancillary-qubits}).

    \item We develop a general scheme of quantum circuit construction for arbitrary Boolean formulas such that the overhead required in the second limitation of~\cite{SW22}
    is eliminated (c.f.~Section~\ref{subsec:TightBound}).

    \item With the two limitations addressed, our approach determines whether a Boolean formula $f$ is satisfiable by comparing $f$ with $f_\bot$ based on equivalence checking on their corresponding quantum circuits $\Ddot{U}_f$ and $\Ddot{U}_{f_\bot}$, obtained by 
    our new
    quantum circuit construction scheme (c.f.~Section~\ref{subsec:QuantumSATSolving}). Furthermore, we prove that the \emph{expected} time complexity of our approach is constant time (c.f. Section~\ref{sec:Discussion}).
\end{itemize}

To the best of our knowledge, our approach, \ourwork, is the first quantum SAT solver based on entanglement and equivalence checking. Compared to Grover-based SAT solvers, \ourwork~does not require
the number of solutions of the target formula a priori and has lower complexity than quantum counting. Notice that our approach is not a competitor of classical non-quantum SAT solvers. Instead, \ourwork~is complementary to them. When the SAT problem cannot be solved by classical non-quantum SAT solvers (e.g., the number of Boolean variables is too huge), \ourwork~provides another different approach.
We have formally proved the correctness of our approach. All the proofs are provided in the appendix due to the page limit.


\section{Preliminary} \label{sec:Preliminary}

We assume that the readers have basic knowledge of linear algebra and quantum computing, e.g., the {\em tensor product} operation, primitive quantum gates (e.g., $X$, $Z$, $H$, etc), and {\em quantum entanglement}. Here, we briefly establish the notation used throughout this paper. We use the {\em ket} notation $\ket{\cdot}$ to denote the (column) vector  and the {\em bra} notation $\bra{\cdot}$ to denote its conjugate transpose (row) vector. 
Given two vectors $\ket{v_1}$ and $\ket{v_2}$, we use $\langle v_1 | v_2 \rangle$ to denote their inner product, $|v_1\rangle \langle v_2|$ for their outer product, and $\ket{v_1} \otimes \ket{v_2}$ for their tensor product, which may be abbreviated as $\ket{v_1} \ket{v_2}$, or even $\ket{v_1 v_2}$ for simplicity.
In this work, we 
use
\emph{projective measurement}, as defined in Definition~\ref{def:ProjectiveMeasurement}.

\begin{definition} \label{def:ProjectiveMeasurement}
A projective measurement is described by an observable, $M$, a Hermitian operator on the system being observed. The observable $M$ has a spectral decomposition $M = \sum_a a \cdot P_a$, where $P_a$ is the projector onto the eigenspace of $M$ with eigenvalue $a$.
\end{definition}

The possible outcomes of the measurement correspond to the eigenvalues, $a$, of the
observable. We use the notation $\widetilde{M}$ to denote the measurement result of $M$, and use $M_{\ket{\phi}}$ to denote the operation of measurement on the state $\ket{\phi}$. Thus, $\widetilde{M_{\ket{\phi}}} = a$ means that the result of applying measurement $M$ on state $\ket{\phi}$ is $a$.
After measurement, the probability of getting result $a$ is given by $\mathbb{P}(a) = \bra{\phi} P_a \ket{\phi}$.
Given that outcome $a$ occurred, the state of the quantum system immediately after the measurement is $\frac{P_a \ket{\phi}}{\sqrt{\mathbb{P}(a)}}$.
A nice property of projective measurement $M$ is that the \emph{expected value} of the measurement result on a quantum state $\ket{\phi}$, denoted by $\mathbb{E}(M_{\ket{\phi}})$, can be easily calculated as $    \mathbb{E}(M_{\ket{\phi}}) = \bra{\phi} \! M \! \ket{\phi}$.

\subsection{Quantum Circuit Construction} \label{subsec:QC_Construction}


In this work, we consider SAT formulas in conjunctive normal form (CNF) over a set of Boolean variables $V$, whose syntax is as follows:
\[
\begin{array}{lllll}
    F \, \simeq \,  C_1 \wedge C_2 \wedge \cdots \wedge C_m & \hspace{10mm} &
    C \, \simeq \,  l_1 \vee l_2 \vee \cdots \vee l_n & \hspace{10mm} &
    l \, \simeq \, v \mid \overline{v}
\end{array}
\]

A formula $F$ is a conjunction of $m$ clauses $C_1, C_2, \ldots, C_m$, and each clause $C_i$ is a disjunction of $n$ literals $l_1, l_2, \ldots, l_n$, where $m, n \in \mathbb{N}$. A literal $l_j$ could be a Boolean variable $v$ and called a {\em positive} literal, or the negation of a Boolean variable $\overline v$ and called a {\em negative} literal. 
A formula can be viewed as a function $F: \{0,1\}^{|V|} \mapsto \{0, 1\}$ mapping an input vector $\vec{v} \in \{0,1\}^{|V|}$ to true/false (0/1), where $|V|$ denotes the cardinality of $V$. A formula $F$ is \emph{satisfiable} if there exists some $\vec{v} \in \{0,1\}^{|V|}$ such that $F(\vec{v}) = 1$, and we call such $\vec{v}$ a {\em solution} to $F$.  A formula $F$ is \emph{unsatisfiable} if it does not have any solution. 

The construction of the quantum circuit for a CNF formula $F$ follows the bottom-up approach~\cite{FS19,LCWC23}. The circuit for each clause $C_i$ is constructed first, and then all the clauses are conjuncted together. Fig.~\ref{subfig:ClassicConstruction}(i) shows how to construct the circuit for each clause $C_i: l_1 \vee l_2 \vee \ldots \vee l_n$, where the $M_j$ gate depends on literal $l_j$ for $j \in \{1,2, \ldots, n\}$. If $l_j$ is positive, $M_j$ is the $X$ gate, while if $l_j$ is negative, $M_j$ is the $I$ gate. The qubit $\ket{C_i}$ represents the truth value of clause $C_i$. 
Once the quantum circuits for all the $m$ clauses are constructed, they are conjuncted by a CNOT gate ($m$-qubit Toffoli gate, to be more precise) to form the circuit for $F$, as shown in Fig.~\ref{subfig:ClassicConstruction}{(ii)}, where $\ket{F}$ represents the truth value of formula $F$, which is controlled by $\ket{C_i}$ for all $i \in \{1,2, \ldots, m\}$.

\subsection{Bell Expression and Bell Experiment}

In 1964, 
Bell derived the very famous Bell inequality~\cite{Bell64} to test whether the predictions of quantum mechanics (especially those involving entangled particles) could be explained by hidden variables that obey local realism. Quantum experiments, called the Bell experiments~\cite{CHSH69,ADR82}, were performed, and the results were interpreted based on the inequality. It turned out that Bell inequality was violated, showing that quantum correlations could not be explained by Einstein's theory of local realism. After that, Bell inequality was generalized as Bell expression~\cite{SATW17} for arbitrary number of measurements and
outcomes.

Here, we quickly introduce the Bell experiment based on Bell expression.
It is performed by two parties (Alice and Bob) and repeated for $s$ rounds, where $s \in \mathbb{N}$. For each round, Alice and Bob prepare the \emph{maximally entangled state} $\ket{\Phi_d}$, as shown in Eq.~\ref{eq:MaxEntangledState}.
\begin{equation} \label{eq:MaxEntangledState}
\ket{\Phi_d} = \frac{1}{\sqrt{d}} \sum_{i = 0}^{d-1} \ket{ii}, \quad \mbox{where} \quad d = 2^n \quad \mbox{and} \quad n \in \mathbb{N}.
\end{equation}
They hold their corresponding $n$-qubit subsystems $\ket{i}$, respectively, as shown in Fig.~\ref{subfig:BellExperiment}. Notice that the two subsystems, $\ket{i}$ held by Alice and $\ket{i}$ (denoted identically on purpose) held by Bob, are entangled. 
Then, they perform local measurements on their subsystems $\ket{i}$, respectively.
In each round, they select one observable for measurement among the $m$ predefined observables, where $m \in \mathbb{N}$.
After $s$ rounds, they collect the measurement outcome statistics data $\mathbb{P}(ab | xy)$, where $x,y \in \{1, 2, \ldots, m \}$ are the labels of the observables used for the local measurements, and $a,b \in \{ 0, 1, \ldots, d-1 \}$ are the corresponding outcomes.

\begin{figure}[tb]
\begin{subfigure}[b]{0.29\linewidth}
\includegraphics[width=\linewidth]{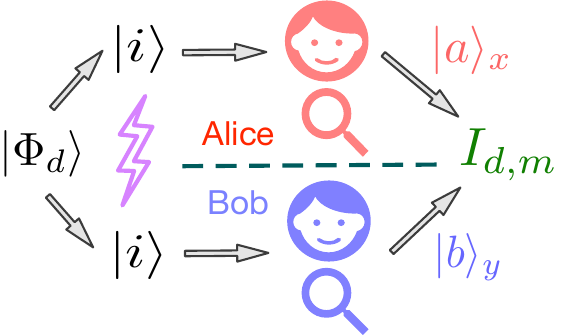}
\caption{Bell Experiment} \label{subfig:BellExperiment}
\end{subfigure}
~
\begin{subfigure}[b]{0.43\linewidth}
\includegraphics[width=\linewidth]{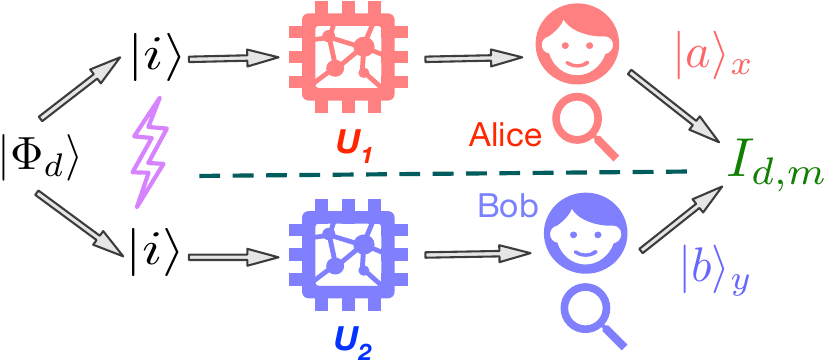}
\caption{Quantum Equivalence Checking} \label{subfig:QEC}
\end{subfigure}
~
\begin{subfigure}[b]{0.23\linewidth}
\includegraphics[width=\linewidth]{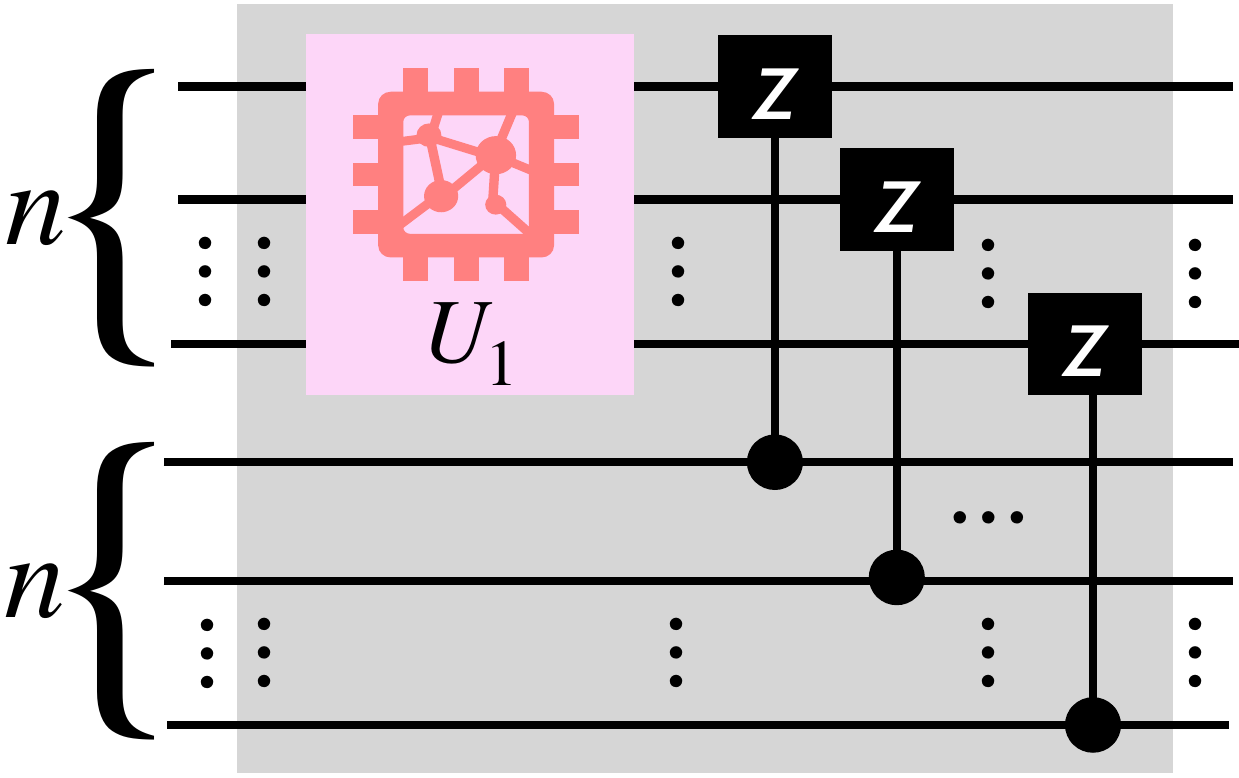}
\vspace{0.5mm}
\caption{Extented $U_1'$} \label{subfig:CircuitExtension}
\end{subfigure}
\caption{Bell Experiment and Quantum Equivalence Checking} \label{fig:BellQEC}
\end{figure}


When performing the local measurement labelled by $x \in \{1, 2, \ldots, m \}$, Alice measures an observable with the eigenvectors $\ket{a}_x$, as shown in the left of Eq.~\ref{eq:Eigenvectors}, for $a \in \{0, 1, \ldots, d -1 \}$, where $\omega = e^{\frac{2\pi \mathbf{i}}{d}}$ and $\alpha_x = (x - \frac{1}{2}) / m$.
\begin{equation} \label{eq:Eigenvectors}
\ket{a}_x = \frac{1}{\sqrt{d}} \sum_{k=0}^{d-1} \omega^{ k \cdot(a - \alpha_x)} |k\rangle \quad \quad \quad \mbox{vs.} \quad \quad \quad
\ket{b}_y = \frac{1}{\sqrt{d}} \sum_{k=0}^{d-1} \omega^{ - k \cdot (b - \beta_y)} |k\rangle
\end{equation}
To perform the local measurement labelled by $y$, Bob measures an observable with the eigenvectors $\ket{b}_y$, as shown in the right of Eq.~\ref{eq:Eigenvectors}, for $b \in \{0, 1, \ldots, d -1 \}$, where $\beta_y = y / m$.
After the local measurements are performed, the results are interpreted by Bell expression, formulated in Definition~\ref{def:BellExp}. Notice that Bell expression is defined for an arbitrary quantum state, not just for the maximally entangled state $\ket{\Phi_d}$.

\begin{definition}[Bell Expression~\cite{SATW17}] \label{def:BellExp}
Bell expression for measuring an arbitrary quantum state $\ket{\phi}$
, denoted by $I_{d,m}(\ket{\phi})$ 
is defined as $I_{d,m}(\ket{\phi}) = \sum_{i = 1}^{m} \sum_{l=1}^{d-1} \langle \phi| (A_i^l \otimes \overline{B}_i^l) \ket{\phi}$, where $A_i^l = \sum_{a=0}^{d-1} \omega^{ a \cdot l} |a\rangle_{ii}\langle a|$ and $\overline{B}_i^l = (A_i^l)^*$. Here, $\ket{a}_i$ is the eigenvector defined in Eq.~\ref{eq:Eigenvectors}, and $|a\rangle_{ii}\langle a|$ is the outer product of $\ket{a}_i$.
\end{definition}

Interestingly, Lemma~\ref{lm:MaxViolation} shows that the value of Bell expression has an upper bound, and this upper bound is achieved when the state being measured is the maximally entangled state $\ket{\Phi_d}$. Notice that, in the rest of this paper, if the state $\ket{\phi}$ to be measured is clear without ambiguity, we will sometimes omit $\ket{\phi}$ and just use $I_{d,m}$ to denote $I_{d,m}(\ket{\phi})$ to save space.

\begin{restatable}[Maximal Violation of Bell Expression~\cite{SATW17}]{lm}{MaxViolation} \label{lm:MaxViolation} ~\\
Given a quantum state $\ket{\phi}$, we have $I_{d,m}(\ket{\phi}) \leq m(d-1)$. In addition, the upper bound is exactly achieved when $\ket{\phi}$ is the maximally entangled state $\ket{\Phi_d}$.
\end{restatable}


\subsection{Equivalence Checking of Quantum Circuits} \label{subsec:EquivalenceChecking}

The contributions of Bell expression/experiment are not only theoretical but also practical. The work of \cite{SW22} showed that Bell expression can be adopted to check whether two quantum circuits are equivalent.
Let $U_1$ and $U_2$ be two quantum circuits, each of which has $n$ qubits. The equivalence checking for $U_1$ and $U_2$ is fundamentally a Bell experiment by two parties: Alice for $U_1$ and Bob for $U_2$, respectively. The experiment will be repeated for $s$ rounds, where $s \in \mathbb{N}$.
The whole equivalence checking process is shown in Fig.~\ref{subfig:QEC}. One can see that the setting is almost identical to a Bell experiment except that the two subsystems are passed through $U_1$ and $U_2$, respectively, before being measured by the same observables (defined in Definition~\ref{def:BellExp}). Lemma~\ref{lm:CircuitEquiv} shows that the value of Bell expression will achieve the upper bound if and only if $U_1$ and $U_2$ are equivalent.


\begin{restatable}[Circuit Equivalence~\cite{SW22}]{lm}{CircuitEquiv} \label{lm:CircuitEquiv}
Let $U_1$ and $U_2$ be two quantum circuits. We have
$I_{d,m}((U_1 \otimes U_2) \ket{\Phi_d}) = m (d - 1) \iff U_1 = U_2$ up to a global phase.
\end{restatable}

If $U_1 \neq U_2$, how do we know how different they are? One useful indicator is the distance between them. The distance between $U_1$ and $U_2$, denoted by $D(U_1, U_2)$ is defined as $D(U_1, U_2) = \sqrt{1 -  \left| \frac{1}{d} \mathsf{Tr}(U_1^T U_2) \right|^2}$, where $\mathsf{Tr}(M)$ denotes the sum of diagonal elements in the square matrix $M$. Lemma~\ref{lm:DRange} shows that the upper and lower bounds of $D(U_1, U_2)$ can be obtained based on the Bell expression. 

\begin{restatable}[Matrix Distance Range~\cite{SW22}]{lm}{MatrixDistance} \label{lm:DRange}
Let $\ket{\phi} = (U_1 \otimes U_2) \ket{\Phi_d}$ and $\Omega = I_{d,m}(\ket{\phi})$, then we have $\sqrt{1 - \frac{\Omega + m}{md}} \leq D(U_1, U_2) \leq \sqrt{1 - \frac{\Omega - m (d - 2)}{m}}$.
\end{restatable}

In practice, knowing the upper and lower bounds of $D(U_1,U_2)$ may not be good enough --- we may need the exact value of $D(U_1,U_2)$. Lemma~\ref{lm:IDM_Prime} shows that, if we spend $n$ more qubits extending $U_1$ into $U_1'$ based on the construction in Fig.~\ref{subfig:CircuitExtension}, and similarly extending $U_2$ into $U_2'$, we can obtain the exact value of $D(U_1, U_2)$ by equivalence checking of $U_1'$ and $U_2'$. Notice that both $U_1'$ and $U_2'$ have $2n$ qubits, and thus the value of $d$ in Lemma~\ref{lm:IDM_Prime} is $2^{2n}$, not $2^n$.

\begin{restatable}[Extended Distance~\cite{SW22}]{lm}{ExtendedDistance} \label{lm:IDM_Prime} 
Let $\ket{\phi'} = (U_1' \otimes U_2') \ket{\Phi_d}$ and $d = 2^{2n}$. We have $D(U_1', U_2') = D(U_1, U_2) = \sqrt{1 - \frac{I_{d,m}(\ket{\phi'}) + m}{md}} = \sqrt{1 - \widehat{I_{d,m}}(\ket{\phi'})}$.

\end{restatable}

If we look closely into $I_{d,m}$, we will find that the number of terms in it is $m(d-1)$, where $d = 2^n$ grows exponentially with the number of qubits $n$. This makes calculating $I_{d,m}$ computationally expensive in practice. Thus, the work~\cite{SW22} proposed to use the \emph{normalized} Bell expression $\widehat{I_{d,m}}$. Lemma~\ref{lm:NormBellExp} defines it and highlights its property.

\begin{restatable}[Normalized Bell Expression~\cite{SW22}]{lm}{NormBellExp} \label{lm:NormBellExp} 
Given a state $\ket{\phi}$ to be measured, we define $\widehat{I_{d,m}}(\ket{\phi}) =  (I_{d,m}(\ket{\phi})+ m) / md$ as the normalized Bell expression such that $0 \leq  \widehat{I_{d,m}}(\ket{\phi}) \leq 1$.
It can be further rewritten as follows, where $A_i^1$ is defined in Definition~\ref{def:BellExp} for $i \in \{1, \ldots, m\}$, $A^1_{m+1} = A^1_1 + 1$, and $B_i^1 = (A_i^1)^*$ ; $\alpha_k = \frac{1}{2d} \cdot \tan(\frac{\pi}{2m}) \cdot \cot\left(\frac{\pi}{d} (k + \frac{1}{2m}) \right)$ .
\[
    \widehat{I_{d,m}}(\ket{\phi}) = \frac{1}{m} \sum_{k=0}^{d-1} \sum_{i=1}^{m} \alpha_k 
    \left[
        \mathbb{P}\left( \widetilde{(A^1_i)_{\ket{\phi}}} = \widetilde{(B^1_i)_{\ket{\phi}}} + k \right) + \mathbb{P} \left( \widetilde{(B^1_i)_{\ket{\phi}}} = \widetilde{(A^1_{i+1})_{\ket{\phi}}} + k \right)
    \right]
\]

\end{restatable}

Lemma~\ref{lm:NormBellExp} suggests that the value of $\widehat{I_{d,m}}$ can be approximated by $\mathtt{QEC}$~\cite{SW22}, a Monte Carlo sampling process, as shown in Algorithm~\ref{alg:QEC} listed in Appendix~\ref{apx:QEC}. 
Its prototype is $\mathtt{QEC}(\ket{\phi}, n, C_1, C_2, m, s)$ and returns the approximated $\widehat{I_{d,m}}(\ket{\phi})$, where $\ket{\phi}$ is the input state, $n$ is the width of two quantum circuits $C_1$ and $C_2$, $m$ is the number of measurement types, and $s$ is the number of samplings. To get $D(U_1,U_2)$, we need to extend the two circuits into $U_1'$ and $U_2'$ and obtain $\widehat{I_{d,m}}((U_1' \otimes U_2')\ket{\Phi_d})$ by applying $\mathtt{QEC}\left(\ket{\Phi_d}, 2n, U_1', U_2', m,s\right)$. Then, we can calculate $D(U_1,U_2)$ based on Lemma~\ref{lm:IDM_Prime}. 



\section{Our \ourwork~Algorithm}
\label{sec:Methodology}

\begin{figure}[tb]
\centering
\begin{subfigure}[b]{0.3\linewidth}
\includegraphics[width=0.88\linewidth]{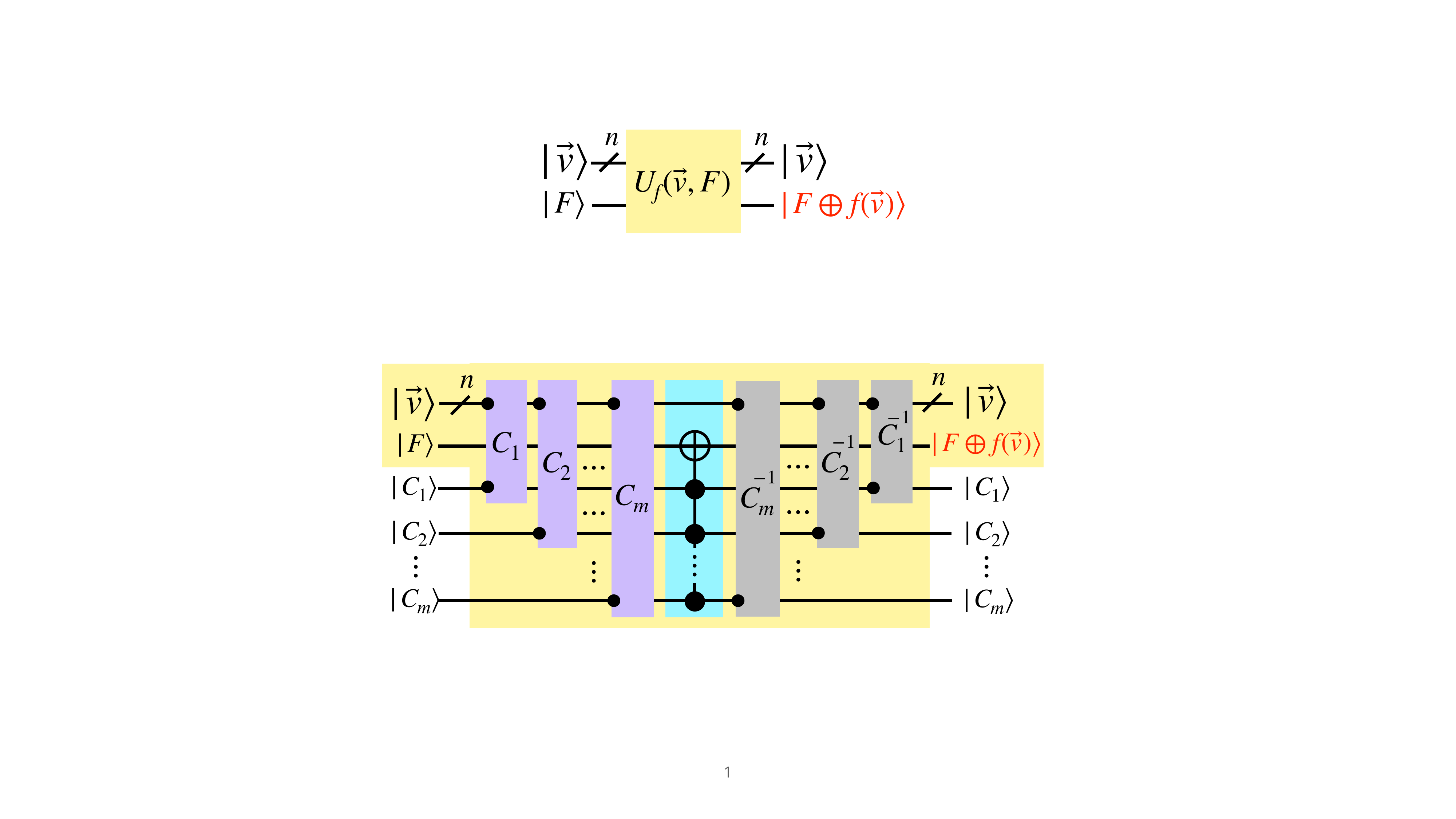}
\vspace{5mm}
\caption{Specification} \label{subfig:Uf_Spec}
\end{subfigure}
~
\hspace{7mm}
~
\begin{subfigure}[b]{0.6\linewidth}
\includegraphics[width=0.83\linewidth]{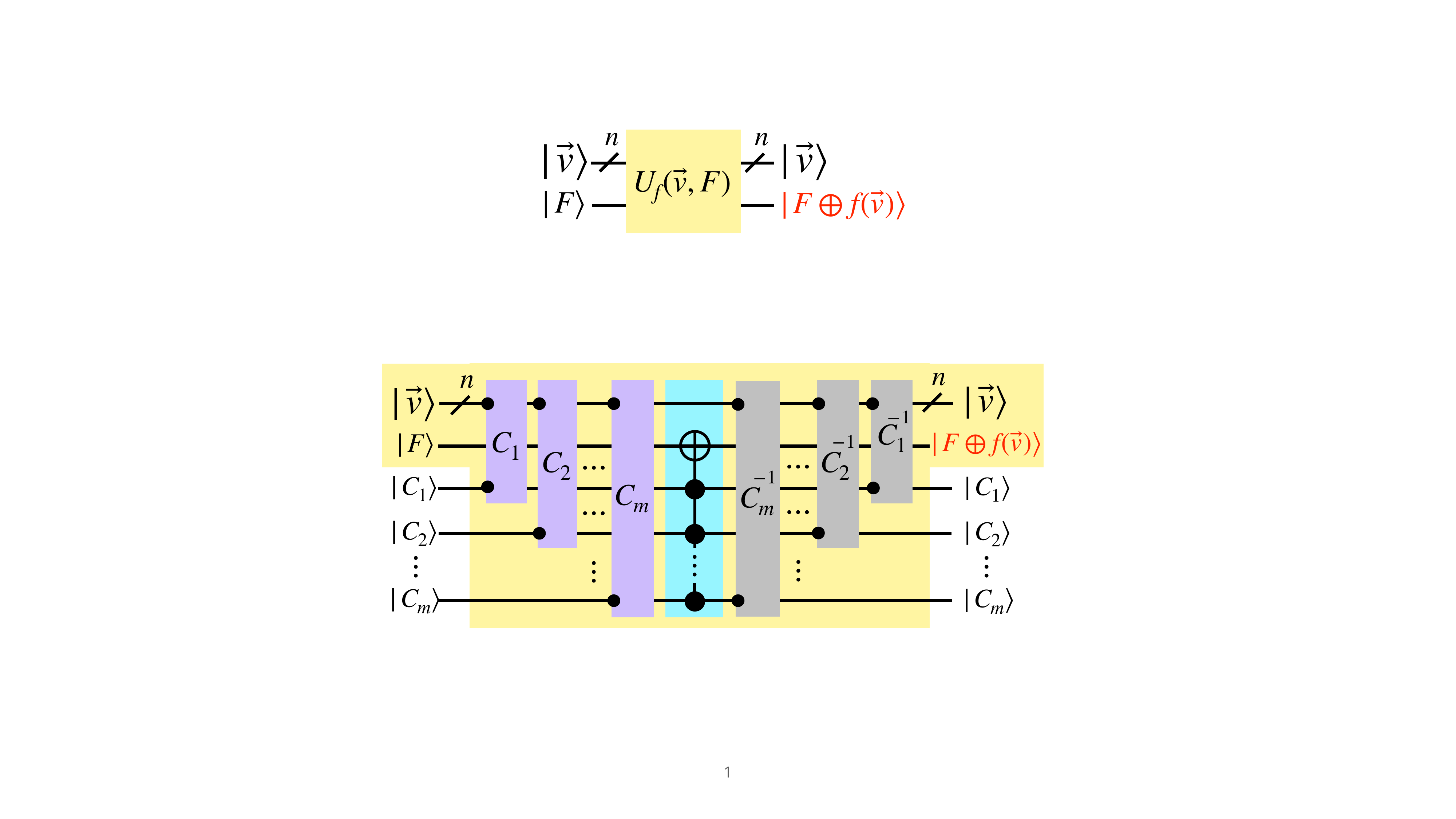}
\caption{Implementation} \label{subfig:Uf_Impl}
\end{subfigure}
\caption{Quantum Circuit $U_f$ for a CNF Formula $f$} \label{fig:UfConstruction}
\end{figure}

\subsection{Solving The Inconsistent Width Problem} \label{subsec:ancillary-qubits}

The $\mathtt{QEC}$ algorithm~\cite{SW22} can be applied to check whether two quantum circuits $U_1$ and $U_2$ are equivalent, only when they have the same circuit width (i.e., they have the same number of qubits). 
In this subsection, we will demonstrate how to solve the inconsistent width problem. First of all, let us figure out the minimal number of qubits required to implement a quantum circuit for a formula $f$ over a set of Boolean variables $V = \{v_1, v_2, \ldots, v_n \}$. Since a quantum circuit has to be reversible, in addition to the $n$ input qubits, we need one more qubit for the output value, denoted by $\ket{F}$. That is, the minimal number of qubits required is $n+1$. The standard mathematical construction is shown in Fig.~\ref{subfig:Uf_Spec}, where $\vec{v}$ denotes the vector $(v_1, v_2, \ldots, v_n)$. However, this mathematical construction is ideal in theory but infeasible in practice, where ancillary qubits are usually required. For example, the qubits $\ket{C_1}, \ldots, \ket{C_m}$ in Fig.~\ref{subfig:ClassicConstruction}{(ii)} are ancillaries and indicate which clauses are satisfied.
For ease of illustration, let us use $U_f(\vec{v}, F)$ to denote the mathematical construction, while using $U_f(\vec{v}, F, \vec{a})$ to denote the practical construction with ancillary qubits $\vec{a}$.
Before we move on, let us recall Lemma~\ref{lm:InvariantDirectSumDecomposition}, the theorem of \emph{Invariant Direct Sum Decomposition}, in linear algebra:



\begin{restatable}[Invariant Direct Sum Decomposition~\cite{Hefferon20}]{lm}{InvariantDirectSumDecomposition} \label{lm:InvariantDirectSumDecomposition} ~\\
Let $\mathcal{V} = W_1 \oplus W_2 \oplus \cdots \oplus W_k$ be a vector space that is the direct sum of its subspaces. Given a linear transformation $T: \mathcal{V} \mapsto \mathcal{V}$, if each subspace $W_i$ is a $T$-invariant subspace of $\mathcal{V}$, i.e., $\,T(W_i) = W_i$ for all $i \in \{ 1, 2, \ldots, k \}$, then there exists an ordered basis $\beta = \{ \beta_1, \beta_2, \ldots, \beta_k \}$ for $\mathcal{V}$, where $\beta_i$ is the basis of subspace $W_i$, such that $T$ can be represented w.r.t. $\beta$ as the diagonal matrix $[T]_\beta$, as shown in Fig.~\ref{subfig:TBeta}, where $M_i = [T_{W_i}]_{\beta_i}$. 
\end{restatable}
Let $\vec{a}$ denote the vector of $(a_1, a_2, \ldots, a_l)$ consisting of $l$ ancillary qubits. It is obvious that $\mathcal{V}_{\ket{\vec{v}} \ket{F} \ket{\vec{a}}} = \bigoplus W_{(a_1, a_2, \ldots, a_l)}$. That is, each configuration of $(a_1, a_2, \ldots, a_l)$ defines one subspace, and $\mathcal{V}_{\ket{\vec{v}} \ket{F} \ket{\vec{a}}}$ is the direct sum of all the $2^l$ subspaces. Lemma~\ref{lm:InvariantDirectSumDecomposition} suggests that, if we can make each subspace $W_{(a_1, a_2, \ldots, a_l)}$ to be a $U_f(\vec{v},F,\vec{a})$-invariant subspace of $\mathcal{V}_{\ket{\vec{v}} \ket{F} \ket{\vec{a}}}$, then $U_f(\vec{v},F,\vec{a})$ will be equivalent to $U_f(\vec{v},F)$ in one configuration of $(a_1, a_2, \ldots, a_l)$. The way to achieve this goal is to restore the ancillary qubits to their initial states. 

Fig.~\ref{subfig:Uf_Impl} shows the overall scheme of our quantum circuit construction that restores all the ancillary qubits for clauses back to their initial states. The process to restore each ancillary qubit $\ket{C_i}$ is done by performing the inverse operation $C_i^{-1}$, as shown in Fig.~\ref{subfig:ClassicConstruction}{(iii)}. 

Let us take the ``simple'' and ``good'' circuits in Figs.~\ref{subfig:Ex_Simple} and~\ref{subfig:Ex_Good} for illustration. The ``good'' circuit follows our construction in Fig.~\ref{subfig:Uf_Impl}, and Fig.~\ref{subfig:T_Inv_Ex_Good} shows the component-by-component correspondence. It can be easily checked that $\mathcal{V}_{\ket{x} \ket{F} \ket{C}} = W_{\ket{0}} \oplus W_{\ket{1}}$, where $W_{\ket{0}}$ is spanned by the basis $\{ \ket{000}, \ket{010}, \ket{100}, \ket{110} \}$, while $W_{\ket{1}}$ by $\{ \ket{001}, \ket{011}, \ket{101}, \ket{111} \}$. Basically, $W_{\ket{0}}$ is the subspace where the ancillary qubit $\ket{C}$ is $\ket{0}$, while $W_{\ket{1}}$ for $\ket{1}$. In our circuit construction, the initial state of $\ket{C}$ is $\ket{0}$, and once the result of the formula is determined by the blue component, $\ket{C}$ is restored back to its initial state by performing the gray component. In such a setting, $W_{\ket{0}}$ and $W_{\ket{1}}$ are $U_f(x,F,C)$-invariant subspaces of $\mathcal{V}_{\ket{x} \ket{F} \ket{C}}$. By Lemma~\ref{lm:InvariantDirectSumDecomposition}, $U_f(x,F,0)$, the yellow sub-matrix in Fig.~\ref{subfig:T_Inv_Ex_Good}, is equivalent to the matrix of $U_f(x,F)$ in Fig.~\ref{subfig:T_Inv_Ex_Simple}. Thus, although $U_f(x,F,0)$ and $U_f(x,F)$ have different widths, $\mathtt{QEC}$ is still applicable to them, but focusing on the input/output qubits only, i.e., $\ket{x}$ and $\ket{F}$.


\begin{figure}[tb]
\centering
\begin{subfigure}[b]{0.17\linewidth}
\includegraphics[width=0.73\linewidth]{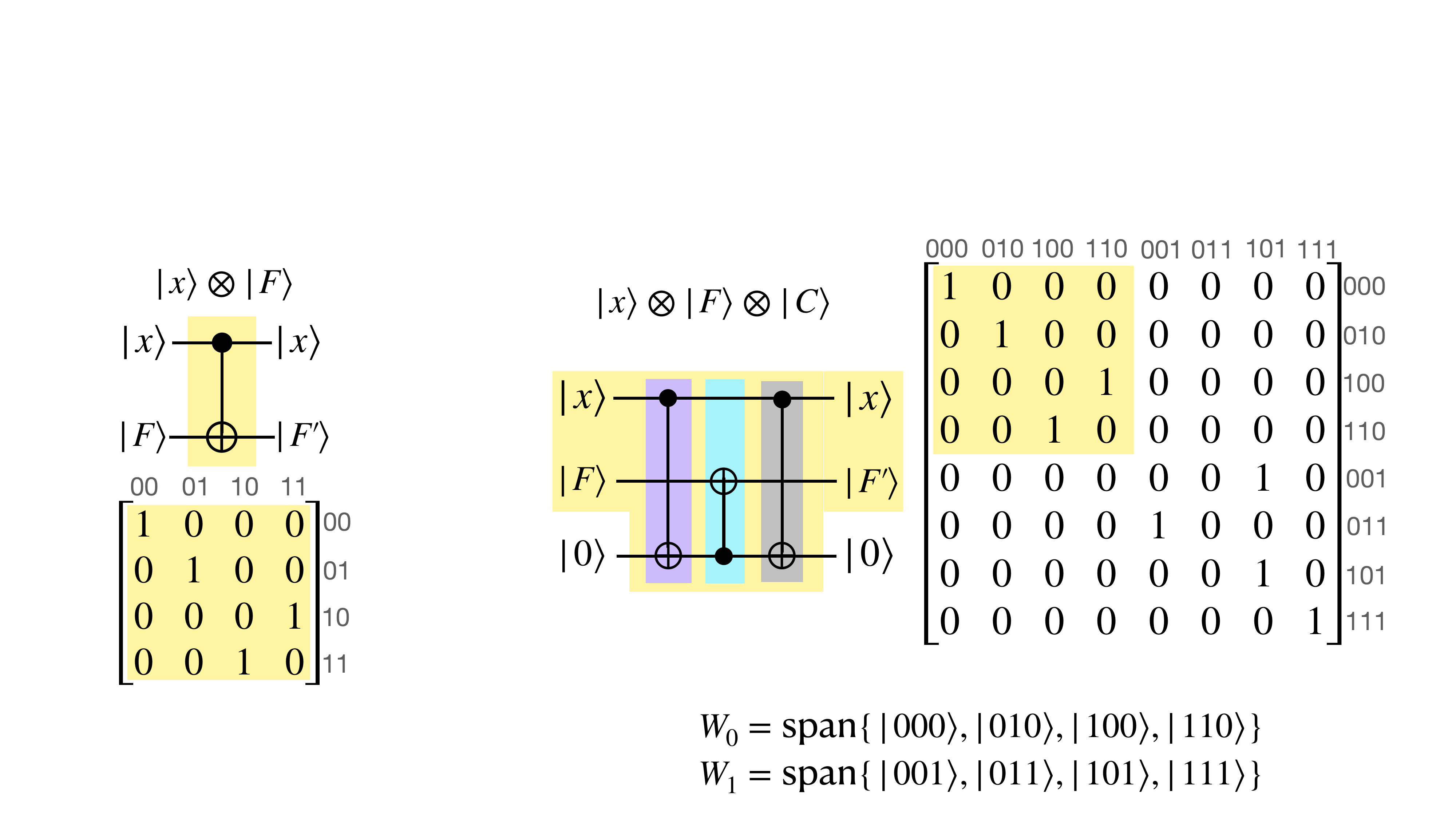}
\caption{Simple} \label{subfig:T_Inv_Ex_Simple}
\end{subfigure}
~
\begin{subfigure}[b]{0.49\linewidth}
\includegraphics[width=0.91\linewidth]{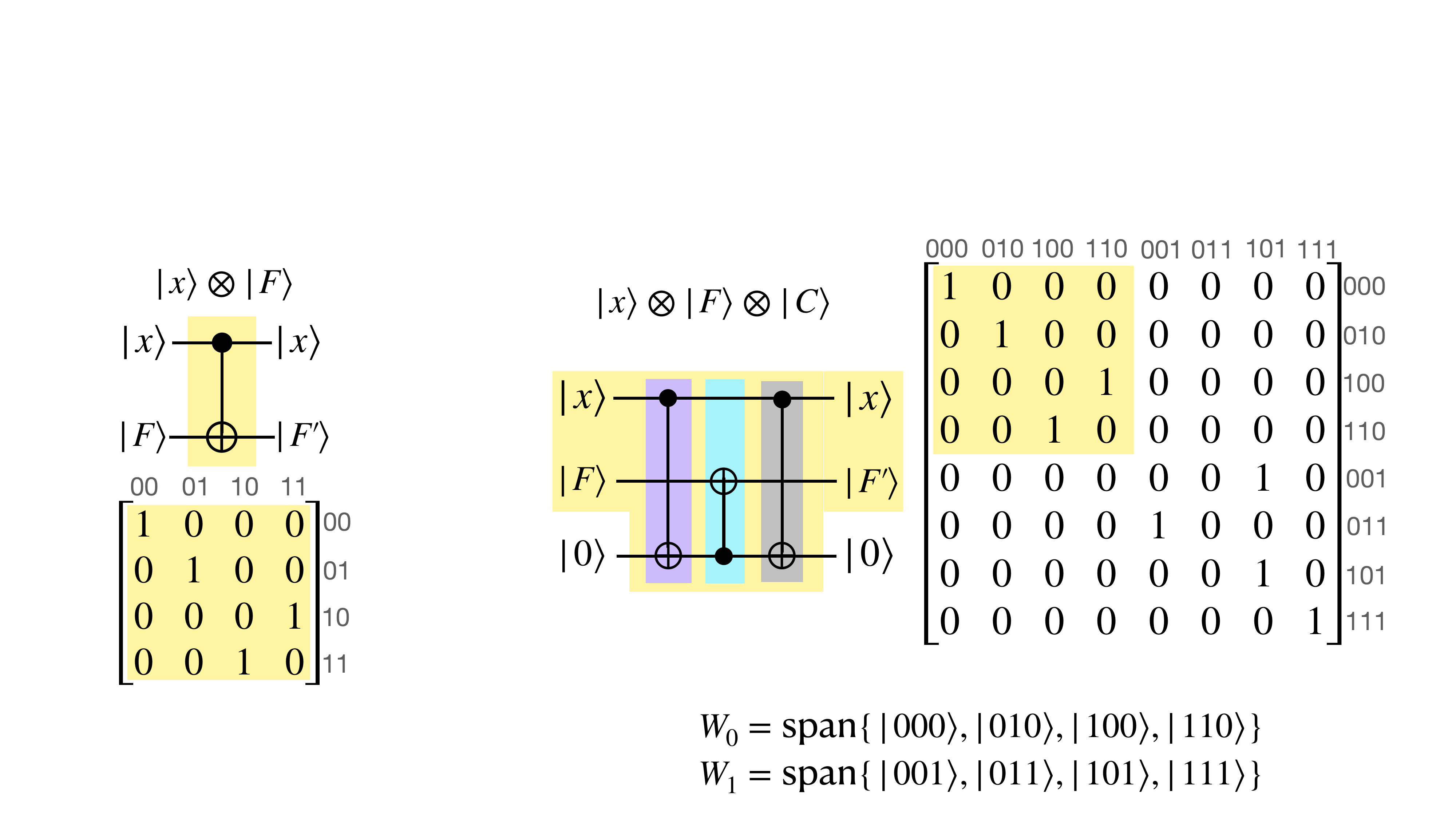}
\caption{Good} \label{subfig:T_Inv_Ex_Good}
\end{subfigure}
~
\begin{subfigure}[b]{0.29\linewidth}
\scalebox{0.83}{
$
\left[
\begin{array}{c|c|c|c}
 \;M_1\; & 0 & \cdots & 0 \\ \hline
 0 & \;M_2\; & \cdots & 0 \\ \hline
 \vdots & \vdots & \;\cdots\; & \vdots \\ \hline
 0 & 0 & \cdots & \;M_k\;
\end{array}
\right]
$
}
\vspace{5mm}
\caption{$[T]_\beta$} \label{subfig:TBeta}
\end{subfigure}
\caption{Invariant Direct Sum Decomposition} \label{fig:TInv_Subspace_Decomp_Example}
\end{figure}







    
    
    
    

    
    
            


\subsection{Collapsing The Upper and Lower Bounds of $D(U_1, U_2)$} \label{subsec:TightBound}

Lemma~\ref{lm:DRange} gives the upper and lower bounds of $D(U_1,U_2)$. Now, we present our finding, Theorem~\ref{thm:TightBound}, which collapses the upper bound and lower bound of $D(U_1,U_2)$ to a single value such that when applying the $\mathtt{QEC}$ algorithm on $U_1$ and $U_2$, the circuit extension process, as shown in Fig.~\ref{subfig:CircuitExtension}, is no longer required. 

\begin{restatable}[]{thm}{TightBound} \label{thm:TightBound}
Suppose $U_1$ and $U_2$ are quantum operators with $n$ qubits. Let $\mathcal{S}$ be the subspace spanned by the basis $\{ \ket{j}\ket{j} \}_d$, where $d = 2^n$ and $j \in \{0, 1, \ldots, d  -1 \}$. One can quickly check that $\ket{\Phi_d} \in \mathcal{S}$. If $\ket{\phi} = (U_1 \otimes U_2) \ket{\Phi_d}$ still belongs to $\mathcal{S}$, i.e., $\ket{\phi} \in \mathcal{S}$, then $D(U_1, U_2) = \sqrt{1 - \widehat{I_{d,m}}(\ket{\phi})}$.
\end{restatable}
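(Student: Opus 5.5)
The plan is to compute $\widehat{I_{d,m}}(\ket{\phi})$ exactly for any state $\ket{\phi}\in\mathcal{S}$, show that it equals $|\langle \Phi_d | \phi\rangle|^2$, and then identify this overlap with $\left|\frac{1}{d}\mathsf{Tr}(U_1^T U_2)\right|^2$. The formula for $D(U_1,U_2)$ then follows directly from its definition. This mirrors the role the extension of Fig.~\ref{subfig:CircuitExtension} plays in Lemma~\ref{lm:IDM_Prime}. There, the extension is what forces the measured state into a form where the Bell expression collapses to an overlap with $\ket{\Phi_d}$. Here, the hypothesis $\ket{\phi}\in\mathcal{S}$ is meant to provide that structure directly.

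\textbf{Step 1 (action of the Bell operators in the computational basis).} I will expand $A_i^l=\sum_a \omega^{al}\ket{a}_{i\,i}\!\bra{a}$ using the eigenvectors of Eq.~\ref{eq:Eigenvectors}. Summing the geometric series over $a$ should show that $A_i^l$ is a generalized cyclic shift $\ket{k}\mapsto \theta_{i,l,k}\ket{k+l \bmod d}$ with a unimodular phase $\theta_{i,l,k}$. The phase picks up a twist factor of the form $\omega^{\pm l\alpha_i}$, and an extra factor when the index wraps around $d$. Since $\overline{B}_i^l=(A_i^l)^*$ is the entrywise complex conjugate, it is the same shift with phase $\overline{\theta_{i,l,k}}$. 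Consequently
\[
(A_i^l\otimes \overline{B}_i^l)\ket{j}\ket{j} = |\theta_{i,l,j}|^2\,\ket{j+l}\ket{j+l} = \ket{j+l}\ket{j+l},
\]
so the Bell operator maps $\mathcal{S}$ into itself, and on $\mathcal{S}$ every term acts as a pure cyclic permutation with no phases. I expect this step to be the main obstacle. The care needed is in the wrap-around phase coming from the non-integer $\alpha_i$. The key point is that it depends only on $(i,l,k)$ and therefore cancels against its conjugate on the diagonal basis $\ket{j}\ket{j}$. I would first verify this by writing $\bra{k'}A_i^l\ket{k}$ explicitly and checking that it vanishes unless $k'\equiv k+l$.

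\textbf{Step 2 (evaluating the expression).} I will write $\ket{\phi}=\sum_j c_j\ket{jj}$ with $\sum_j|c_j|^2=1$. By Step 1, $\bra{\phi}(A_i^l\otimes\overline{B}_i^l)\ket{\phi}=\sum_j \overline{c_{j+l}}\,c_j$. Summing over $l=1,\dots,d-1$ and adding back the missing $l=0$ term gives
\[
\sum_{l=1}^{d-1}\sum_j \overline{c_{j+l}}c_j=\Big|\sum_j c_j\Big|^2-1 .
\]
Summing over $i$ then yields $I_{d,m}(\ket{\phi})=m\big(|\sum_j c_j|^2-1\big)$. By Lemma~\ref{lm:NormBellExp}, this means $\widehat{I_{d,m}}(\ket{\phi})=\frac{1}{d}|\sum_j c_j|^2=|\langle\Phi_d|\phi\rangle|^2$.

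\textbf{Step 3 (identifying the overlap).} By the standard transpose trick $(U_1\otimes I)\ket{\Phi_d}=(I\otimes U_1^T)\ket{\Phi_d}$, we get
\[
\langle\Phi_d|(U_1\otimes U_2)|\Phi_d\rangle=\tfrac1d\sum_{j,k}(U_1)_{kj}(U_2)_{kj}=\tfrac1d\mathsf{Tr}(U_1^TU_2).
\]
Hence $\widehat{I_{d,m}}(\ket{\phi})=\left|\frac1d\mathsf{Tr}(U_1^TU_2)\right|^2$, and substituting into the definition of $D(U_1,U_2)$ gives $D(U_1,U_2)=\sqrt{1-\widehat{I_{d,m}}(\ket{\phi})}$. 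As a consistency check, this value lies inside the range of Lemma~\ref{lm:DRange}; in fact it coincides with the lower bound there.
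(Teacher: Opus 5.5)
Your proof is correct, and it takes a more direct route than the paper's. The paper writes $\ket{\phi}=c_{\Phi_d}\ket{\Phi_d}+c_\bot\ket{\Phi_d^\bot}$ with $\ket{\Phi_d^\bot}\in\mathcal{S}$. It then evaluates the Bell expression on the two pieces separately: $m(d-1)$ on $\ket{\Phi_d}$ (Lemmas~\ref{lm:MaxViolation} and~\ref{lm:CircuitEquiv}), and $-m$ on $\ket{\Phi_d^\bot}$ via the auxiliary Lemma~\ref{lm:IDM-M}, which is proved by passing to the measurement eigenbasis $\ket{a}_x\ket{b}_x^*$. Finally it obtains $c_{\Phi_d}=\frac{1}{d}\mathsf{Tr}(U_2U_1^T)$ from a spectral decomposition of $U_2U_1^T$.

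You instead compute the matrix elements of $A_i^l$ directly in the computational basis. The computation gives $\bra{k'}A_i^l\ket{k}=\frac{1}{d}\omega^{-(k'-k)\alpha_i}\sum_a\omega^{a(l+k'-k)}$, so the shift is actually $\ket{k}\mapsto\ket{k-l}$; this is immaterial once you sum over $l$. Since $\overline{B}_i^l$ is the entrywise conjugate of this monomial matrix, the wrap-around phases cancel automatically. You conclude that on $\mathcal{S}$ the Bell operator $\mathcal{M}=\sum_{i}\sum_{l=1}^{d-1}A_i^l\otimes\overline{B}_i^l$ equals $m\bigl(d\ket{\Phi_d}\bra{\Phi_d}-I\bigr)$.

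This single identity contains both ingredients of the paper's proof as special cases. It also disposes of the cross terms $\bra{\Phi_d}\mathcal{M}\ket{\Phi_d^\bot}$, which the paper drops without comment. (They do vanish, because $\ket{\Phi_d}$ is an eigenvector of the Hermitian operator $\mathcal{M}$.) Your Step~3 likewise replaces the eigendecomposition with a one-line matrix-element calculation.

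The paper's route is more modular: it reuses known Bell-inequality results and isolates Lemma~\ref{lm:IDM-M} as a standalone fact. Yours is self-contained and makes the structural reason for the theorem explicit: restricted to $\mathcal{S}$, the Bell operator is an affine function of the projector onto $\ket{\Phi_d}$, so $\widehat{I_{d,m}}$ is exactly the fidelity $|\langle\Phi_d|\phi\rangle|^2$.
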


Intuitively, Theorem~\ref{thm:TightBound} says that if $U_1 \otimes U_2$ maps $\ket{\Phi_d}$ to a state $\ket{\phi}$ such that $\ket{\phi}$ always remains in the subspace $\mathcal{S}$, then $D(U_1, U_2)$ would be exactly $\sqrt{1 - \widehat{I_{d,m}}(\ket{\phi})}$. Now, we show how to construct quantum circuits $U_1$ and $U_2$ to satisfy this requirement so as to enjoy this nice property. Given a Boolean formula $f$ in CNF form consisting of $m$ clauses, the way to construct each clause $C_i$ for $i \in \{1,2,\ldots,m\}$ is still the same as shown in Fig.~\ref{subfig:ClassicConstruction}{(i)}, which implies that its inverse circuit $C_i^{-1}$ also remains the same as shown in Fig.~\ref{subfig:ClassicConstruction}{(iii)}. However, when we do the conjunction of all the clauses, we change to use the new circuit, as shown in Fig.~\ref{subfig:UfDotDot_Conj}. Basically, this operation applies the $Z$ gate on the result qubit $F$, when all the clauses are satisfied.
Putting it all together, Fig.~\ref{subfig:UfDotDot} shows the overall scheme of our improved quantum construction. From now on, we use $\Ddot{U}_f$ to denote the quantum circuit constructed based on this improved scheme for a given formula $f$. Theorem~\ref{thm:UfDotDot-1} proves that $\Ddot{U}_f (\ket{\vec{v}}\ket{F})$ does not change the state of $\ket{\vec{v}} \ket{F}$ but adds a `$-1$' phase to it, when $f(\vec{v}) = 1$ and $F$ is $1$.

\begin{restatable}[]{thm}{UfDotDot} \label{thm:UfDotDot-1}
$\Ddot{U}_f (\ket{\vec{v}}\ket{F}) = (-1)^{f(\vec{v}) \wedge F} \cdot \ket{\vec{v}}\ket{F}$
\end{restatable}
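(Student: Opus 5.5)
We prove Theorem~\ref{thm:UfDotDot-1} by following a computational basis state through the three stages of $\Ddot{U}_f$ in Fig.~\ref{subfig:UfDotDot}, with the clause ancillas $\ket{C_1}\cdots\ket{C_m}$ all starting in $\ket{0}$. The stages are the clause circuits $C_1,\ldots,C_m$ of Fig.~\ref{subfig:ClassicConstruction}{(i)}, the new conjunction gate of Fig.~\ref{subfig:UfDotDot_Conj}, and the inverse circuits $C_m^{-1},\ldots,C_1^{-1}$ of Fig.~\ref{subfig:ClassicConstruction}{(iii)}. Because every stage acts classically (as a permutation) on basis states, possibly with a phase, it is enough to treat a basis input $\ket{\vec{v}}\ket{F}\ket{0\cdots0}$. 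The statement then holds for general inputs by linearity, and in the sense of Lemma~\ref{lm:InvariantDirectSumDecomposition} it describes the block where the ancillas are $\ket{\vec 0}$.

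\textbf{Step 1 (clause stage).} We show that for each $i$ the circuit $C_i$ maps $\ket{\vec v}\ket{0}_{C_i}$ to $\ket{\vec v}\ket{C_i(\vec v)}_{C_i}$, where $C_i(\vec v)$ is the truth value of the clause, and leaves $\ket{\vec v}$ unchanged. The circuit uses the $M_j$ gates, the multi-controlled Toffoli, and a final $X$. By De Morgan, the Toffoli fires exactly when every literal is false, and the final flip turns the result into the disjunction of the literals. The $M_j$ gates are undone inside the clause circuit itself, so $\ket{\vec v}$ is restored. Since different clauses write to different ancillas and use the variable qubits only as controls, after all $m$ clause circuits the state is $\ket{\vec v}\ket{F}\ket{C_1(\vec v)\cdots C_m(\vec v)}$.

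\textbf{Step 2 (phase stage).} The conjunction gate applies $Z$ to $\ket{F}$ controlled on all the $\ket{C_i}$ being $1$. On a basis state, a $Z$ gate controlled by the $C_i$ multiplies the state by $-1$ exactly when all controls are $1$ and the target is $1$. All controls are $1$ exactly when $\bigwedge_i C_i(\vec v)=f(\vec v)=1$, so the phase is $(-1)^{f(\vec v)\wedge F}$ and no qubit value changes.

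\textbf{Step 3 (uncomputation).} The ancillas still hold $C_1(\vec v),\ldots,C_m(\vec v)$ and the variable qubits still hold $\vec v$. Applying $C_m^{-1}\cdots C_1^{-1}$, the inverse of the unitary from Step~1, therefore returns the ancillas to $\ket{0\cdots0}$. The scalar phase from Step~2 factors out of this stage, and we obtain $(-1)^{f(\vec v)\wedge F}\ket{\vec v}\ket{F}\ket{\vec 0}$. Restricting to the ancilla-$\vec 0$ block gives the claim.

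The main obstacle is Step~1: checking from the figure that the clause gadget really computes the OR and restores the variable qubits. This matters because the uncomputation in Step~3 only works if the intermediate state depends on $\vec v$ alone, is the same on both sides of the phase gate, and does not change $\ket{F}$.
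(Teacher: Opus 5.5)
Your proof is correct and takes the same route as the paper's: the $C_i$ and $C_i^{-1}$ stages leave $\ket{\vec v}$ unchanged, and the multi-controlled $Z$ on $\ket{F}$ fires exactly when all clauses are satisfied, which gives the phase $(-1)^{f(\vec v)\wedge F}$. You also track the ancillas explicitly through computation and uncomputation and restrict to the ancilla-$\ket{\vec 0}$ block; this makes rigorous a step the paper's short proof leaves implicit.
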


The main advantage of our improved quantum circuit construction is that,
given two formulas $f_1$ and $f_2$ over the same set of Boolean variables $V$, we can construct their corresponding $\Ddot{U}_{f_1}$ and $\Ddot{U}_{f_2}$, respectively. By Theorem~\ref{thm:UfDotDot-1}, they will not change their input states but only add a `$-1$' phase. Thus, when we apply $\Ddot{U}_{f_1} \otimes \Ddot{U}_{f_2}$ on $\ket{\Phi_d}$, the result state $\ket{\phi}$ will still remain in $\mathcal{S}$. Further by Theorem~\ref{thm:TightBound}, we do not need to extend $\Ddot{U}_{f_1}$ and $\Ddot{U}_{f_2}$, when we apply $\mathtt{QEC}$ on them.

\begin{figure}[tb]
\centering
\begin{subfigure}[b]{0.23\linewidth}
\includegraphics[width=0.49\linewidth]{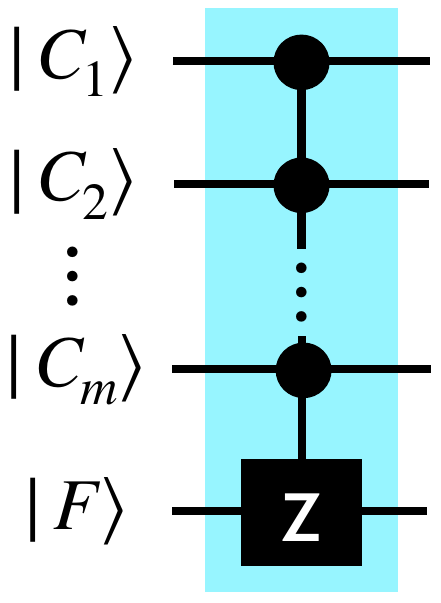}
\caption{Conjunction for $F$} \label{subfig:UfDotDot_Conj}
\end{subfigure}
~
\hspace{3mm}
~
\begin{subfigure}[b]{0.7\linewidth}
\includegraphics[width=0.7\linewidth]{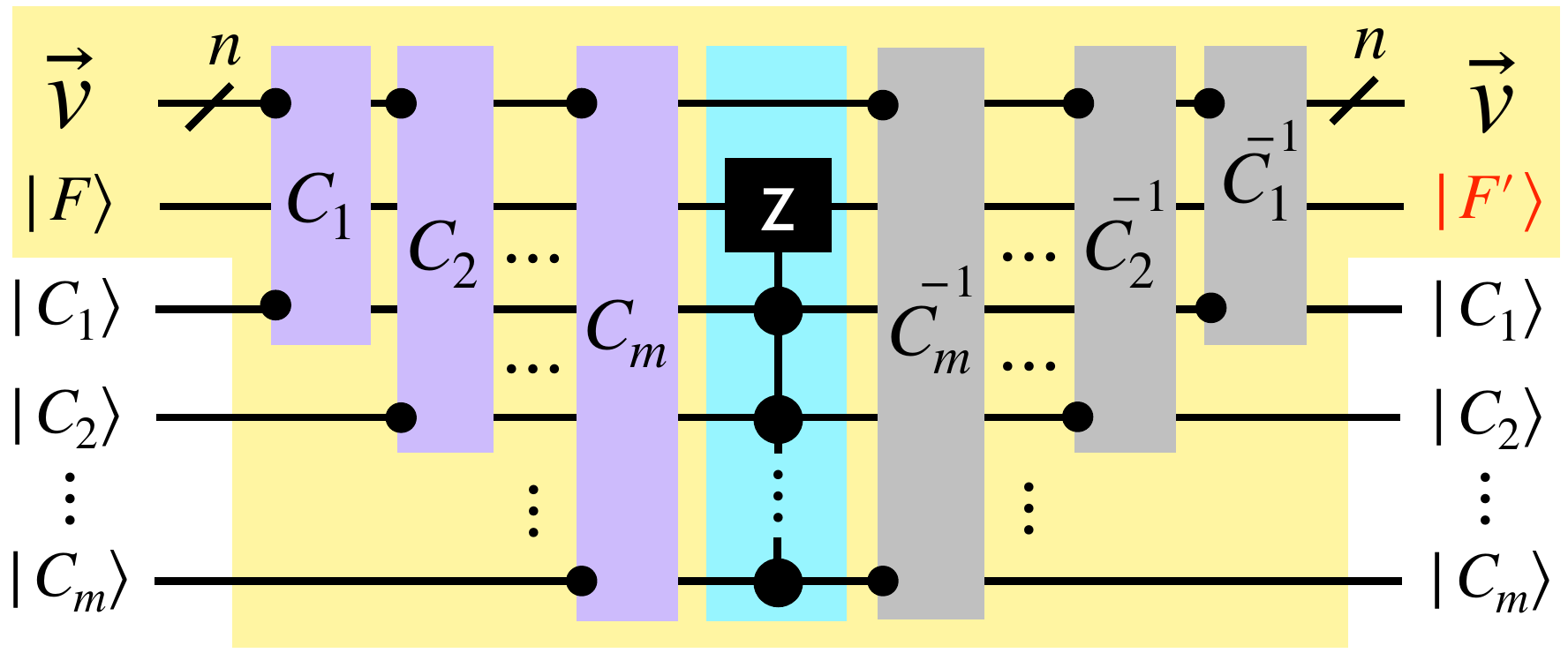}
\caption{Overall Construction}  \label{subfig:UfDotDot}
\end{subfigure}
\caption{Improved quantum circuit $\Ddot{U}_f$ for a CNF formula $f$} \label{fig:UfDotDot}
\end{figure}

\subsection{Quantum SAT Solving} \label{subsec:QuantumSATSolving}

Now, we present how quantum SAT solving can be done based on 
the above
building blocks. Given a formula $f$, our idea is to compare $f$ with 
falsity
$f_\bot$. If $f$ is equivalent to $f_\bot$, then $f$ is unsatisfiable; otherwise, $f$ is satisfiable. This comparison can be done by performing the $\mathtt{QEC}$ algorithm on $\Ddot{U}_f$ and $\Ddot{U}_{f_\bot}$. 
Theorem~\ref{thm:FalseIdentity} 
shows
that $\Ddot{U}_{f_\bot}$ is 
simply
the identity matrix.

\begin{restatable}[]{thm}{FalseIdentity} \label{thm:FalseIdentity}
Let $f$ be a Boolean formula consisting of $n$ variables. If formula $f$ is unsatisfiable, then $\Ddot{U}_f$ is the $2^{n+1}$ by $2^{n+1}$ identity matrix.
\end{restatable}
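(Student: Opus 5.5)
The plan is to obtain the statement as a direct corollary of Theorem~\ref{thm:UfDotDot-1}, by evaluating $\Ddot{U}_f$ on the computational basis of the $(n+1)$-qubit space spanned by $\ket{\vec{v}}\ket{F}$. Here the ancillary clause qubits are understood to be restored to their initial states, as arranged in Section~\ref{subsec:ancillary-qubits}. A linear operator is the identity exactly when it fixes every basis vector, so it suffices to show that $\Ddot{U}_f\ket{\vec{v}}\ket{F} = \ket{\vec{v}}\ket{F}$ for all $\vec{v}\in\{0,1\}^n$ and $F\in\{0,1\}$.

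The key step is as follows. Because $f$ is unsatisfiable, $f(\vec{v}) = 0$ for every $\vec{v}\in\{0,1\}^n$. Hence $f(\vec{v})\wedge F = 0$ for every choice of $F$. Theorem~\ref{thm:UfDotDot-1} then gives
\[
\Ddot{U}_f(\ket{\vec{v}}\ket{F}) = (-1)^{0}\ket{\vec{v}}\ket{F} = \ket{\vec{v}}\ket{F}.
\]
So every one of the $2^{n+1}$ computational basis states is fixed. The matrix of $\Ddot{U}_f$ in this basis therefore has column $\ket{\vec{v}F}$ equal to $\ket{\vec{v}F}$, which means $\Ddot{U}_f$ is the $2^{n+1}\times 2^{n+1}$ identity. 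Linearity extends the conclusion to arbitrary superpositions.

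The only real subtlety is justifying that $\Ddot{U}_f$ may be regarded as an operator on the $(n+1)$ input/output qubits alone, even though the physical circuit carries the ancillas $\ket{C_1},\ldots,\ket{C_m}$. To handle this, I will use the restoration argument based on Lemma~\ref{lm:InvariantDirectSumDecomposition}. Each $C_i^{-1}$ uncomputes its ancilla, so the subspace with ancillas in their initial state $\ket{0\cdots0}$ is invariant under $\Ddot{U}_f$, and Theorem~\ref{thm:UfDotDot-1} is stated precisely for the restricted block. The claim concerns that block, so no further work is needed there. Aside from this, the proof is immediate.
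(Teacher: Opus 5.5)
Your proposal is correct and matches the paper's proof: since $f(\vec{v})=0$ for every $\vec{v}$, Theorem~\ref{thm:UfDotDot-1} fixes each of the $2^{n+1}$ basis states $\ket{\vec{v}}\ket{F}$, so $\Ddot{U}_f$ is the identity. Your extra remark, that the statement concerns the block with the clause ancillas in their initial state, is a sensible clarification that the paper skips by simply asserting that $\Ddot{U}_f$ has $n+1$ qubits; it matters because on other ancilla configurations the full circuit need not act as the identity.
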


Therefore,
when comparing $\Ddot{U}_f$ and $\Ddot{U}_{f_\bot}$, we do not even have to explicitly construct $\Ddot{U}_{f_\bot}$. Furthermore, with our improved circuit construction scheme, the trace of $\Ddot{U}_f$ would be $2^{n+1} - 2K$, where $K$ is the number of solutions to formula $f$, which is proved in Lemma~\ref{lm:FormulaTrace}.

\begin{restatable}[]{lm}{FormulaTrace} \label{lm:FormulaTrace}
Given a Boolean formula $f$ with $n$ variables, if formula $f$ has $K$ solutions, then $\mathsf{Tr}(\Ddot{U}_f) = 2^{n+1} - 2K$.
\end{restatable}

With this nice property, we can further figure out the relation between $\widehat{I_{d,m}}$ and the solution ratio $r$ of formula $f$, where $r = \frac{K}{2^n}$ and $K$ is the number of solutions to $f$. Theorem~\ref{thm:IDMvsRatio} proves that $\widehat{I_{d,m}} = (1 - r)^2$.

\begin{restatable}[]{thm}{IDMvsRatio} \label{thm:IDMvsRatio}
$\widehat{I_{d,m}} = (1 - r)^2$, where $r = \frac{K}{2^n}$.
\end{restatable}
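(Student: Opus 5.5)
The plan is to reduce the statement to a trace computation, using Theorem~\ref{thm:TightBound} to turn $\widehat{I_{d,m}}$ into a matrix distance. The setting is the one of Section~\ref{subsec:QuantumSATSolving}: we run $\mathtt{QEC}$ on $U_1 = \Ddot{U}_f$ and $U_2 = \Ddot{U}_{f_\bot}$, restricted to the $n+1$ input/output qubits $\ket{\vec{v}}\ket{F}$. This restriction is legitimate by the invariant-subspace argument of Section~\ref{subsec:ancillary-qubits}, since all ancillas are restored. We therefore apply Theorem~\ref{thm:TightBound} with $n+1$ in place of $n$, so $d = 2^{n+1}$, and the state is $\ket{\phi} = (\Ddot{U}_f \otimes \Ddot{U}_{f_\bot})\ket{\Phi_d}$.

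\textbf{Step 1: the hypothesis of Theorem~\ref{thm:TightBound} holds.} By Theorem~\ref{thm:FalseIdentity}, $\Ddot{U}_{f_\bot} = I$. By Theorem~\ref{thm:UfDotDot-1}, $\Ddot{U}_f$ is diagonal in the computational basis with entries $(-1)^{f(\vec{v})\wedge F} \in \{\pm 1\}$. Hence each basis vector of $\mathcal{S}$ is only rescaled:
\[
(\Ddot{U}_f \otimes I)\ket{j}\ket{j} = \pm \ket{j}\ket{j}.
\]
It follows that $\ket{\phi} = \frac{1}{\sqrt d}\sum_j \pm\ket{jj} \in \mathcal{S}$.

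\textbf{Step 2: express $\widehat{I_{d,m}}$ through the trace.} Theorem~\ref{thm:TightBound} gives $D(U_1,U_2)=\sqrt{1-\widehat{I_{d,m}}(\ket{\phi})}$. Squaring and using the definition of $D$ yields
\[
\widehat{I_{d,m}} = 1 - D^2 = \left|\tfrac{1}{d}\mathsf{Tr}(U_1^T U_2)\right|^2 = \left|\tfrac{1}{d}\mathsf{Tr}(\Ddot{U}_f^T)\right|^2 = \left|\tfrac{1}{d}\mathsf{Tr}(\Ddot{U}_f)\right|^2 .
\]
The last equality holds because a transpose does not change diagonal entries; here $\Ddot{U}_f$ is in fact diagonal.

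\textbf{Step 3: evaluate the trace.} Lemma~\ref{lm:FormulaTrace} gives $\mathsf{Tr}(\Ddot{U}_f) = 2^{n+1}-2K$. Therefore
\[
\widehat{I_{d,m}} = \left(\frac{2^{n+1}-2K}{2^{n+1}}\right)^2 = \left(1-\frac{K}{2^n}\right)^2 = (1-r)^2 .
\]

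The main obstacle I expect is bookkeeping rather than computation. First, the dimension must be fixed consistently as $d=2^{n+1}$, not $2^n$, so that it matches the $2^{n+1}$ in the trace formula. Second, one must justify treating $\Ddot{U}_f$ as an operator on $n+1$ qubits when Theorem~\ref{thm:TightBound} is invoked. For this I would rely on Lemma~\ref{lm:InvariantDirectSumDecomposition}: with the ancillas initialized to $\ket{0}$ and restored, the relevant block of $\Ddot{U}_f$ is exactly the diagonal $\pm1$ matrix described in Theorem~\ref{thm:UfDotDot-1}.
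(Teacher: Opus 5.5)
Your proof is correct and matches the paper's argument: apply Theorem~\ref{thm:TightBound} with $d = 2^{n+1}$ and $U_2 = \Ddot{U}_{f_\bot} = I$ to get $\widehat{I_{d,m}} = \bigl|\frac{1}{d}\mathsf{Tr}(\Ddot{U}_f)\bigr|^2$, then substitute $\mathsf{Tr}(\Ddot{U}_f) = 2^{n+1}-2K$ from Lemma~\ref{lm:FormulaTrace}. Your explicit check that $\ket{\phi}\in\mathcal{S}$ is the same observation the paper makes in the main text after Theorem~\ref{thm:UfDotDot-1}; you simply place it inside the proof.
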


Theorem~\ref{thm:IDMvsRatio} tells us that if formula $f$ has no solutions, i.e., $r = 0$, then the value of $\widehat{I_{d,m}}$ would be exactly $1$; otherwise, it would be strictly less than $1$. 
Since the value of $\widehat{I_{d,m}}$ depends on $K$, we use the notation $\widehat{I_{d,m,K}}$ to indicate the $\widehat{I_{d,m}}$ value for the formula $f$ with $K$ solutions.
For example, given a formula $f$ over $4$ Boolean variables, Fig.~\ref{subfig:decision_boundary-A} shows the corresponding $\widehat{I_{d,m,K}}$ values, marked as blue dots, for different $K \in \{ 0,1,2, \ldots, 16\}$. Interestingly, Lemma~\ref{lm:IDM_RandomVar} shows that $\widehat{I_{d,m,K}}$ can be considered as the expected value of the random variable $X_{d,m,K}$, where $\alpha_k$, $A_i^1$, $B_i^1$ are defined in Lemma~\ref{lm:NormBellExp}.

\begin{restatable}[]{lm}{IDMRandomVar} \label{lm:IDM_RandomVar}
Let $\ket{\phi}$ be a quantum state and $X_{d,m,K}$ be the random variable over the values of $x_k = 2 \alpha_k$ with probability $\mathbb{P}(x_k) = \frac{1 }{m} \sum_{i=1}^{m} \mathbb{P}\left( \widetilde{(A^1_i)_{\ket{\phi}}} = \widetilde{(B^1_i)_{\ket{\phi}}} + k \right)$ for $k \in \{0,1,\ldots, d-1\}$. We have $\widehat{I_{d,m,K}}(\ket{\phi})  = \mathbb{E}(X_{d,m,K}) = \sum_{k=0}^{d-1} x_k \cdot \mathbb{P}(x_k)$.
\end{restatable}

\begin{figure}[tb]
\centering
    \begin{subfigure}[b]{0.75\textwidth} 
        \scalebox{0.85}{
            \begin{tikzpicture}
                \begin{axis}[
                    width=\textwidth,
                    height=0.6\textwidth,
                    xlabel={Number of solutions ($K$)},
                    ylabel={$\widehat{I_{d,m,K}}$},
                    xmin=-1.5, xmax=16.5,
                    ymin=-0.3, ymax=1.4,
                    xtick={0,2,4,6,8,10,12,14,16},
                    ytick={0, 0.2, 0.4, 0.6, 0.8, 1.0, 1.2},
                    grid=major,
                    legend pos=south west,
                    legend style={font=\scriptsize}
                ]

                \addplot+[
                    color=blue,
                    only marks,
                    mark=*,
                    error bars/.cd,
                    y dir=both,
                    y explicit,
                    error bar style={line width=1pt, solid}
                ] table [x=x, y=y, y error=err, row sep=crcr] { 
                    x   y           err \\
                    0   1.0         0.282758 \\ 
                    1   0.878906    0.324816 \\ 
                    2   0.765625    0.0 \\      
                    3   0.660156    0.0 \\
                    4   0.5625      0.0 \\
                    5   0.472656    0.0 \\
                    6   0.390625    0.0 \\
                    7   0.316406    0.0 \\
                    8   0.25        0.0 \\
                    9   0.191406    0.0 \\
                    10  0.140625    0.0 \\
                    11  0.097656    0.0 \\
                    12  0.0625      0.0 \\
                    13  0.035156    0.0 \\
                    14  0.015625    0.0 \\
                    15  0.003906    0.0 \\
                    16  0.0         0.0 \\
                };
                \addlegendentry{$\widehat{I_{d,m,K}}$};

                \addplot[mark=triangle*, mark size=3pt, red, only marks] coordinates {(12, 0.1925)};
                \addlegendentry{Example: SAT};
                \addplot[mark=triangle*, mark size=3pt, green!60!black, only marks] coordinates {(0, 1.25)};
                \addlegendentry{Example: UNSAT};
                \addplot[mark=triangle*, mark size=3pt, gray, only marks] coordinates {(2, 0.9)};
                \addlegendentry{Example: Uncertain};

                \addplot[color=red, dashed, thick, domain=-2:17] {0.7172};
                \addplot[color=green, dashed, thick, domain=-2:17] {1.2037};
                
                \node[anchor=east, color=red] at (axis cs:16, 0.65) {$\downarrow$ Predict SAT};
                \node[anchor=east, color=green!60!black] at (axis cs:16, 1.25) {Predict UNSAT $\uparrow$};

                \draw[decorate,decoration={brace,amplitude=3pt,mirror},blue] 
                    (axis cs:-0.2, 1.0) -- (axis cs:-0.2, 0.7172) node [midway, left=2pt, font=\normalsize] {$\epsilon_0$};
                \draw[decorate,decoration={brace,amplitude=3pt},blue] 
                    (axis cs:1.2, 1.2037) -- (axis cs:1.2, 0.8789) node [midway, right=2pt, font=\normalsize] {$\epsilon_1$};

                \end{axis}
            \end{tikzpicture}
        }
        \caption{$n=4, m=2, s=16, \delta=0.05$.} \label{subfig:decision_boundary-A}
    \end{subfigure}%
    \hfill
    \begin{subfigure}[b]{0.25\textwidth}
        \scalebox{0.75}{
            \begin{tikzpicture}
                \begin{axis}[
                    width=1.2\textwidth,
                    height=2\textwidth,
                    ymin=-0.6, ymax=1.6,
                    ytick={-0.4, 0, 0.4, 0.8, 1.2, 1.6},
                    xmin=-1, xmax=2,
                    xlabel={Satisfiability ($K$)},
                    xtick={0, 1},
                    grid=major
                ]
                \addplot+[
                    color=blue,
                    only marks,
                    mark=*,
                    mark size=3pt,
                    error bars/.cd,
                    y dir=both,
                    y explicit,
                    error bar style={line width=1pt, solid}
                ] table [x=x, y=y, y error=err, row sep=crcr] {
                    x   y      err \\
                    0   1.0    0.5 \\
                    1   0.0    0.5 \\
                };
                \addplot[color=green!60!black, line width=2pt, dash pattern=on 4pt off 12pt, domain=-1:17] {0.5};
                \addplot[color=red, line width=2pt, dash pattern=on 4pt off 12pt, dash phase=8pt, domain=-1:17] {0.5};
                
                \node[anchor=east, color=red] at (axis cs:0.5, 0.35) {$\downarrow$ SAT};
                \node[anchor=east, color=green!60!black] at (axis cs:2.1, 0.65) {UNSAT $\uparrow$};

                \draw[decorate, decoration={brace, amplitude=3pt, mirror}, blue, line width=1pt] 
                    (axis cs:-0.2, 1.0) -- (axis cs:-0.2, 0.5) node [midway, left=2pt, font=\large] {$\epsilon_0$};
                \draw[decorate, decoration={brace, amplitude=3pt}, blue, line width=1pt] 
                    (axis cs:1.2, 0.5) -- (axis cs:1.2, 0.0) node [midway, right=2pt, font=\large] {$\epsilon_1$};
                \end{axis}
            \end{tikzpicture}
        }
        \caption{$\epsilon_0=\epsilon_1=0.5$}  \label{subfig:decision_boundary-B}
    \end{subfigure}
\caption{Decision boundaries for \ourwork.} \label{fig:decision_boundaries}
\end{figure}

Upon close investigation, we found 
that the standard deviation of $X_{d,m,K}$ has a closed form.
Since our goal is to determine whether a formula $f$ has a solution or not, we are interested in the very two cases where $K = 0$ and $K = 1$. Theorem~\ref{thm:StandardVar4NoSol} gives the standard deviation for the case of $K = 0$, and Theorem~\ref{thm:StandardVar4OneSol} gives the case of $K = 1$.

\begin{restatable}[]{thm}{StandardVarForNoSol} \label{thm:StandardVar4NoSol}
Let $\ket{\phi} = \sum_{q=0}^{d-1} \gamma_q \ket{q}\ket{q}$ be a quantum state. Consider the random variable $X_{d,m,K}$ whose expected value is $\widehat{I_{d,m,K}}(\ket{\phi})$. We have $\sigma(X_{d,m,0}) = \frac{1}{d} \sqrt{\frac{d^2-1}{3}} \tan\left(\frac{\pi}{2m}\right)$.
\end{restatable}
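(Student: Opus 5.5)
The plan is to compute the first two moments of $X_{d,m,0}$ in closed form and use $\sigma^2 = \mathbb{E}(X^2) - \mathbb{E}(X)^2$. By Theorem~\ref{thm:FalseIdentity}, when $K=0$ the circuit $\Ddot{U}_f$ is the identity, and so is $\Ddot{U}_{f_\bot}$. Hence the measured state is exactly $\ket{\Phi_d}$, i.e.\ $\gamma_q = 1/\sqrt{d}$ for all $q$. By Theorem~\ref{thm:IDMvsRatio} (or Lemma~\ref{lm:MaxViolation}) with $r=0$, we already have $\mathbb{E}(X_{d,m,0}) = \widehat{I_{d,m,0}} = 1$. It therefore remains to show $\mathbb{E}(X_{d,m,0}^2) = 1 + \frac{d^2-1}{3d^2}\tan^2\!\left(\frac{\pi}{2m}\right)$.

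\textbf{Step 1: the distribution $\mathbb{P}(x_k)$.} Using the eigenvectors of Eq.~\ref{eq:Eigenvectors}, I compute the joint probability $|\bra{a}_i\otimes\bra{b}_i\,\Phi_d\rangle|^2$. The inner sum over $k$ is a geometric series in $\omega$, which gives a Fejér-kernel form. Summing over the pairs with $a = b+k$ yields
\[
\mathbb{P}(x_k) = \frac{1}{d^2}\,\frac{\sin^2(\pi\delta)}{\sin^2\!\left(\frac{\pi}{d}(k+\delta)\right)}, \qquad \delta = \frac{1}{2m}.
\]
This value is independent of $i$, so the average over $i$ in Lemma~\ref{lm:IDM_RandomVar} is trivial. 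The offset $\delta$ comes from $\alpha_i - \beta_i$ (up to sign and a shift of $k$ modulo $d$). I will fix the sign convention so that the angle $\theta_k = \frac{\pi}{d}(k+\frac{1}{2m})$ matches the one in $x_k = 2\alpha_k = \frac{1}{d}\tan(\frac{\pi}{2m})\cot\theta_k$ from Lemma~\ref{lm:NormBellExp}. Keeping this bookkeeping consistent is where I expect most of the care to go. As a sanity check, the resulting $\mathbb{E}(X)$ must equal $1$.

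\textbf{Step 2: the second moment.} Write $u = d\theta_k - \pi k = \frac{\pi}{2m}$ and $t = \tan u$. Then
\[
\mathbb{E}(X^2) = \frac{t^2\sin^2 u}{d^4}\sum_{k=0}^{d-1}\frac{\cos^2\theta_k}{\sin^4\theta_k}
= \frac{t^2\sin^2 u}{d^4}\sum_{k}\left(\csc^4\theta_k - \csc^2\theta_k\right).
\]
Since $\theta_k = \theta_0 + k\pi/d$, I apply the classical identities
\[
\sum_{k=0}^{d-1}\csc^2\!\left(\theta+\tfrac{k\pi}{d}\right) = d^2\csc^2(d\theta),
\]
\[
\sum_{k=0}^{d-1}\csc^4\!\left(\theta+\tfrac{k\pi}{d}\right) = d^4\csc^4(d\theta) - \tfrac{2}{3}d^2(d^2-1)\csc^2(d\theta).
\]
The second identity follows from the first by differentiating twice in $\theta$ and using $\frac{d^2}{d\theta^2}\csc^2\theta = 6\csc^4\theta - 4\csc^2\theta$. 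Here $\csc^2(d\theta_0) = \csc^2 u$. The main technical step is establishing (or citing) this $\csc^4$ identity.

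\textbf{Step 3: simplify.} Substituting the identities gives
\[
\mathbb{E}(X^2) = t^2\left(\csc^2 u - \frac{2d^2+1}{3d^2}\right).
\]
Since $t^2\csc^2 u = \sec^2 u = 1+t^2$, this equals $1 + t^2\frac{d^2-1}{3d^2}$. Subtracting $\mathbb{E}(X)^2 = 1$ and taking the square root gives $\sigma(X_{d,m,0}) = \frac{1}{d}\sqrt{\frac{d^2-1}{3}}\tan\!\left(\frac{\pi}{2m}\right)$, as claimed.
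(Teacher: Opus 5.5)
Your proposal is correct and follows the paper's proof essentially step for step. You reduce to $\ket{\Phi_d}$ via Theorem~\ref{thm:FalseIdentity}, obtain the Fej\'er-kernel form $\mathbb{P}(x_k)=\frac{1}{d^2}\sin^2\xi/\sin^2\phi_k$, rewrite $\csc^2\phi_k\cot^2\phi_k=\csc^4\phi_k-\csc^2\phi_k$, and apply the shifted $\csc^2$ and $\csc^4$ sum identities of Lemma~\ref{lemma:trig_identities} with $\csc^2(d\phi_0)=\csc^2\xi$. The differences are cosmetic: you derive $\mathbb{P}(x_k)$ directly from the eigenvectors rather than citing Lemma~\ref{lm:general_second_moment}, and you get the $\csc^4$ identity by differentiating the $\csc^2$ identity twice rather than the $\cot$ identity three times. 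You also correctly subtract $\mathbb{E}(X)^2$, whereas the paper writes $\mathbb{E}[X]$, which is harmless since it equals $1$.
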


\begin{restatable}[]{thm}{StandardVarForOneSol} \label{thm:StandardVar4OneSol}
$\sigma(X_{d,m,1}) = \frac{1}{d^2} \sqrt{ \frac{d^2(d^2-1)}{3}\tan^2\left(\frac{\pi}{2m}\right) + 4(d-1)(d-2)^2}$.
\end{restatable}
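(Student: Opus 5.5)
The plan is to compute the first two moments of $X_{d,m,1}$ directly from the measurement statistics of the state produced by $\Ddot{U}_f \otimes \Ddot{U}_{f_\bot}$, and to reuse as much as possible of the calculation behind Theorem~\ref{thm:StandardVar4NoSol}.

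\textbf{The state.} By Theorems~\ref{thm:UfDotDot-1} and~\ref{thm:FalseIdentity}, with $d = 2^{n+1}$ and exactly one solution, the state is
\[
\ket{\phi} = \sum_{q} \gamma_q \ket{q}\ket{q}, \qquad \gamma_q = \tfrac{1}{\sqrt d}\ \text{for all } q \neq q_0, \qquad \gamma_{q_0} = -\tfrac{1}{\sqrt d},
\]
where $q_0$ encodes the unique solution with $F = 1$.

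\textbf{The distribution of $X_{d,m,1}$.} Using the eigenvectors of Eq.~\ref{eq:Eigenvectors} with $x = y = i$, the amplitude of outcome $(a,b)$ is
\[
d^{-1}\sum_q \gamma_q\, \omega^{q\theta}, \qquad \theta = (b-a) + (\alpha_i - \beta_i) = -k - \tfrac{1}{2m}, \quad \text{where } a = b + k \pmod d .
\]
The amplitude depends only on $k$ and not on $i$, so the average over $i$ in Lemma~\ref{lm:IDM_RandomVar} is trivial. Summing over the $d$ pairs $(a,b)$ with $a - b \equiv k$ gives
\[
\mathbb{P}(x_k) = \frac{1}{d^2}\bigl|S(\theta) - 2\omega^{q_0\theta}\bigr|^2, \qquad S(\theta) = \sum_{q=0}^{d-1}\omega^{q\theta} = \frac{1-e^{-\mathbf{i}\pi/m}}{1-\omega^{\theta}} .
\]
Expanding,
\[
\mathbb{P}(x_k) = \mathbb{P}_0(x_k) + \frac{4}{d^2} - \frac{4}{d^2}\,\mathrm{Re}\bigl(S(\theta)\,\omega^{-q_0\theta}\bigr),
\]
where $\mathbb{P}_0$ is exactly the $K=0$ distribution already analysed in Theorem~\ref{thm:StandardVar4NoSol}. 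Hence
\[
\mathbb{E}(X^2_{d,m,1}) = \mathbb{E}(X^2_{d,m,0}) + \frac{4}{d^2}\sum_k x_k^2 - \frac{4}{d^2}\sum_k x_k^2\,\mathrm{Re}\bigl(S\,\omega^{-q_0\theta}\bigr).
\]
The mean needs no computation: by Theorem~\ref{thm:IDMvsRatio} and Lemma~\ref{lm:IDM_RandomVar}, $\mathbb{E}(X_{d,m,1}) = (1 - 2/d)^2$.

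\textbf{Reducing to trigonometric sums.} Write $x_k = \tfrac{1}{d}\tan(\tfrac{\pi}{2m})\cot\bigl(\tfrac{\pi}{d}(k+\tfrac{1}{2m})\bigr)$. Use
\[
\frac{1}{1-e^{\mathbf{i}\varphi}} = \tfrac12 + \tfrac{\mathbf{i}}{2}\cot(\varphi/2),
\]
together with $1 - e^{-\mathbf{i}\pi/m} = 2\sin(\tfrac{\pi}{2m})\,e^{\mathbf{i}(\pi/2 - \pi/2m)}$. This rewrites $S(\theta)$ as an affine function of $\cot\bigl(\tfrac{\pi}{d}(k+\tfrac{1}{2m})\bigr)$, so $\mathbb{P}(x_k)$ becomes a polynomial in that cotangent multiplied by phases $\omega^{-q_0\theta}$. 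The remaining sums are then evaluated with the standard lattice identities
\[
\sum_{k=0}^{d-1}\csc^2\!\Bigl(x + \tfrac{k\pi}{d}\Bigr) = d^2\csc^2(dx),
\]
the analogous identities for $\sum_k\cot$ and $\sum_k\cot^2$ obtained from it, and, if needed, the derivative of $\sum_k\cot(x+k\pi/d) = d\cot(dx)$, all at $x = \pi/(2md)$. Here $dx = \pi/(2m)$, so every $\csc$, $\cot$, $\tan$ of $dx$ collapses to expressions in $\tan(\pi/2m)$. The terms carrying $\tan^2(\pi/2m)$ should reproduce the $\frac{d^2(d^2-1)}{3}\tan^2$ contribution, consistent with the $K=0$ case. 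The $\tan$-free terms should produce $4(d-1)(d-2)^2$ once $(1-2/d)^4$ is subtracted to form the variance.

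\textbf{Main obstacle.} The hardest step is the cross term $\sum_k x_k^2\,\mathrm{Re}\bigl(S(\theta)\,\omega^{-q_0\theta}\bigr)$, whose phase $\omega^{-q_0\theta} = \omega^{q_0/(2m)}\,\omega^{q_0 k}$ depends on $q_0$. The final formula is independent of $q_0$, so this dependence must cancel. My first attempt would be to write $\omega^{q_0 k}$ against the partial-fraction form of $\cot^2$ and $\cot^3$ in $e^{2\pi\mathbf{i}(k+1/2m)/d}$, and evaluate $\sum_k \omega^{q_0 k}/(1-\omega^{k+1/2m})^j$ by a residue or generating-function argument. If that proves intractable, I would fall back on the symmetry of the construction. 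Theorem~\ref{thm:UfDotDot-1} forces the solution state to have $F = 1$, which fixes $q_0$ to the odd indices. Failing that, I would relabel the basis by $\ket{q}\ket{q} \mapsto \ket{q-q_0}\ket{q-q_0}$ and check that this leaves the Bell-measurement statistics invariant up to a phase, which would reduce the computation to the single case $q_0 = 0$.
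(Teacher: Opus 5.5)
Your setup is correct: the amplitude, the reduction of $\mathbb{P}(x_k)$ to $\frac{1}{d^2}\bigl|S(\theta)-2\omega^{q_0\theta}\bigr|^2$, and the split into the $K=0$ part plus a cross term all check out. The gap is in the step you call the main obstacle. You assume the $q_0$-dependence of the cross term cancels, and it does not. The mean is $q_0$-independent because it is a trace, but the second moment is not.

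Take $d=4$, $m=2$, so $x_k\approx(1.257,\,0.167,\,-0.050,\,-0.374)$.
\begin{itemize}
\item For $q_0=0$ the probabilities are $\approx(0.318,\,0.087,\,0.182,\,0.413)$, giving $\mathbb{E}(X^2)=9/16$ and variance $1/2$. This is the value in the statement.
\item For $q_0=1$ they are $\approx(0.182,\,0.413,\,0.318,\,0.087)$, giving $\mathbb{E}(X^2)=5/16$ and variance $1/4$.
\end{itemize}
Summing your cross term exactly, for instance by the residue computation at $z=1$ that you suggested, gives
\[
\sigma^2_{q_0}(X_{d,m,1})=\frac{1}{d^4}\Bigl[\frac{d^2(d^2-1)}{3}\tan^2\Bigl(\frac{\pi}{2m}\Bigr)+4(d-1)(d-2)^2\Bigr]-\frac{8}{d^3}\,q_0(d-1-q_0)\tan^2\Bigl(\frac{\pi}{2m}\Bigr).
\]
So the stated formula holds only for $q_0\in\{0,d-1\}$.

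Interior positions actually occur. With $n=1$ and solution $v=0$, $q_0$ is $1$ or $2$ depending on which qubit carries $F$, so your parity observation does not rescue the claim. Your cyclic-relabeling fallback also fails. The measurement eigenvectors carry the fractional exponent $\omega^{q/(2m)}$, so the shift $q\mapsto q-q_0 \bmod d$ multiplies the wrapped-around terms by $e^{\pm\mathbf{i}\pi/m}$ relative to the rest. That is not a global phase, so the Bell statistics are not shift-invariant.

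The missing idea is that $\sigma(X_{d,m,1})$ is not a single well-defined number. The statement gives the worst case over the position of the one solution, which is exactly what the conservative threshold $\epsilon_1$ in Algorithm~\ref{alg:QSAT} needs. The paper makes this explicit with Lemma~\ref{lemma:max_variance_K1}, which places the maximum at $q^*=0$ or $d-1$. It then does your computation at $q_0=0$, where
\[
\mathbb{P}(x_k)=d^{-2}\bigl[\sin^2\xi\csc^2\phi_k+4\cos^2\xi-2\sin(2\xi)\cot\phi_k\bigr],\qquad \xi=\tfrac{\pi}{2m},\quad \phi_k=\tfrac{\pi}{d}\bigl(k+\tfrac{1}{2m}\bigr).
\]
It sums this with the $\csc^2$, $\csc^4$ and $\cot^3$ lattice identities. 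The $\cot^3$ sum needs one more derivative than you listed.

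To repair your proposal:
\begin{itemize}
\item Restate the target as the maximum over $q_0$.
\item Note that $q_0$ and $d-1-q_0$ give the same distribution.
\item Carry out your trigonometric reduction at $q_0=0$.
\item Prove that this case is maximal.
\end{itemize}
The closed form above proves maximality exactly. That is sharper than the paper's heuristic argument that the $k=0$ term dominates.
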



The aforementioned building blocks constitute the backbone of our quantum SAT solver, \ourwork, as shown in Algorithm~\ref{alg:QSAT}. It requires, as inputs, the target formula $f$ over the set of Boolean variables $V$, the confidence level $L$ for the SAT result, the number of observables $m$ for measurement, and the number of samples $s$ to perform $\mathtt{QEC}$. The detailed explanation of Algorithm~\ref{alg:QSAT} is as follows:

\textbf{Line~$1$}. The working variables $W$, $H$, $\mathcal{F}$, and $t$ are initialized as $V$, $\emptyset$, $f$, and $0$, respectively, where $t$ indicates the index of iterations.

\textbf{Line~$3$}. The $\mathtt{QEC}$ algorithm is applied on $\Ddot{U}_f$ and $I$ to get $\mu_t$ to approximate $\widehat{I_{d,m,K}}$. Since $\mathtt{QEC}$ is based on Monte Carlo, a statistical sampling method, there could be an additive error $|\mu_t - \widehat{I_{d,m,K}}|$ introduced.
Based on the theory of \emph{normal approximation to sample mean} (Lemma~\ref{lm:Gaussian_Tail_Probability} in Appendix~\ref{sec:UsfulLemmas}), if we set up a target error range $\epsilon$ and take $s$ samples, then we have $\mathbb{P}(|\mu_t - \widehat{I_{d,m,K}}| \geq \epsilon) \leq \delta = 2 \left( 1 - \Phi\left( \frac{\epsilon \sqrt{s}}{\sigma} \right) \right)$, where $\Phi(x) = \frac{1}{\sqrt{2\pi}} \int_{-\infty}^{x} (e^{-t^2/2}) dt$ is the cumulative distribution function (CDF) of the standard normal distribution. The more samples we take or the larger the error range $\epsilon$ we allow, the lower the error rate $\delta$ becomes.
Here, $L = 1 - \delta$ is called \emph{confidence level}. If the error rate $\delta$ is small, e.g., $0.05$, then the confidence level would be high, e.g., $0.95$. In our approach, we use Lemma~\ref{lm:Gaussian_Tail_Probability} in the opposite way. We set the desired confidence level $L$ and the number of samples $s$, and ask ``what is maximal error range $\epsilon$ allowed for $L$?''
We further analyze $\epsilon$ in the following steps: 
    
\textbf{Lines~$5$--$6$}. Since our goal is to determine whether a formula $f$ is satisfiable or not, the two critical cases are $K=0$ ($f$ has no solutions) and $K=1$ ($f$ has exactly one solution). Here, after setting up the confidence level $L$ that we want, instead of using the classical error rate $(1-L)$, we use a conservative smaller $\delta = \frac{1-L}{|V| + 1}$ to control the Family-Wise Error Rate (FWER)~\cite{Benjamini02} based on Bonferroni Correction~\cite{Bonferroni36}, because $\mathtt{QEC}$ might be applied multiple (at most $|V|+1$) times. By solving for $\epsilon$ in the equation $\delta = 2 \left( 1 - \Phi\left( \frac{\epsilon \sqrt{s}}{\sigma(X_{d,m,0})} \right) \right) = \frac{1 - L}{|V|+1}$, we can obtain the maximal error range $\epsilon_0$ allowed for the case of $K=0$. Similarly, by solving for $\epsilon$ in the equation $\delta = 2 \left( 1 - \Phi\left( \frac{\epsilon \sqrt{s}}{\sigma(X_{d,m,1})} \right) \right) = \frac{1 - L}{|V|+1}$, we can obtain $\epsilon_1$ for $K = 1$.



\textbf{Line~$7$}. Remember $\widehat{I_{d,m,0}} = 1$. If formula $f$ has no solutions, i.e., $K=0$, from Lemma~\ref{lm:Gaussian_Tail_Probability}, we have $\mathbb{P}(|\mu_t - \widehat{I_{d,m,0}}| \geq \epsilon_0) \leq \delta = \frac{1 - L}{|V|+1}$. 
That is to say, if $\mu_t$ is within the range $[1 - \epsilon_0, 1 + \epsilon_0]$, we have a $(1 - \delta)$ confidence level to conclude that $K=0$. This implies that if $\mu_t$ is outside this range, we also have $(1 - \delta)$ confidence level to conclude $K \neq 0$. Thus, if $\mu_t$ is less than $1 - \epsilon_0$, we can conclude that $f$ has at least one solution at the $(1 - \frac{1-L}{|V|+1})$ confidence level. Then, the algorithm returns ``yes'' with the solution ratio $(1 - \sqrt{\mu_0})$. Let us use Fig.~\ref{subfig:decision_boundary-A} for illustration. The red dotted line indicates the threshold $(1 - \epsilon_0)$. We can conclude $f$ has solutions for any $\mu_t$ below this threshold.

\LinesNumbered
\begin{algorithm}[tb]
\SetKwInOut{Input}{input}
\SetKwInOut{Output}{output}
\SetKwComment{Comment}{$\quad //$}{}
\Input{$f$: Boolean formula over $V$; \quad $m$: number of measurements; \\ $L$: confidence level; \quad $s$: number of samples; }
\Output{(yes/no, $r$): whether $f$ is satisfiable and the estimated solution ratio $r$}
\BlankLine

$W \longleftarrow V$ ; \quad \quad \quad $H \longleftarrow \emptyset$ ; \quad \quad \quad $\mathcal{F} \longleftarrow f$ ; \quad \quad \quad $t \longleftarrow 0$ \;

    \While{$|W| \geq 0$}{
        
        $\mu_t \longleftarrow \mathtt{QEC} \left(\ket{\Phi_d}, |W| + 1, \Ddot{U}_\mathcal{F}, I, m, s\right)$ \;
        \vspace{1mm}
        $d \longleftarrow 2^{|W| + 1}$ \;
        \vspace{-1mm}


        
        $\epsilon_0 \longleftarrow$ solving for $\epsilon$ in the equation: $2 \left( 1 - \Phi\left( \frac{\epsilon \sqrt{s}}{\sigma(X_{d,m,0})} \right) \right) = \frac{1 - L}{|V|+1}$ \;
        \vspace{1mm}
        $\epsilon_1 \longleftarrow$ solving for $\epsilon$ in the equation: $2 \left( 1 - \Phi\left( \frac{\epsilon \sqrt{s}}{\sigma(X_{d,m,1})} \right) \right) = \frac{1 - L}{|V|+1}$ \;

        \lIf{$\mu_t \le 1 - \epsilon_0$}{
            \Return{(yes, $1 - \sqrt{\mu_0}$)} 
        }
        \lElseIf{$\mu_t \ge (1 - \frac{1}{2^{|W|}})^2 + \epsilon_1$}{
            \Return{(no, $0$)}
        }
        \Else{
            pick one Boolean variable $x \in W$ \;
    
            $W \longleftarrow  W \setminus \{ x \}$ ; \quad \quad $H \longleftarrow H \cup \{ x \}$ ; \quad \quad $\mathcal{F} \longleftarrow \mathcal{F}_{[H]}$ ; \quad \quad
            $t \longleftarrow t + 1$ \;
        }
    }


\caption{\ourwork}
\label{alg:QSAT}
\end{algorithm}

\textbf{Line~$8$}. Notice that $(1 - \frac{1}{2^{|W|}})^2$ is the actual $\widehat{I_{d,m,1}}$ value when $K=1$. 
If it is indeed the case, from Lemma~\ref{lm:Gaussian_Tail_Probability} again, we have $\mathbb{P}(|\mu_t - \widehat{I_{d,m,1}}| \geq \epsilon_1) \leq \delta = \frac{1 - L}{|V|+1}$.
That is to say, if $\mu_t$ is within the range $[\widehat{I_{d,m,1}} - \epsilon_1, \widehat{I_{d,m,1}} + \epsilon_1]$, we have $(1 - \delta)$ confidence level to conclude $K = 1$. This implies that if $\mu_t$ is outside this range, we also have $(1 - \delta)$ confidence level to conclude $K \neq 1$.
Thus, if $\mu_t$ is greater than $(1 - \frac{1}{2^{|W|}})^2 + \epsilon_1$, we can conclude that $f$ has no solutions at $(1 - \frac{1-L}{|V|+1})$ confidence level. 
Fig.~\ref{subfig:decision_boundary-A} shows this threshold in the green dotted line.

\textbf{Lines~$10$--$11$}. When $1 - \epsilon_0 < \mu_t< (1 - \frac{1}{2^{|W|}})^2 + \epsilon_1$, we cannot conclude anything with sufficient confidence. This case is illustrated in Fig.~\ref{subfig:decision_boundary-A} between the green and red dotted lines, e.g., the gray triangle. Let us call this region the ``uncertain region''. 
If we can somehow increase the solution ratio (of course, provided that $f$ has solutions) such that its corresponding $\widehat{I_{d,m,K}}$ jumps out of the uncertain region, then our approach can determine a conclusive result. To achieve this goal, we adopt \emph{Shannon expansion}~\cite{Shannon49}, which is briefly introduced in the following. Suppose $V = \{ v_1, v_2, \dots, v_n \}$ is a set of Boolean variables and $f$ is a formula over $V$. We can perform Shannon expansion for $f$ based on a variable $v_i \in V$. We use $f_{[v_i]}$ to denote the Shannon-expanded formula, which is defined as $f_{[v_i]} = f_{v_i \leftarrow 0} \vee f_{v_i \leftarrow 1}$, where $f_{v_i \leftarrow 0}$ denotes the formula obtained by replacing all occurrences of $v_i$ in $f$ with $0$, while $f_{v_i \leftarrow 1}$ denotes the formula obtained by replacing all occurrences of $v_i$ in $f$ with $1$.
We extend the notion of Shannon expansion to a set of Boolean variables. Let $H = \{v_i, v_j, \ldots, v_k\}$ be a subset of $V$. We use the notation $f_{[H]}$ to denote the formula $(((f_{[v_i]})_{[v_j]})_{\ldots})_{[v_k]}$ by applying Shannon expansion to $f$ recursively for each variable in $H$ (the order does not matter). Theorem~\ref{thm:Shannon_Expansion} proves three nice properties about the Shannon-expanded formula $f_{v_i}$.

\begin{restatable}[Shannon Expansion]{thm}{ShannonExpansion} \label{thm:Shannon_Expansion} ~\\[-4mm]
\begin{enumerate}
    \item $f$ has no solutions $\iff$ $f_{[v_i]}$ has no solutions.
    \item $r_f \leq r_{f_{[v_i]}} \leq 2 \cdot r_f$, where $r_f$ and $ r_{f_{[v_i]}}$ are the solution ratios of $f$ and $f_{[v_i]}$.
    \item $f_{[V]}$ is either $f_\top$ with $r=1$, or $f_\bot$ with $r=0$.
\end{enumerate}
\end{restatable}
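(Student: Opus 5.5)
Each of the three items follows directly from the definition $f_{[v_i]} = f_{v_i \leftarrow 0} \vee f_{v_i \leftarrow 1}$. The one point to fix first is the variable set over which solution ratios are measured. We regard $f_{[v_i]}$ as a function over the remaining variables $V \setminus \{v_i\}$; this is how Algorithm~\ref{alg:QSAT} uses it, since $|W|$ decreases by one after each expansion. So $r_f = K_f / 2^{n}$ and $r_{f_{[v_i]}} = K_{f_{[v_i]}} / 2^{n-1}$, with $n = |V|$.

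For item 1, we argue both directions. If $f(\vec{v}) = 1$ for some $\vec{v}$, let $b$ be the value of $v_i$ in $\vec{v}$ and let $\vec{w}$ be the restriction of $\vec{v}$ to $V \setminus \{v_i\}$. Then $f_{v_i \leftarrow b}(\vec{w}) = 1$, so $f_{[v_i]}(\vec{w}) = 1$. Conversely, if $f_{[v_i]}(\vec{w}) = 1$, then some disjunct $f_{v_i \leftarrow b}(\vec{w}) = 1$. Extending $\vec{w}$ by $v_i = b$ then gives a solution of $f$.

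For item 2, we count solutions. Let $S_b = \{\vec{w} : f_{v_i \leftarrow b}(\vec{w}) = 1\}$ for $b \in \{0,1\}$. Every solution of $f$ is determined by its value $b$ of $v_i$ together with a $\vec{w} \in S_b$, so $K_f = |S_0| + |S_1|$. The solutions of $f_{[v_i]}$ form $S_0 \cup S_1$, so $K_{f_{[v_i]}} = |S_0 \cup S_1|$. Since $\max(|S_0|,|S_1|) \le |S_0 \cup S_1| \le |S_0| + |S_1|$, we get $K_f/2 \le K_{f_{[v_i]}} \le K_f$. Dividing by $2^{n-1}$ gives $r_f \le r_{f_{[v_i]}} \le 2 r_f$.

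For item 3, we expand over all of $V$. The result $f_{[V]}$ has no free variables, so it is the constant $0$ or $1$, i.e.\ $f_\bot$ or $f_\top$. Its solution ratio over the single empty assignment is therefore $0$ or $1$. To see which constant it is, apply item 1 inductively along the expansion order: $f_{[V]}$ is $f_\top$ exactly when $f$ is satisfiable. The order of expansion does not matter, because $f_{[H]}(\vec{w}) = \exists\, \vec{h}.\, f(\vec{h}, \vec{w})$ and existential quantifiers commute.

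The only real obstacle is the bookkeeping convention in item 2. If one kept $v_i$ as a dummy variable of $f_{[v_i]}$, the counts would change, since each $\vec{w}$ would be counted twice, but the ratio $r_{f_{[v_i]}}$ would be the same. So the inequality holds under either convention. We will state this explicitly so that the ratios are unambiguous.
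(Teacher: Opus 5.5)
Your proof is correct. For items 1 and 2 it follows the paper's idea, but item 3 takes a genuinely different route.

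\textbf{Items 1 and 2.} The paper proves item 2 by pairing the assignments $(0,\vec{x})$ and $(1,\vec{x})$ and asking whether both or only one is a solution. That is exactly your split of $S_0 \cup S_1$ into $S_0 \cap S_1$ and $S_0 \triangle S_1$. Your convention of measuring $r_{f_{[v_i]}}$ over $V \setminus \{v_i\}$ also matches the paper, which divides by $2^{n-1}$. Your version is tighter in two places:
\begin{itemize}
\item The bound $\max(|S_0|,|S_1|) \le |S_0 \cup S_1| \le |S_0|+|S_1|$ covers mixed configurations directly. The paper only evaluates the two extremes (``all solutions hit the worst case'' and ``all hit the best case'').
\item Your item 1 argues both directions. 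The paper's chain of equalities really only shows that $f_\bot$ expands to $f_\bot$.
\end{itemize}

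\textbf{Item 3.} The paper assumes $f$ has $2^t$ solutions and follows one worst-case trajectory of the solution ratio. Numerator and denominator halve until the numerator reaches $1$, then the ratio doubles up to $1$; it stays $0$ if $K=0$. You instead observe that $f_{[V]}$ has no free variables, so it is a constant with ratio $0$ or $1$ over the single empty assignment. You then apply item 1 inductively to decide which constant it is.

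Your route needs no assumption on the form of $K$ and does not depend on a particular expansion trajectory. Your remark that existential quantifiers commute also justifies the paper's parenthetical claim that the expansion order does not matter. The paper's route, in exchange, shows how the ratio grows under repeated expansion. That is the intuition it later relies on, via the amplification factor $a \in (1,2]$, in the expected-iteration analysis.
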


Theorem~\ref{thm:Shannon_Expansion} suggests that if the solution ratio of formula $f$ is too small such that its corresponding $\mu_t$ falls into the uncertain region, we can pick one variable $v_i \in V$ to construct its corresponding Shannon-expanded formula $f_{[v_i]}$, and then apply $\mathtt{QEC}$ to $f_{[v_i]}$. Since the formula $f_{[v_i]}$ has a higher solution ratio, its $\mu_t$ would be smaller and exit the uncertain region. If it does not, we can apply Shannon expansion again on formula $f_{[v_i]}$ by picking another variable $v_j \in V \setminus \{v_i\}$. This process is performed recursively until we have a conclusive result, i.e., the value of $\mu_t$ is not in the uncertain region. Algorithm~\ref{alg:QSAT} maintains a set of variables $H$ to perform Shannon expansion on formula $f$. Suppose $H = \{x_1, x_2, \ldots, x_h \}$ has $h$ Boolean variables. The general form of $f_{[H]} = \bigvee_{x_i \in \{0,1\}} f_{x_1 x_2 \cdots x_h}$ for $i \in \{1,2,\ldots, h \}$ is the disjunction of $2^h$ CNF sub-formulas.
Each sub-formula $f_{x_1 x_2 \cdots x_h}$ can be obtained by assigning each variable $x_i$ either $0$ or $1$. For example, $f_{0 \ldots 00}$ is obtained by assigning $0$ to each variable, while $f_{0 \ldots 01}$ is obtained by assigning $1$ to $x_h$ and $0$ to all the others, and so on. Fig.~\ref{fig:Quantum_Circuit_Shannon_Expansion} shows how to construct the quantum circuit $\Ddot{U}_{f_{[H]}}$ for $\mathtt{QEC}$. The (pink) circuit for each sub-formula $f_{x_1 x_2 \cdots x_h}$ can be constructed based on the scheme in Fig.~\ref{subfig:Uf_Impl}. The qubit $\ket{F}$ is for the result of $f_{[H]}$. Theorem~\ref{thm:ShanExp_Circuit_Correct} proves that the construction of $\Ddot{U}_{f_{[H]}}$ still preserves the desirable property in Theorem~\ref{thm:TightBound}.

\begin{restatable}[]{thm}{ShanExpCircuitCorrect} \label{thm:ShanExp_Circuit_Correct}
    $\Ddot{U}_{f_{[H]}}(\ket{\vec{v}}\ket{F}) = (-1)^{f_{[H]}(\vec{v}) \wedge F} \ket{\vec{v}} \ket{F}$, where $\vec{v} \in V \setminus H$.
\end{restatable}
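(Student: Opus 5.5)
We prove Theorem~\ref{thm:ShanExp_Circuit_Correct} by tracking a computational basis state $\ket{\vec{v}}\ket{F}$, with all ancillas in their initial state $\ket{0}$, through the circuit of Fig.~\ref{fig:Quantum_Circuit_Shannon_Expansion}. The argument has three layers, mirroring the construction: the sub-formula blocks, their combination into the disjunction, and the restoration of ancillas. Linearity then extends the statement from basis states to arbitrary inputs. This is exactly what is needed, because a diagonal $\pm 1$ action keeps $(\Ddot{U}_{f_{[H]}}\otimes I)\ket{\Phi_d}$ in $\mathcal{S}$, so Theorem~\ref{thm:TightBound} applies.

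\textbf{Step 1: the sub-formula blocks.} For each assignment $(x_1,\ldots,x_h)\in\{0,1\}^h$, the pink block for $f_{x_1\cdots x_h}$ is built as in Fig.~\ref{subfig:Uf_Impl}. It acts only on $\ket{\vec{v}}$ with $\vec{v}\in V\setminus H$, since the variables of $H$ are fixed constants. The clause sub-circuits of Fig.~\ref{subfig:ClassicConstruction}(i) compute each clause bit reversibly, and the conjunction writes $f_{x_1\cdots x_h}(\vec{v})$ into a dedicated ancilla $\ket{g_{x_1\cdots x_h}}$ by XOR onto $\ket{0}$. The inverse clause circuits $C_i^{-1}$ then restore the clause ancillas, which is the argument of Section~\ref{subsec:ancillary-qubits}. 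After all $2^h$ blocks, the state is
\[
\ket{\vec{v}}\ket{F}\textstyle\bigotimes_{x}\ket{f_{x}(\vec{v})}\ket{0\cdots0}.
\]

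\textbf{Step 2: the disjunction.} The OR of the $g$-ancillas is computed by the De Morgan trick: apply $X$ to each $g$-qubit, use a multi-controlled gate to flip a fresh ancilla $\ket{o}$, then apply $X$ to it, giving $o=\bigvee_x f_x(\vec{v})=f_{[H]}(\vec{v})$. This identity is the definition of $f_{[H]}$ as the disjunction of the $2^h$ cofactors. The circuit then applies a controlled-$Z$ between $o$ and $\ket{F}$, as in Fig.~\ref{subfig:UfDotDot_Conj}. This contributes the phase $(-1)^{f_{[H]}(\vec{v})\wedge F}$ and leaves every basis label unchanged.

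\textbf{Step 3: uncomputation.} The remaining gates are the inverses of Steps 1--2 (apart from the $Z$), applied in reverse order. They reset $o$, the $g$-qubits and the clause ancillas to $\ket{0}$. Because the phase gate is diagonal and commutes with the basis labels, the uncompute sequence $U^{-1}$ acts on $(-1)^{\cdots}U\ket{\psi}$ to give $(-1)^{\cdots}\ket{\psi}$. This yields $\Ddot{U}_{f_{[H]}}(\ket{\vec{v}}\ket{F})=(-1)^{f_{[H]}(\vec{v})\wedge F}\ket{\vec{v}}\ket{F}$ in the all-zero ancilla block. By Lemma~\ref{lm:InvariantDirectSumDecomposition}, that block is the operator restricted to the input/output qubits.

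\textbf{Main obstacle.} The delicate step is the bookkeeping behind Step 3. Each pink block must be exactly self-contained: its clause ancillas are restored before the next block reuses them, and no gate outside the $Z$ introduces a phase. I would argue this by writing the circuit as $W^{-1}\,(CZ_{o,F})\,W$, where $W$ is a classical reversible, permutation-matrix circuit. A direct computation of this conjugate on basis states then finishes the proof. The proof of Theorem~\ref{thm:UfDotDot-1} can be reused with the single controlling bit replaced by $o$.
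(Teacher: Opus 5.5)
Your overall strategy matches the paper's. The sub-formula blocks never alter $\ket{\vec v}$, the only effect on $\ket{F}$ is a diagonal phase determined by the bits $f_{x_1\cdots x_h}(\vec v)$, and everything else is undone. Your $W^{-1} D W$ framing is more explicit than the paper's proof about why the $\ket{F_{x_1\cdots x_h}}$ and clause ancillas return to $\ket{0}$; the paper only notes that $\ket{\vec v}$ is untouched and then tracks the phase on $\ket{F}$. Since $W$ is a permutation circuit and $D$ is diagonal, $W^{-1}DW$ is diagonal on the whole space, which is what justifies restricting to the zero-ancilla block.

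The one concrete problem is Step 2. According to the paper's description, the circuit of Fig.~\ref{fig:Quantum_Circuit_Shannon_Expansion} has no fresh OR-ancilla $\ket{o}$ and no controlled-$Z$ between $o$ and $F$. So, as written, you prove the identity for a different, one-qubit-wider realization of $\Ddot{U}_{f_{[H]}}$, not the one the theorem refers to.

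The paper's gadget is an unconditional $Z$ on $\ket{F}$ together with a multi-controlled $Z$ on $\ket{F}$. The latter is controlled directly by the $\ket{F_{x_1\cdots x_h}}$ qubits and fires exactly when all of them are $\ket{0}$. The paper's proof is a case split on $F$. For $F=0$ neither gate changes anything. For $F=1$ the first $Z$ contributes $-1$, and the second contributes another $-1$ iff every cofactor is false, i.e.\ iff $f_{[H]}(\vec v)=0$. In one line, the diagonal core has phase $(-1)^{F}\cdot(-1)^{F\wedge\neg f_{[H]}(\vec v)}=(-1)^{F\wedge f_{[H]}(\vec v)}$.

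To fix this, replace Step 2 by that computation. Alternatively, observe that your $CZ_{o,F}$ with $o=\bigvee_x g_x$ multiplies every basis state by exactly this same phase, so the two circuits agree on the zero-ancilla block. Either way, Step 3 goes through verbatim with this $D$. The paper's gadget saves the extra ancilla and the surrounding $X$/multi-controlled-$X$ layers. Yours reduces more transparently to the single-control situation of Theorem~\ref{thm:UfDotDot-1}.
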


Theorem~\ref{thm:Q-SAT-Correctness} proves the correctness of \ourwork. Its analysis is discussed in Section~\ref{sec:Discussion}.

\begin{figure}[tb]
\centering
\includegraphics[width=0.86\linewidth]{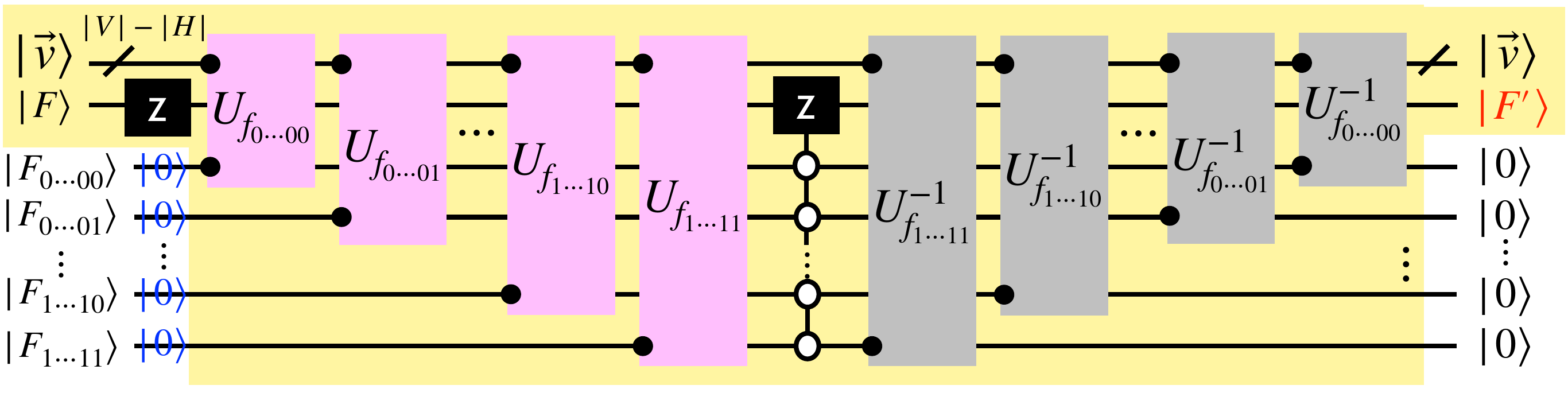}
\caption{Quantum Circuit $\Ddot{U}_{f_{[H]}}$ for Formula $f_{[H]}$}
\label{fig:Quantum_Circuit_Shannon_Expansion}
\end{figure}


\begin{restatable}[]{thm}{QSATCorrectness} \label{thm:Q-SAT-Correctness}
Given a Boolean formula $f$ with $n$ variables and a user-specified confidence level $L \in (0,1)$, Algorithm~\ref{alg:QSAT} returns, in $n$ iterations, the correct answer (i.e., it returns $\mbox{yes}$ if $f$ is satisfiable, and returns $\mbox{no}$ otherwise) with probability at least $L$.
\end{restatable}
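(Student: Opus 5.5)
The argument has three parts: termination, correctness of each conclusive answer, and a union bound over iterations. \emph{Termination.} Each pass through Lines~$10$--$11$ removes one variable from $W$ and adds it to $H$. By Theorem~\ref{thm:Shannon_Expansion}(3), once $H=V$ the formula $\mathcal{F}=f_{[V]}$ is either $f_\top$ or $f_\bot$. By Theorem~\ref{thm:IDMvsRatio}, the true value $\widehat{I_{d,m,K}}$ is then $0$ or $1$. The algorithm therefore makes at most $|V|+1$ calls to $\mathtt{QEC}$ (indices $t=0,\ldots,n$), and at the last one the thresholds separate the two cases whenever $\mu_t$ lies within the error ranges. I will describe this as ``$n$ iterations'' of Shannon expansion. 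This also matches the Bonferroni factor $|V|+1$ used in Lines~$5$--$6$.

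\emph{Per-iteration guarantee.} Fix an iteration $t$ with current formula $\mathcal{F}$ over $W$, and let $K_t$ be its number of solutions. By Theorem~\ref{thm:ShanExp_Circuit_Correct}, $\Ddot{U}_\mathcal{F}$ acts diagonally with $\pm1$ phases. Hence $(\Ddot{U}_\mathcal{F}\otimes I)\ket{\Phi_d}\in\mathcal{S}$, and Theorem~\ref{thm:TightBound}, Lemma~\ref{lm:FormulaTrace} and Theorem~\ref{thm:IDMvsRatio} give the exact mean $\widehat{I_{d,m,K_t}}=(1-K_t/2^{|W|})^2$. By Lemma~\ref{lm:IDM_RandomVar}, $\mu_t$ is a sample mean of $X_{d,m,K_t}$. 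By Theorem~\ref{thm:Shannon_Expansion}(1), iterated, $K_t=0$ iff $f$ is unsatisfiable.

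I now define the good event $G_t$, depending on the case. If $K_t=0$, let $G_t$ be $|\mu_t-1|<\epsilon_0$. If $K_t\ge1$, let $G_t$ be $\mu_t<(1-2^{-|W|})^2+\epsilon_1$. On $G_t$ the algorithm never returns a wrong answer at iteration $t$. For $K_t=0$, the ``yes'' test fails and only ``no'' or continuing is possible. For $K_t\ge1$, the ``no'' test fails.

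For $K_t\in\{0,1\}$, Lemma~\ref{lm:Gaussian_Tail_Probability} together with the choice of $\epsilon_0,\epsilon_1$ (via Theorems~\ref{thm:StandardVar4NoSol} and~\ref{thm:StandardVar4OneSol}) gives $\mathbb{P}(\neg G_t)\le\delta=\frac{1-L}{|V|+1}$. For $K_t\ge2$ I need a monotonicity step. The mean $(1-K_t/2^{|W|})^2$ is strictly smaller than the $K=1$ mean. So the event $\mu_t\ge(1-2^{-|W|})^2+\epsilon_1$ requires a deviation larger than $\epsilon_1$ above its own mean. I will argue that this probability is still at most $\delta$, either by comparing variances or by invoking the one-sided tail bound with the deviation enlarged by the gap between the two means.

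\emph{Union bound and the final step.} Taking $G=\bigcap_t G_t$ over at most $|V|+1$ iterations, Bonferroni gives $\mathbb{P}(G)\ge1-(|V|+1)\delta=L$. On $G$ every returned answer is correct. It remains to show that on $G$ the algorithm does return something by the last iteration, where $W=\emptyset$ and $d=2$. Here the mean is exactly $0$ (satisfiable) or $1$ (unsatisfiable), and the ``no'' threshold is $0+\epsilon_1$. On $G$, a satisfiable $f$ gives $\mu<\epsilon_1$, which must also fall below $1-\epsilon_0$. An unsatisfiable $f$ gives $\mu>1-\epsilon_0$, which must exceed $\epsilon_1$. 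Both require $\epsilon_0+\epsilon_1\le1$ at $d=2$ (compare Fig.~\ref{subfig:decision_boundary-B}).

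\emph{Main obstacle.} I expect the hardest part to be justifying the $K_t\ge2$ tail bound rigorously, together with the final-step separation $\epsilon_0+\epsilon_1\le1$. The latter depends on $s$ and $L$, so I may need to state it as an implicit assumption on $s$. As a fallback, at $W=\emptyset$ I would argue that $X$ is degenerate enough, or that the algorithm's returned branch is forced.
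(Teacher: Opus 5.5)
Your proposal follows essentially the same route as the paper's proof: per-iteration tests at the Bonferroni level $\delta=\frac{1-L}{|V|+1}$, with $\epsilon_0,\epsilon_1$ calibrated by $\sigma(X_{d,m,0})$ and $\sigma(X_{d,m,1})$ through the normal approximation, $K=1$ treated as the worst case for a false ``no'', and a union bound over the at most $|V|+1$ tests; it is also more careful than the paper, which simply asserts the $K\ge 2$ case and leaves termination at $W=\emptyset$ to Lemma~\ref{lm:sampling_sufficiency}, i.e.\ to the unstated assumption $s\ge\hat{s}$ (this implies your condition $\epsilon_0+\epsilon_1\le 1$ at $d=2$). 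For the $K\ge 2$ step, use the one-sided mean-gap argument rather than a variance comparison, because $\sigma(X_{d,m,K})$ can exceed $\sigma(X_{d,m,1})$ at intermediate solution ratios (for $m=2$, large $d$ and typical solution sets the variance is roughly $(1-r)^2\bigl(1+\sigma^2(X_{d,m,0})\bigr)-(1-r)^4$, which peaks near $4/9$, while $\sigma^2(X_{d,2,1})\approx 1/3$); note also that, as in the paper, the bound ``at least $L$'' holds only up to the error of the normal approximation in Lemma~\ref{lm:Gaussian_Tail_Probability}.
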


\section{Analysis and Discussion} \label{sec:Discussion}

In Algorithm~\ref{alg:QSAT}, we use the theory of \emph{normal approximation to sample mean} (Lemma~\ref{lm:Gaussian_Tail_Probability} in Appendix~\ref{sec:UsfulLemmas}) to approximate $\widehat{I_{d,m,K}}$. However, Lemma~\ref{lm:Gaussian_Tail_Probability} has an assumption that the sample size $s$ has to be sufficiently large. In our case, how large is considered sufficient? Lemma~\ref{lm:Asymptotic_Normality} discharges this assumption and proves the lower bound $\hat{s}$ of the sampling size, which depends on the number of measurements $m$, the number of variables $|V|$, and the confidence level $L$. Remarkably, the values of $m$ and $L$ can be considered constants ($m=2$ and $L=0.95$ are the conventional settings, respectively), and $s$ grows only logarithmically with $|V|$, e.g., with $m = 2$ and $L=0.95$ for $10,000$ variables, only 149 samplings are required. This lower bound $\hat{s}$ gives a suggested value of sampling size in practice.

\begin{restatable}[Asymptotic Normality Requirement]{lm}{AsymptoticNormality} \label{lm:Asymptotic_Normality}
Given the number of measurements $m$, to satisfy the requirement for asymptotic normality, the sample size has a lower bound:
$
\hat{s} \ge \frac{64}{9 \cdot \tan^2(\frac{\pi}{2m})} \left[\Phi^{-1}\left(1 - \frac{\delta}{2}\right)\right]^2$, where $\delta = \frac{1 - L}{|V|+1}$.
\end{restatable}

Now, we want to discuss the $(|V|+1)$-th iteration of Algorithm~\ref{alg:QSAT}, in which we have $W = \emptyset$, $d = 2$, and $H = V$. This is the very last iteration to handle the theoretical worst case where no conclusive result has been made in the previous $|V|$ iterations. In this case, Theorem~\ref{thm:Shannon_Expansion} tells us that $f_{[V]}$ is either $f_\bot$ or $f_\top$. As we want to have a conclusive result in this iteration, we need to make sure the uncertain region is empty, as illustrated in Fig.~\ref{subfig:decision_boundary-B}. It is easy to verify that $\sigma(X_{2,m,0}) = \sigma(X_{2,m,1}) = \frac{1}{2} \tan\left(\frac{\pi}{2m}\right)$ based on Theorems~\ref{thm:StandardVar4NoSol} and~\ref{thm:StandardVar4OneSol}. Thus, given the same number of samples $s$ and error rate $\delta$, the error range $\epsilon_0$ and $\epsilon_1$ would be the same. To make the uncertain region empty, we need the assumption of $\epsilon_0 = \epsilon_1 = \frac{1}{2}$. By solving for $s$ in the equation $2 \left( 1 - \Phi\left( \frac{ \epsilon \sqrt{s}}{\sigma(X_{2,m,0})} \right) \right) = \frac{1 - L}{|V|+1}$ with the condition $\epsilon = \frac{1}{2}$, we can get the number of samplings $s'$ to discharge this assumption. Lemma~\ref{lm:sampling_sufficiency} proves that $\hat{s}$ is greater than $s'$ such that the uncertain region is guaranteed to be empty, i.e., Algorithm~\ref{alg:QSAT} terminates either in Line~$7$ or Line~$8$ in the $(|V|+1)$-th iteration.



\begin{restatable}[]{lm}{SamplingSufficiency} \label{lm:sampling_sufficiency}
The sampling size $\hat{s}$ defined in Lemma~\ref{lm:Asymptotic_Normality} guarantees that the $(|V|+1)$-th iteration of Algorithm~\ref{alg:QSAT} gives a conclusive result for any number of measurement types $m \ge 2$.
\end{restatable}

Let us discuss the complexity of \ourwork. Algorithm~\ref{alg:QSAT} performs $s$ samplings for each iteration and executes for at most $|V|+1$ iterations.
The total number of samplings required in the worst case is $s \cdot (|V|+1)$. The worst case happens when the formula has either no solution ($K=0$) or one solution ($K=1$). This is because the solution ratios of these two cases are too close to be differentiated, and thus Shannon expansion is applied (at most $|V|$ times) to widen the difference between them.

To conclude the worst case time complexity, we need to discuss the execution time for each sampling.
In each iteration of Algorithm~\ref{alg:QSAT}, if needed, Shannon expansion is applied on the current input formula $f$ to get its expanded formula $f_1 \vee f_2$ as the input for the next iteration. That is, every Shannon expansion will double the number of disjunctive sub-formulas in the expanded formula, though each sub-formula itself is simpler than $f$. This makes the number of sub-formulas grow exponentially with the number of Shannon expansions being performed. The implication is that the quantum circuit for the Shannon-expanded formula, as shown in Fig.~\ref{fig:Quantum_Circuit_Shannon_Expansion}, increases in depth. That is, each Shannon expansion doubles the execution time of samplings. Thus, the worst case time complexity of Algorithm~\ref{alg:QSAT} becomes $O(|V| \cdot 2^{|V|})$. Fortunately, the probability of the worst case is extremely small. Let us calculate the probability. The total number of assignments is $N = 2^{|V|}$, and the total number of solution arrangements is $2^N$ because each assignment could be a solution or not. If $f$ has $K$ solutions, there will be $\binom{N}{K}$ possible cases, and the probability of $f$ having $K$ solutions is $\frac{\binom{N}{K}}{2^N}$. The worst case happens when $K = 0$ or $K=1$ with the probability $\frac{\binom{N}{0} + \binom{N}{1}}{2^N} = \frac{1 + N}{2^N}$, which is extremely small when $|V|$ is large. Since the probability of the worst case is extremely small, it is more practical to consider the expected time complexity, analogous to the average-case analysis of Quicksort. 
We analyze the average case under the uniform random Boolean function model, where an 
$n$-variable Boolean function is sampled uniformly from all Boolean functions.
Theorem~\ref{thm:expected_iterations} proves that the \emph{expected} number of Shannon expansions required is $O(1)$, which implies that the \emph{expected} execution time of samplings remains in the same order. Theorem~\ref{thm:expected_exec_time} further proves that the \emph{expected} time complexity of Algorithm~\ref{alg:QSAT} is $s \cdot O(1)$. Our experimental results (c.f.~Section~\ref{sec:Evaluation}) also support our theoretical claims. 

\begin{restatable}[]{thm}{ExpectedIterations} \label{thm:expected_iterations}
The \emph{expected} number of Shannon expansions in \ourwork~is $O(1)$.
\end{restatable}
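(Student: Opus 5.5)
The plan is to bound the expected number of Shannon expansions by summing, over $t$, the probability that the algorithm is still undecided after $t$ expansions. Write $T$ for the number of Shannon expansions performed. Then
\[
\mathbb{E}[T] = \sum_{t \ge 1} \mathbb{P}(T \ge t),
\]
and the goal is to show the right-hand side is bounded by a constant independent of $|V|$. We work under the uniform random Boolean function model fixed above: $f$ is drawn uniformly among all $2^N$ functions on $N = 2^{|V|}$ assignments, so $K \sim \mathrm{Bin}(N, 1/2)$.

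The algorithm can only continue to iteration $t+1$ if $\mu_t$ lands in the uncertain region $(1-\epsilon_0,\ (1-2^{-|W|})^2+\epsilon_1)$. I would split on the solution count $K$ of the current formula $\mathcal{F}$. By Theorem~\ref{thm:Shannon_Expansion}(2), $r_{\mathcal F}\ge r_f = K/N$. By Theorem~\ref{thm:IDMvsRatio}, $\widehat{I_{d,m,K}}=(1-r)^2$.

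\textbf{Case where $r_f$ is large.} Suppose $(1-r_f)^2 \le 1-2\epsilon_0$. Then $\mu_t$ enters the uncertain region only by deviating by at least $\epsilon_0$ from its mean. By Lemma~\ref{lm:Gaussian_Tail_Probability}, this has probability at most $\delta$; here one should note that $\sigma(X_{d,m,K})$ is maximised or controlled by the $K=0$ and $K=1$ values, or else be bounded crudely by $\max_k x_k$. Under the uniform model, the event that $r_f$ is \emph{not} large in this sense has probability at most
\[
\mathbb{P}\left(K \le N\left(1-\sqrt{1-2\epsilon_0}\right)\right).
\]
Since $\epsilon_0 = \Theta(\tan(\pi/2m)/\sqrt{s})$ is a constant strictly below $1/2$ (using Lemma~\ref{lm:sampling_sufficiency} and $\sigma(X_{d,m,0})\le \tan(\pi/2m)/\sqrt3$), the threshold is $cN$ with $c<1/2$. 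A Chernoff bound then gives $e^{-\Omega(N)}$, which is doubly exponentially small in $|V|$.

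\textbf{Combining the cases.} From this I get
\[
\mathbb{P}(T\ge 1)\le \delta + e^{-\Omega(N)} + \frac{1+N}{2^N}.
\]
For $t\ge 1$ I would use the cruder bound $\mathbb{P}(T\ge t)\le \mathbb{P}(T\ge 1)$, together with $T\le |V|$. This yields
\[
\mathbb{E}[T]\le |V|\left(\delta+e^{-\Omega(N)}\right).
\]
Since $\delta=(1-L)/(|V|+1)$, the product $|V|\delta<1-L$, and $|V|e^{-\Omega(2^{|V|})}=o(1)$. Hence $\mathbb{E}[T]=O(1)$. The Bonferroni choice of $\delta$ is exactly what makes this work.

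\textbf{Main obstacle.} The hard step is making the first case rigorous. The standard deviation for general $K$ is not given in the paper; only $K=0$ and $K=1$ have closed forms. I would first try bounding $\sigma(X_{d,m,K})$ uniformly by the range of the $x_k$, or by the $K=1$ formula. Failing that, I would redo the tail estimate with the sample size $\hat s$ of Lemma~\ref{lm:Asymptotic_Normality} so that the deviation probability is $O(\delta)$. A further subtlety is that the Gaussian tail is an approximation. I would treat Lemma~\ref{lm:Gaussian_Tail_Probability} as given, as the paper does.
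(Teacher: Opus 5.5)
Your tail-sum route is different from the paper's and can be completed. However, the ``large $r_f$'' case does not go through as written. There you leave a margin of only $\epsilon_0$ between the mean $(1-r_f)^2$ and the threshold $1-\epsilon_0$, while $\epsilon_0$ is calibrated to $\sigma(X_{d,m,0})$. Nothing in the paper controls $\sigma(X_{d,m,K})$ for general $K$. It is not even a function of $K$ alone: the value in Theorem~\ref{thm:StandardVar4OneSol} is the maximum over the position of the single solution. So bounding it by the $K=1$ formula is unfounded. The crude range bound gives variance up to about $0.72$ for $m=2$, against $\sigma(X_{d,m,0})^2\to\frac13$. Any $\sigma_K$ noticeably above $\sigma(X_{d,m,0})$ turns $\delta$ into roughly $\delta^{\sigma(X_{d,m,0})^2/\sigma_K^2}$. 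Your bound $\mathbb{E}[T]\le|V|\,\mathbb{P}(T\ge1)$ needs the per-instance deviation probability to be $O(1/|V|)$, so the argument breaks.

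The repair is to shrink the good case to $K\ge(\frac12-c)N$ for a small constant $c$; the rest still has probability at most $e^{-2c^2N}$. On that set the margin $1-\epsilon_0-(\frac12+c)^2$ is a constant (about $0.48$ for $m=2$, $c=0.05$). Then apply Hoeffding's inequality with the $d$-independent range $x_k\in(-\frac{4}{3\pi},\frac{4}{\pi})$, which follows from $\cot y<1/y$. For $m=2$, $\hat s\approx\frac{128}{9}\ln|V|$, and $s\ge\hat s$ makes the deviation probability about $|V|^{-2}$. Hence $\mathbb{E}[T]=o(1)$, without even needing $|V|\delta<1-L$. With the paper's cruder range $[-2,2]$ you would need $s$ a constant factor above $\hat s$.

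Two smaller fixes. First, $\epsilon_0=\sigma(X_{d,m,0})\,\Phi^{-1}(1-\frac{\delta}{2})/\sqrt{s}$ grows like $\sqrt{\log|V|/s}$, so it is bounded only because $s\ge\hat s$ (Lemma~\ref{lm:Asymptotic_Normality}), which gives $\epsilon_0\le\frac{\sqrt3}{8}\tan^2\frac{\pi}{2m}$. The condition $s\ge s'$ from Lemma~\ref{lm:sampling_sufficiency} alone allows $\epsilon_0$ near $\frac{1}{\sqrt3}$ for large $d$. Second, $1-\sqrt{1-2\epsilon_0}<\frac12$ requires $\epsilon_0<\frac38$, not $\epsilon_0<\frac12$.

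The paper argues differently. It ignores the sampling noise in $\mu_t$ and requires the \emph{expected} value $(1-r_t)^2$ to fall below $1-\epsilon_{max}$, with $\epsilon_{max}=\max_t(\epsilon_{0,t}+\epsilon_{1,t})$. It models $r_t=Ka^t/N$ with a uniform factor $a\in(1,2]$ and reads off $t_K=\lceil\log_a(N\epsilon_{max}/K)\rceil$. It then evaluates the binomial average at the peak $K=N/2$ to get $\log_a(2\epsilon_{max})$.

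Your route gains rigor exactly where the paper is heuristic:
\begin{itemize}
\item a Chernoff tail instead of evaluation at the peak;
\item an explicit account of landing in the uncertain region by chance;
\item no need for $a>1$, since Theorem~\ref{thm:Shannon_Expansion}(2) only gives $r_{f_{[v_i]}}\ge r_f$.
\end{itemize}
Both arguments tacitly need $s\gtrsim\hat s$; the paper hides this by calling $\epsilon_{max}$ a constant.

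What the paper's per-$K$ profile $t_K$ buys is a direct input to Theorem~\ref{thm:expected_exec_time}. Your step $\mathbb{P}(T\ge t)\le\mathbb{P}(T\ge1)$ would not survive the $2^t$ cost weighting there, since it yields order $2^{|V|}\mathbb{P}(T\ge1)$. To cover that as well, use that each $\mathtt{QEC}$ call draws fresh samples and that $r_{\mathcal{F}_t}\ge r_f$. Then $\mathbb{P}(T\ge t\mid f)\le p^t$ for typical $f$, with $p$ the per-iteration deviation bound above.
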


\vspace{-10px}
\begin{restatable}[]{thm}{ExpectedExecTime} \label{thm:expected_exec_time}
The \emph{expected} execution time complexity of \ourwork~is $s \cdot O(1)$.
\end{restatable}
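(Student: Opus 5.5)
We prove Theorem~\ref{thm:expected_exec_time} by combining Theorem~\ref{thm:expected_iterations} with a cost model for one iteration of Algorithm~\ref{alg:QSAT}, and then bounding the expectation directly. Let $T$ be the number of Shannon expansions performed, a random variable over the uniform random Boolean function model. Each iteration runs $\mathtt{QEC}$ with exactly $s$ samples. As argued in the complexity discussion, each sample in iteration $t$ executes the circuit $\Ddot{U}_{f_{[H]}}$ of Fig.~\ref{fig:Quantum_Circuit_Shannon_Expansion}, which consists of $2^t$ sub-formula blocks. Its depth, and hence the per-sample cost, is therefore at most $c\cdot 2^t$ for a constant $c$ depending on the base formula. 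The cost of iterations $0,\ldots,T$ is thus at most $s\sum_{t=0}^{T} c\,2^t \le 2cs\cdot 2^{T}$. So the goal reduces to showing $\mathbb{E}[2^T]=O(1)$. Note that $\mathbb{E}[T]=O(1)$ from Theorem~\ref{thm:expected_iterations} is not by itself sufficient, because the cost is exponential in $T$.

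To get this stronger moment bound, I would reuse the tail estimate that underlies Theorem~\ref{thm:expected_iterations}, rather than only its conclusion. I expect that proof to bound $\mathbb{P}(T\ge t)$ by the probability that $f$ has few solutions, i.e.\ a solution ratio small enough that $\mu$ falls in the uncertain region. Using Theorem~\ref{thm:IDMvsRatio}, $\widehat{I_{d,m,K}}=(1-r)^2$, the algorithm exits at Line~7 with high probability once $1-(1-r)^2$ exceeds $\epsilon_0$ plus the deviation allowed by Lemma~\ref{lm:Gaussian_Tail_Probability}. Under the uniform model, $K\sim\mathrm{Bin}(N,1/2)$ with $N=2^{|V|}$, so $r$ concentrates at $1/2$ with exponentially small tails, and $\mu_0\approx 1/4$ lies far below $1-\epsilon_0$ whenever $\epsilon_0<3/4$ (guaranteed by the choice $s\ge\hat{s}$ from Lemma~\ref{lm:Asymptotic_Normality}). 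By Theorem~\ref{thm:Shannon_Expansion}(2), each expansion at most doubles the ratio. Hence reaching iteration $t$ requires $r\le c'2^{-t}$ for a constant $c'$, except for the failure event of $\mathtt{QEC}$, whose probability is at most $\delta$ per iteration.

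The plan is then to write $\mathbb{E}[2^T]\le \sum_t 2^{t}\,\mathbb{P}(T\ge t)$ and split $\mathbb{P}(T\ge t)$ into two parts. The first is the combinatorial part $\mathbb{P}(K\le c'N2^{-t})$, which Chernoff/Hoeffding bounds by $\exp(-N(1/2-c'2^{-t})^2\cdot 2)$. For $t\ge 3$ (say) this is at most $e^{-\Omega(N)}$, so the summed contribution over $t\le|V|$ is $2^{|V|+1}e^{-\Omega(2^{|V|})}=o(1)$. The second is the sampling-failure part: for a satisfiable $f$ with $r$ near $1/2$, continuing requires $\mu_t>1-\epsilon_0$ while its mean is $1/4$. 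By the Gaussian tail (or by Hoeffding, since $X$ is bounded), this has probability $e^{-\Omega(s)}$ per iteration, independently across iterations. This yields a geometric tail $\mathbb{P}(T\ge t)\le p^t$ with $p=e^{-\Omega(s)}<1/2$ for $s\ge\hat{s}$, so $\sum_t 2^tp^t=O(1)$.

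The main obstacle is this second part: ensuring the per-iteration continuation probability is below $1/2$, so that it beats the doubling cost. I would first try to derive it from the explicit $\sigma(X_{d,m,0})$ of Theorem~\ref{thm:StandardVar4NoSol}, bounded by $\tan(\pi/2m)/\sqrt3$, together with the lower bound $\hat{s}$. These give a gap of at least a constant times $\sigma/\sqrt{s}$ multiples. If that gap is too tight, I would fall back on the fact that after the first expansion the ratio only grows, making the continuation probability monotonically smaller. Summing gives expected time $2cs\cdot O(1)=s\cdot O(1)$.
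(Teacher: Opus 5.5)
Your argument is sound in structure but takes a different, more careful route than the paper. The paper treats the number of expansions as a deterministic function $t_K$ of the solution count (via Eq.~\eqref{eq:IterationNum_t}, with a fixed amplification factor $a$). It writes $\mathbb{E}[T]=\sum_K\mathbb{P}(K)\,s\,t_K\,c\,2^{t_K}$ and evaluates this at the binomial peak $K=N/2$, asserting that the exponentially small binomial tails beat the cost $t_K2^{t_K}\le N\log_2 N$. The randomness of the $\mathtt{QEC}$ estimates never enters. You instead bound $\mathbb{E}[2^T]$ by a tail sum and split the tail into two events. The few-solutions event is handled by Chernoff, which is the rigorous form of the paper's ``tails vanish'' step. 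The sampling-failure event is handled by independence of the $\mathtt{QEC}$ calls and a geometric bound $p^t$. Your remark that $\mathbb{E}[T]=O(1)$ alone does not give $\mathbb{E}[2^T]=O(1)$ is correct, and the requirement $p<1/2$ is a genuine condition that the paper's deterministic model hides. It does hold. From $\sigma(X_{d,m,0})<\tan(\frac{\pi}{2m})/\sqrt3$ and $s\ge\hat s$ one gets $\epsilon_0\le\frac{\sqrt3}{8}\tan^2(\frac{\pi}{2m})\le\frac{\sqrt3}{8}$ uniformly in $d$, so the gap to the mean is a constant. Since $\hat s=\Theta(\log|V|)$, you get $p=e^{-\Omega(s)}\to 0$. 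The price is that your result is asymptotic in $|V|$ and assumes $s\ge\hat s$, which the paper's proof never states. The gain is an argument that actually accounts for measurement randomness instead of relying on a peak approximation.

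One inference must be fixed. From ``each expansion at most doubles the ratio'' you conclude that reaching iteration $t$ requires $r\le c'2^{-t}$. That is the wrong direction. The upper bound $r_{f_{[v_i]}}\le 2r_f$ of Theorem~\ref{thm:Shannon_Expansion}(2) can never force termination, and the ratio need not grow at all: if $f$ does not depend on $v_i$, then $f_{[v_i]}$ has the same ratio as $f$. So a formula with a small constant ratio, whose mean lies in the uncertain region, can reach iteration $t$ with $r_0$ far larger than $c'2^{-t}$. What you need is the lower bound $r_f\le r_{f_{[v_i]}}$. Fix $\rho^*<1/2$ with $(1-\rho^*)^2$ a constant below $1-\epsilon_0$. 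If $r_0\ge\rho^*$, every iteration's mean stays below the Line~7 threshold, hence $\mathbb{P}(T\ge t)\le\mathbb{P}(K<\rho^*N)+p^t$. The first term is $e^{-\Omega(N)}$ independently of $t$, and multiplied by $2^t\le N$ it still vanishes. So the $t$-dependent threshold was never needed, and the rest of your proof goes through; this is just the monotonicity you already invoke in your fallback.
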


If this increase in quantum circuit length due to Shannon expansion is a concern, the technique proposed in~\cite{LWC24} can be further adopted to process each sub-formula in parallel, reducing the exponentially growing length to a linear one, which is our future work. Another way to proactively avoid the worst case is to terminate Algorithm~\ref{alg:QSAT} earlier. If Algorithm~\ref{alg:QSAT} is terminated in $t$-th iteration, we could predict the result using the current $\mu_t$ based on its \emph{Bayes factor}~\cite{KR95}. Due to the page limit, we leave this approach in Appendix~\ref{apx:EarlyTermination}.

Finally, we discuss the extraction of the solution assignment. Given a formula $f$, \ourwork~returns the estimated solution ratio $r$ with at least the user-specified confidence level $L$. If a user only cares about whether $f$ has solutions or not, he/she can stop here. If a solution is required,
the estimated solution ratio $r$ fulfills the prerequisite of Grover's algorithm~\cite{Grover96}, i.e., Grover's algorithm can be applied to get the solution to $f$. We will not get into the details of it but summarize the steps as follows:
\begin{enumerate}
    \item Bring each qubit for the $|V|$ variables to the superposition state $\frac{1}{\sqrt{2}}(\ket{0} + \ket{1})$.
    \item Use $\Ddot{U}_f$ in Fig~\ref{subfig:UfDotDot}, as the \emph{oracle} required by Grover's algorithm to mark the solutions.
    \item Use the standard \emph{diffuser}~\cite{FS19,LWC24} to amplify the probability of solutions being measured.
    \item Repeat the Grover iteration (Step~$2$ followed by Step~$3$) for $\lfloor \frac{\pi}{4} \sqrt{\frac{1}{r}} \rfloor$ times.
    \item Measure the input qubits for the $|V|$ variables.
\end{enumerate}

Based on the theory of Grover's algorithm~\cite{Grover96}, if $f$ is satisfiable ($r \neq 0$), a solution will be measured with almost $100\%$ probability, provided that Grover iteration (Step~$4$) is repeated exactly $\lfloor \frac{\pi}{4} \sqrt{\frac{1}{r}} \rfloor$ times; otherwise, the probability will decrease (that is why $r$ is important to Grover's algorithm).
The user can then validate the solution easily by trying the assignment on $f$ to have $100\%$ confidence. If $f$ is unsatisfiable ($r=0$), each assignment could be measured with equal probability, but none of them will satisfy $f$. This constitutes a validation of the result of \ourwork. With the solution ratio $r$ given by \ourwork, the user can perform Grover's algorithm (Steps~$1$--$5$ mentioned above) for a number of rounds, say $10$. During the $10$ rounds, if a solution is measured, the user can stop and conclude $f$ is satisfiable. If no solutions are found during the $10$ rounds, the user can conclude $f$ is unsatisfiable.

\section{Evaluation} \label{sec:Evaluation}

We did not evaluate \ourwork~on real quantum computers because the noise 
on quantum computers is still significant. Instead, we evaluate our approach by simulations on the IBM Qiskit~\cite{GT19} platform. 
Experiments were conducted on a workstation with an Intel Xeon Gold 6226R CPU and 1.1~TiB RAM, running Ubuntu 22.04.3. \ourwork, is implemented in Python using the \texttt{AerSimulator} backend (Qiskit-Aer 0.17.2). 
The source code of \ourwork~and the benchmarking scripts are available at~\cite{SourceCode}.

We evaluate \ourwork~on formulas with $8$ variables. One may think that the number of variables is small, but simulation of qubits takes time. For $8$ Boolean variables, a simulation of $18$ qubits is required (Alice's part: $8$ qubits for variables and $1$ qubit for the formula result; Bob's part is the same). Our experiments took $10$ days to finish. In fact, the number of variables is not the main focus of our evaluation. The focus is to validate our theoretical claims: (a) Given a confidence level $L$ and a formula, \ourwork~is able to return the correct classification with at least $L$ confidence. (b) The expected number of Shannon expansions is indeed $O(1)$.


\begin{figure}[tb]
\begin{subfigure}[b]{0.45\textwidth}
\includegraphics[width=\linewidth]{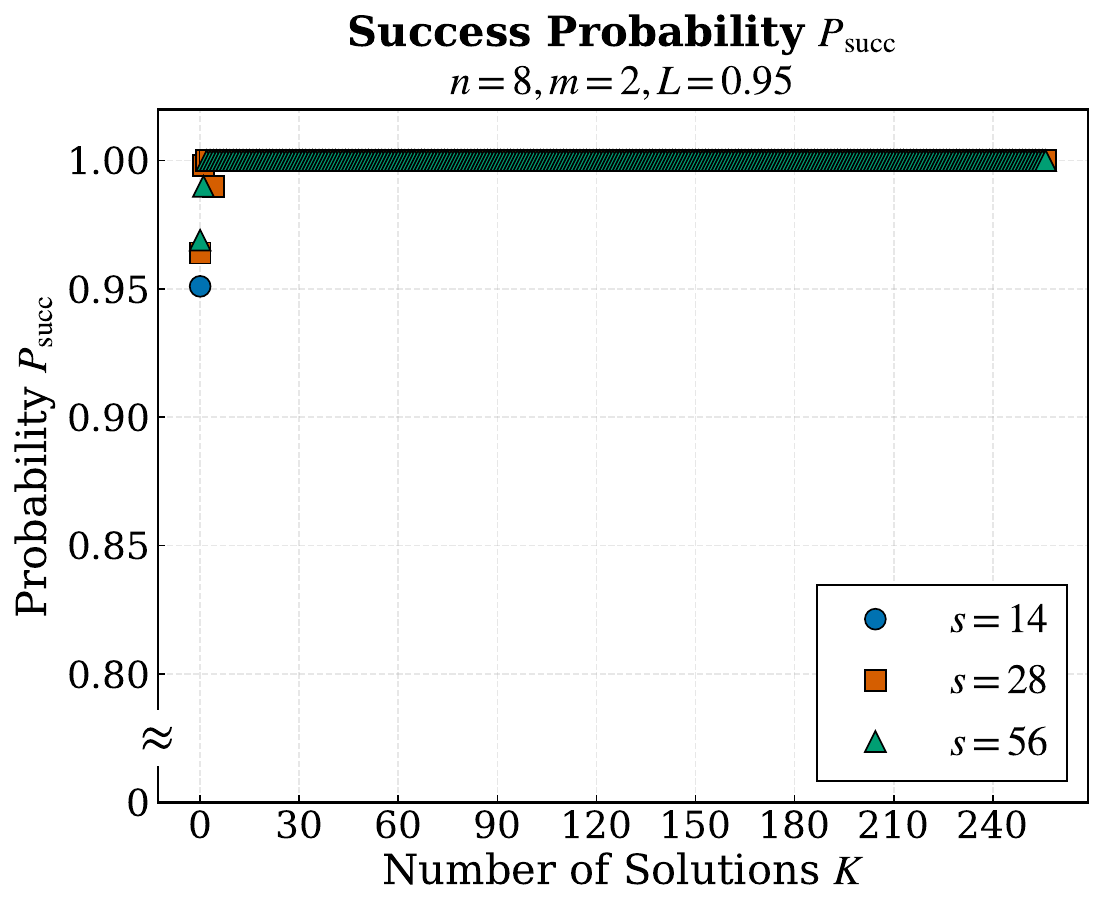}
\caption{Success Probability} \label{subfig:SuccessProbability}
\end{subfigure}
~
\begin{subfigure}[b]{0.49\textwidth}
\includegraphics[width=\linewidth]{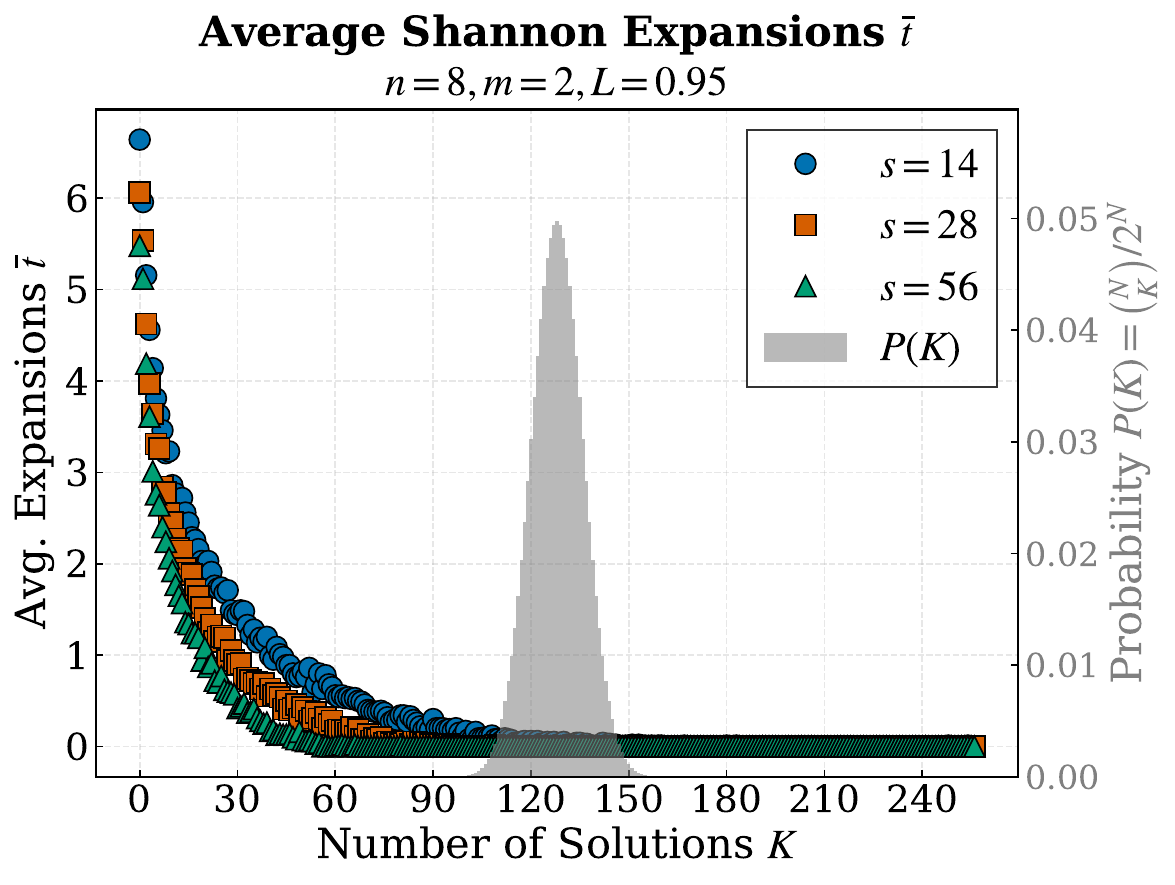}
\caption{Number of Shannon Expansions} \label{subfig:ShannonExpansion}
\end{subfigure}
\caption{Experimental Results of \ourwork}
\label{fig:Effectiveness_Experiments}
\vspace{-10px}
\end{figure}

The experimental results for validating Claim~{(a)} are shown in Fig.~\ref{subfig:SuccessProbability}. We randomly generated $1,000$ formulas (over $8$ variables) with $K$ solutions for each $K \in \{0, 1, \ldots, 256\}$, respectively. To simplify this process, we did not generate the actual Boolean formulas. Instead, we generated their corresponding matrices, which is easier. Based on the proof of Lemma~\ref{lm:FormulaTrace}, for a formula $f$ with $K$ solutions, its corresponding matrix can be obtained by constructing an identity matrix first and then putting a `$-1$' phase in $K$ of the diagonal elements. We use the standard setting for $m=2$ and $L=0.95$. Fig.~\ref{subfig:SuccessProbability} shows the success probability $\mathbb{P}_{\mathtt{suc}}$ for different $K$ and sampling number $s \in \{14, 28, 56\}$, where $\mathbb{P}_{\mathtt{suc}}$ is defined as the number of correct classifications given by \ourwork~out of the $1,000$ formulas.
Notice that $s=56$ meets the lower bound requirement in Lemma~\ref{lm:Asymptotic_Normality}. We did not go beyond $56$ samplings because the sampling number $s$ should be much smaller than $2^8 = 256$; otherwise, it becomes like an exhaustive search. We also consider the cases of $s=14$ and $s=28$ to evaluate the empirical robustness of our approach. 
The smallest sampling size in our experiments is $s=14$, which is greater than the number of sampling size $s'$, discussed in Section~\ref{sec:Discussion} to make sure that Algorithm~\ref{alg:QSAT} always gives a conclusive result. In the case of our experiments, $s' = 8$. That is, the sampling size $s \in \{14,28,56\}$ that we used in our experiments is big enough to make sure that Algorithm~\ref{alg:QSAT} always gives a conclusive result.
One can observe that the worst success probability occurs when $K$ is close to $0$. However, it is still higher than the confidence level $0.95$, which supports our Claim~{(a)}. If $K$ is more than $5$, the success probability is already able to achieve $100\%$. Also, the more samples are performed, the higher success probability there will be. 

The experimental results for validating Claim~{(b)} are shown in Fig.~\ref{subfig:ShannonExpansion}. The same setting applies, and $1,000$ formulas are randomly generated for each $K \in \{ 0,1,\ldots, 256 \}$, respectively. There are two parts shown in Fig.~\ref{subfig:ShannonExpansion}. The regular (left) $y$-axis shows the average number of Shannon expansions performed by \ourwork~among the $1,000$ formulas. The gray (right) $y$-axis shows the probability distributions for all $K$, which is plotted based on $\frac{\binom{N}{K}}{2^N}$, as discussed in Section~\ref{sec:Discussion}, where $N = 2^{8}$. One can observe that the maximum number of Shannon expansions occurs when $K$ is close to $0$, but its occurring probability is also close to $0$. We have also calculated, for $s=14$, the average number of Shannon expansions performed by \ourwork~among the $256$ cases is $0.51$, which supports our Claim~{(b)} that the expected number of Shannon expansions is $O(1)$. One can also find that the more samplings are performed, the fewer Shannon expansions are required. 

\section{Related Works} \label{sec:RelatedWorks}

Grover’s search algorithm~\cite{Grover96} achieves a quadratic speedup for finding a satisfying assignment, provided that at least one exists. In practice, however, the number of satisfying assignments must be known or estimated in advance to determine the optimal number of Grover iterations. Quantum counting~\cite{Mos98,BHMT02,AR21} addresses this requirement by combining amplitude amplification with quantum phase estimation (QPE) to approximate the solution count $K$. These approaches are therefore complementary: Quantum counting estimates $K$, while Grover’s algorithm uses that estimate to search for a solution. However, the reliance on QPE in quantum counting significantly increases the circuit depth and resource overhead. This limitation motivates the search for alternative solvers that can determine satisfiability without requiring a priori knowledge of $K$ or expensive estimation procedures.

Our work provides an alternative approach. Instead of searching for assignments, we statistically determine the satisfiability of the underlying formula, using entanglement-based correlation tests inspired by the quantum equivalence checking (QEC) protocol~\cite{SW22}. This allows unsatisfiable cases to be identified without running Grover’s algorithm, while still enabling an efficient estimation of $K$ when solutions exist.

\section{Conclusion and Future Work} \label{sec:Conclusion}

We present \ourwork, a quantum SAT solver based on entanglement and equivalence checking. Compared to state-of-the-art Grover-based quantum SAT solvers, our approach does not require prior knowledge of the number of solutions to the Boolean formula. We have proved that the \emph{expected} time complexity of \ourwork~is constant under a random Boolean function model. Our experiments also support our theoretical claim. For future work, we plan to adopt the technique in~\cite{LWC24} to improve the worst-case time complexity.



\clearpage


\bibliographystyle{plainurl}
\bibliography{reference}

@MISC{SourceCode,
  title = {\ourwork~Source Code},
  howpublished={\\ \url{https://github.com/n26124939/Quentangle-SAT}}
}

@software{GT19,
author       = {Gadi Aleksandrowicz and
                  Thomas Alexander and
                  Panagiotis Barkoutsos and others},
  title        = {{Qiskit: An Open-source Framework for Quantum 
                   Computing}},
  month        = jan,
  year         = 2019,
  publisher    = {Zenodo},
  version      = {0.7.2},
  doi          = {10.5281/zenodo.2562111},
  url          = {https://doi.org/10.5281/zenodo.2562111}
}

@InProceedings{LWC24,
author="Lin, Shang-Wei
and Wang, Tzu-Fan
and Chen, Yean-Ru
and Hou, Zhe
and San{\'a}n, David
and Teo, Yon Shin",
title="A Parallel and Distributed Quantum SAT Solver Based on Entanglement and Teleportation",
booktitle="Tools and Algorithms for the Construction and Analysis of Systems (TACAS)",
year="2024",
pages="363--382",
isbn="978-3-031-57249-4"
}

@InProceedings{Mos98,
author="Michele Mosca",
title="Quantum Searching, Counting and Amplitude Amplification by Eigenvector Analysis",
booktitle="International Workshop on Randomized Algorithms",
year="1998",
pages="90--100"
}

@book{Hefferon20,
  author = {Hefferon, Jim},
  year = {2020},
  title = {Linear Algebra},
  publisher = {},
  edition = {4}
}

@article{ADR82,
author = {Aspect, Alain and Dalibard, Jean and Roger, Gérard},
title = {Experimental Test of Bell's Inequalities Using Time-Varying Analyzers},
year = {1982},
volume = {1804},
number = {49},
journal = {Physical Review Letters},
pages = {},
}

@article{CHSH69,
author = {Clauser, John F. and Horne, Michael A. and Shimony, Abner and Holt, Richard A.},
title = {Proposed Experiment to Test Local Hidden-Variable Theories},
year = {1969},
volume = {880},
number = {23},
journal = {Physical Review Letters},
pages = {},
}

@article{Bell64,
author = {John Stewart Bell},
title = {On the {E}instein Podolsky Rosen Paradox},
year = {1964},
volume = {195},
number = {1},
journal = {Physics Physique Fizika},
pages = {},
}

@article{SW22,
author = {Sun, Weixiao and Wei, Zhaohui},
title = {Equivalence Checking of Quantum Circuits by Nonlocality},
year = {2022},
volume = {8},
number = {139},
doi = {https://doi.org/10.1038/s41534-022-00652-x},
journal = {npj Quantum Information},
pages = {},
}

@article{SATW17,
author = {Salavrakos, Alexia and Augusiak, Remigiusz and Tura, Jordi and Wittek, Peter and  Ac\'{i}n, Antonio and  Pironio, Stefano},
title = {Bell Inequalities Tailored to Maximally Entangled States},
year = {2017},
volume = {119},
number = {040402},
doi = {https://doi.org/10.1103/PhysRevLett.119.040402},
journal = {Physical Review Letters},
pages = {},
}

@article{LCWC23,
author = {Lin, Shang-Wei and Chen, Si-Han and Wang, Tzu-Fan and Chen, Yean-Ru},
title = {A Quantum {SMT} Solver for Bit-Vector Theory},
year = {2023},
volume = {arXiv:2303.09353},
number = {},
doi = {arXiv:2303.09353},
journal = {arXiv preprint},
pages = {},
url = {https://doi.org/10.48550/arXiv.2303.09353}
}

@article{FS19,
author = {Fernandes, Diogo and Silva, Carla and Dutra, In\^{e}s},
title = {Using Grover's Search Quantum Algorithm to Solve Boolean Satisfiability Problems, Part 2},
year = {2019},
issue_date = {Winter 2019},
publisher = {Association for Computing Machinery},
address = {New York, NY, USA},
volume = {26},
number = {2},
issn = {1528-4972},
url = {https://doi.org/10.1145/3368085},
doi = {10.1145/3368085},
journal = {XRDS},
month = {nov},
pages = {68–71},
numpages = {4}
}

@inproceedings{Grover96,
  title={A fast quantum mechanical algorithm for database search},
  author={Grover, Lov K},
  booktitle={Proceedings of the 28th Annual ACM Symposium on Theory of Computing (STOC)},
  pages={212--219},
  year={1996}
}

@article{BHMT02,
  title={Quantum amplitude amplification and estimation},
  author={Brassard, Gilles and Hoyer, Peter and Mosca, Michele and Tapp, Alain},
  journal={arXiv preprint quant-ph/0005055},
  year={2000}
}

@inproceedings{AR21,
  title={Quantum approximate counting, simplified},
  author={Aaronson, Scott and Rall, Patrick},
  booktitle={Symposium on simplicity in algorithms},
  pages={24--32},
  year={2020},
  organization={SIAM}
}

@book{GR14,
  author    = {Gradshteyn, Izrail Solomonovich and Ryzhik, Iosif Moiseevich},
  title     = {Table of Integrals, Series, and Products},
  year      = {2014},
  publisher = {Academic Press},
  edition   = {8th},
  address   = {Burlington, MA}
}

@article{Shannon49,
  title={The synthesis of two-terminal switching circuits},
  author={Shannon, Claude E},
  journal={Bell System Technical Journal},
  volume={28},
  number={1},
  pages={59--98},
  year={1949}
}

@article{Hoeffding63,
  title={Probability inequalities for sums of bounded random variables},
  author={Hoeffding, Wassily},
  journal={Journal of the American Statistical Association},
  volume={58},
  number={301},
  pages={13--30},
  year={1963},
  publisher={Taylor \& Francis}
}

@article{Bonferroni36,
  title={Teoria statistica delle classi e calcolo delle probabilit{\`a}},
  author={Bonferroni, Carlo E},
  journal={Pubblicazioni del R Istituto Superiore di Scienze Economiche e Commerciali di Firenze},
  volume={8},
  pages={3--62},
  year={1936}
}

@book{Wasserman04,
  title={All of Statistics: A Concise Course in Statistical Inference},
  author={Wasserman, Larry},
  year={2004},
  publisher={Springer},
  address={New York},
  note={Chapter 5, Theorem 5.8}
}

@book{BLM13,
    author = {Boucheron, St{\'e}phane and Lugosi, G{\'a}bor and Massart, Pascal},
    title = {Concentration Inequalities: A Nonasymptotic Theory of Independence},
    publisher = {Oxford University Press},
    year = {2013},
    month = {02},
    isbn = {9780199535255},
    doi = {10.1093/acprof:oso/9780199535255.001.0001},
    url = {https://doi.org/10.1093/acprof:oso/9780199535255.001.0001}
}

@article{KR95,
  title={Bayes factors},
  author={Kass, Robert E and Raftery, Adrian E},
  journal={Journal of the American statistical association},
  volume={90},
  number={430},
  pages={773--795},
  year={1995},
  publisher={Taylor \& Francis}
}

@article{Benjamini02,
    title={John W. Tukey's contributions to multiple comparisons},
    author={Benjamini, Yoav and Braun, Henry},
    journal={Annals of Statistics},
    pages={1576--1594},
    year={2002},
    publisher={JSTOR}
}

\clearpage
\appendix
\section*{APPENDIX} \label{sec:Appendix}



\section{Lemmas and Their Proofs} \label{sec:UsfulLemmas}

\FormulaTrace*

\begin{proof}
First of all, let us consider the formula $f$ with no solution. By Theorem~\ref{thm:FalseIdentity}, $\Ddot{U}_f$ is the identity matrix and $\mathsf{Tr}(\Ddot{U}_f) = 2^{n+1}$. Each `$1$' in the diagonal position indicates that $\Ddot{U}_f(\ket{\vec{v}}\ket{F}) = \ket{\vec{v}} \ket{F}$ for $\vec{v} \in \{ 0,1\}^n $ and $F \in \{0,1\}$.
Suppose that we want to add one solution $\vec{v} \in \{0,1\}^n$ to formula $f$, which changes $f(\vec{v})$ from $0$ to $1$ now. By Theorem~\ref{thm:FalseIdentity}, we have $\Ddot{U}_f(\ket{\vec{v}}\ket{1}) = (-1)^{f(\vec{v}) \wedge 1} \ket{\vec{v}} \ket{1} = - \ket{\vec{v}} \ket{1}$. That is to say, the `$1$' in the diagonal position corresponding to $\ket{\vec{v}} \ket{1}$ changes from $1$ to $-1$, which decreases $\mathsf{Tr}(\Ddot{U}_f)$ by $2$. Thus, if we add $K$ solutions to $f$, $\mathsf{Tr}(\Ddot{U}_f) = 2^{n+1} - 2K$.
\end{proof}

\IDMRandomVar*

\begin{proof}
We have $\mathbb{P}\left(\widetilde{(A^1_i)_{\ket{\phi}}} = \widetilde{(B^1_i)_{\ket{\phi}}} + k \right) = \mathbb{P}\left( \widetilde{(B^1_i)_{\ket{\phi}}} = \widetilde{(A^1_{i+1})_{\ket{\phi}}} + k \right)$ for $i \in \{1,2,\ldots, m\}$ based on Eq.~{A$6$} in \cite{SATW17}. If we recall the definition of $\widehat{I_{d,m,K}}(\ket{\phi})$ and rewrite it using the above condition, we have
\[
\begin{split}
\widehat{I_{d,m,K}}(\ket{\phi}) &= \frac{1}{m} \sum_{k=0}^{d-1} \sum_{i=1}^{m} \alpha_k 
    \left[
        \mathbb{P}\left( \widetilde{(A^1_i)_{\ket{\phi}}} = \widetilde{(B^1_i)_{\ket{\phi}}} + k \right) + \mathbb{P} \left( \widetilde{(B^1_i)_{\ket{\phi}}} = \widetilde{(A^1_{i+1})_{\ket{\phi}}} + k \right)
    \right] \\
&= \sum_{k=0}^{d-1} \sum_{i=1}^{m} \frac{\alpha_k }{m} 
    \left[
        2 \cdot \mathbb{P}\left( \widetilde{(A^1_i)_{\ket{\phi}}} = \widetilde{(B^1_i)_{\ket{\phi}}} + k \right)
    \right] \\
&= \sum_{k=0}^{d-1} 2 \alpha_k 
\left[
\frac{1 }{m}
\sum_{i=1}^{m}      
        \mathbb{P}\left( \widetilde{(A^1_i)_{\ket{\phi}}} = \widetilde{(B^1_i)_{\ket{\phi}}} + k \right)
    \right] = \sum_{k=0}^{d-1} x_k \cdot \mathbb{P}(x_k) = \mathbb{E}(X_{d,m,K})
\end{split}
\]
\end{proof}

\AsymptoticNormality*

\begin{proof}
According to the Bernstein inequality~\cite[Corollary~2.11]{BLM13}, for a sum $S = \sum_{i=1}^s (X_i - \mu)$ where $X_i - \mu \le b$, the tail probability is bounded by:
\begin{equation}
    \mathbb{P}(S \ge t) \le \exp\left( - \frac{t^2}{2(v + ct)} \right),
\end{equation}
where $v = s\sigma^2$ and $c = b/3$. By setting $t = s\epsilon$ and considering the two-tailed case, we obtain:
\begin{equation}
    \mathbb{P}(|\overline{X} - \mu| \ge \epsilon) \le 2 \exp\left( - \frac{s\epsilon^2}{2\sigma^2 + \frac{2}{3}b\epsilon} \right).
\end{equation}

In Algorithm~\ref{alg:QEC}, each measurement outcome $X_t$ is defined as $2\alpha_k$. Since $|\alpha_k| \le 1$ as established in~\cite{SW22}, we have $X_{d,m,K} \in [-2, 2]$. To strictly bound the deviation $|X_t - \mu|$ for any potential mean $\mu \in [-2, 2]$, we set $b=4$. The transition to the variance-dominated Gaussian regime occurs when $2\sigma^2$ dominates the correction term $\frac{2}{3}b\epsilon$, leading to the balance criterion $\sigma^2 \approx \frac{4}{3}\epsilon$.

Given that $\epsilon = \frac{\sigma}{\sqrt{s}} \Phi^{-1}(1 - \delta/2)$ for a confidence level $1-\delta$, substituting $\epsilon$ into the criterion yields the required sample size for Gaussian convergence:
\begin{equation}
    \sigma^2 \approx \frac{4}{3} \cdot \frac{\sigma}{\sqrt{s}} \Phi^{-1}\left(1 - \frac{\delta}{2}\right) \implies s \ge \frac{16}{9 \sigma^2} \left[\Phi^{-1}\left(1 - \frac{\delta}{2}\right)\right]^2. \label{eq:s_sigma_inverse_relation}
\end{equation}

As Eq.~\eqref{eq:s_sigma_inverse_relation} indicates, $s$ is inversely proportional to the variance $\sigma^2$. This implies that cases with smaller variances require more samples to reach the Gaussian regime. We therefore compare the variances for the UNSAT ($K=0$) and SAT ($K=1$) cases. According to Theorem~\ref{thm:StandardVar4NoSol} and Theorem~\ref{thm:StandardVar4OneSol}:
\[
    \sigma^2(X_{d,m,1}) = \sigma^2(X_{d,m,0}) + \frac{4(d-1)(d-2)^2}{d^4}.
\]
Because $4(d-1)(d-2)^2/d^4 \ge 0$ for all $d \ge 2$, it follows that $\sigma^2(X_{d,m,1}) \ge \sigma^2(X_{d,m,0})$. Furthermore, the variance for the $K=0$ case is given by:
\[
    \sigma^2(X_{d,m,0}) = \frac{1}{3}\left( 1 - \frac{1}{d^2} \right) \tan^2\left(\frac{\pi}{2m}\right).
\]
Since $\sigma^2(X_{d,m,0})$ is strictly increasing with respect to $d$, the global minimum variance occurs at $d=2$. Substituting $\sigma^2(X_{2,m,0})$ into Eq.~\eqref{eq:s_sigma_inverse_relation} concludes the proof.
\end{proof}

\SamplingSufficiency*

\begin{proof}
Let $\delta = \frac{1-L}{|V|+1}$. In the $(|V|+1)$-th iteration of Algorithm~\ref{alg:QSAT}, we have $W = \emptyset$, $d = 2$, and $H = V$. In this case, Theorem~\ref{thm:Shannon_Expansion} tells us that $f_{[V]}$ is either $f_\bot$ or $f_\top$. As we want to have a conclusive result in this iteration, we need to make sure the uncertain region is empty, as illustrated in Fig.~\ref{subfig:decision_boundary-B}. It is easy to verify that $\sigma(X_{2,m,0}) = \sigma(X_{2,m,1}) = \frac{1}{2} \tan\left(\frac{\pi}{2m}\right)$ based on Theorems~\ref{thm:StandardVar4NoSol} and~\ref{thm:StandardVar4OneSol}. Thus, given the same number of samples $s$ and error rate $\delta$, the error range $\epsilon_0$ and $\epsilon_1$ would be the same. To make the uncertain region empty, it is required that $\epsilon_0 = \epsilon_1 = \frac{1}{2}$. 
By solving the following equation, we can obtain the lower bound of the sampling size $s'$ to make the uncertain region empty:
\begin{align}
    2 \left( 1 - \Phi\left( \frac{\frac{1}{2} \sqrt{s'}}{\sigma(X_{2,m,0})} \right) \right) = \delta &\implies 1 - \Phi\left( \frac{\sqrt{s'}}{2\sigma(X_{2,m,0})} \right) = \frac{\delta}{2} \nonumber \\
    &\implies \frac{\sqrt{s'}}{2\sigma(X_{2,m,0})} = \Phi^{-1}\left( 1 - \frac{\delta}{2} \right) \nonumber \\
    &\implies s' = 4 \sigma^2(X_{2,m,0}) \left[ \Phi^{-1}\left( 1 - \frac{\delta}{2} \right) \right]^2.
\end{align}
In contrast, the threshold $\hat{s}$ for asymptotic normality from Lemma~\ref{lm:Asymptotic_Normality} is:
\begin{equation}
    \hat{s} \ge \frac{16}{9 \sigma^2(X_{2,m,0})} \left[ \Phi^{-1}\left( 1 - \frac{\delta}{2} \right) \right]^2.
\end{equation}
The ratio $R$ between the theoretical requirement and the decision requirement is:
\begin{equation}
    R = \frac{\hat{s}}{s'} = \frac{ \frac{16}{9 \sigma^2(X_{2,m,0})} \left[ \Phi^{-1}\left( 1 - \frac{\delta}{2} \right) \right]^2 }{ 4 \sigma^2(X_{2,m,0}) \left[ \Phi^{-1}\left( 1 - \frac{\delta}{2} \right) \right]^2 } = \frac{4}{9 \sigma^4(X_{2,m,0})}.
\end{equation}
By substituting $d=2$ into the standard deviation formula from Theorem~\ref{thm:StandardVar4NoSol}, we obtain:
\begin{equation}
    \sigma(X_{2,m,0}) = \frac{1}{2} \sqrt{\frac{2^2 - 1}{3}} \tan\left( \frac{\pi}{2m} \right) = \frac{1}{2} \tan\left( \frac{\pi}{2m} \right) \implies \sigma^2(X_{2,m,0}) = \frac{1}{4} \tan^2\left( \frac{\pi}{2m} \right).
\end{equation}
The condition $\hat{s} \ge s'$ holds if $R \ge 1$, which requires $\sigma^2(X_{2,m,0}) \le 2/3 \approx 0.667$. Since $\tan(\frac{\pi}{2m})$ strictly decreases as $m$ increases, the variance is maximized at $m=2$:
\begin{equation}
    \sigma^2(X_{2,2,0}) = \frac{1}{4} \tan^2\left( \frac{\pi}{4} \right) = 0.25.
\end{equation}
Given that the worst-case variance $0.25$ is less than $0.667$, the ratio $R \approx 7.11 > 1$ is maintained for all $m \ge 2$. Thus, the sample size required for theoretical normality always ensures a sufficient sample size for a definitive SAT/UNSAT classification.
\end{proof}

\begin{restatable}[Maximal Variance Configuration for $K=1$]{lm}{MaxVarianceKOne} \label{lemma:max_variance_K1} ~\\
For the random variable $X_{d,m,1}$, the variance is maximized when the solution index $q^*$ is located at the boundary of the search space ($q^*=0$ or $q^*=d-1$).
\end{restatable}

\begin{proof}
Since the expected value $\mathbb{E}[X_{d,m,1}]$ is constant for a fixed number of solutions, maximizing the variance is equivalent to maximizing the second moment $\mathbb{E}[X_{d,m,1}^2]$.
From Lemma~\ref{lm:general_second_moment}, the second moment is a weighted sum involving $(2\alpha_k)^2$.
Due to the singularity of $\cot(x)$ as $x \to 0$, the squared coefficient $\alpha_0^2$ dominates the sum (scaling as $O(1)$ versus $O(m^{-2})$ for $k \ge 1$). Thus, the contribution from $k \neq 0$ is negligible, and it suffices to maximize $\mathbb{P}(x_0)$.

Consider a single solution at index $q^*$. We set $\gamma_{q^*} = -1/\sqrt{d}$ and $\gamma_q = 1/\sqrt{d}$ for $q \neq q^*$. Let $r = \omega^{\frac{1}{2m}}$, the probability term $\mathbb{P}(x_0)$ is expressed as:
\begin{equation}
    \mathbb{P}(x_0) = d \left| \frac{1}{d} \left( \sum_{q=0}^{d-1} \frac{1}{\sqrt{d}} r^q - \frac{2}{\sqrt{d}} r^{q^*} \right) \right|^2 = \frac{1}{d^2} \left| \sum_{q=0}^{d-1} r^q - 2r^{q^*} \right|^2.
\end{equation}

We explicitly evaluate the phases of these two terms. For the summation term:
\begin{equation}
\begin{split}
    \sum_{q=0}^{d-1} r^q &= \frac{1 - r^d}{1 - r} = \frac{1 - (\omega^{\frac{1}{2m}})^d}{1 - \omega^{\frac{1}{2m}}} = \frac{1 - (e^{i \frac{2\pi}{d}})^{\frac{d}{2m}}}{1 - (e^{i \frac{2\pi}{d}})^{\frac{1}{2m}}} = \frac{1 - e^{i \frac{\pi}{m}}}{1 - e^{i \frac{\pi}{dm}}} \\
    &= \frac{e^{i \frac{\pi}{2m}} \left( e^{-i \frac{\pi}{2m}} - e^{i \frac{\pi}{2m}} \right)}{e^{i \frac{\pi}{2dm}} \left( e^{-i \frac{\pi}{2dm}} - e^{i \frac{\pi}{2dm}} \right)} = \frac{e^{i \frac{\pi}{2m}} \left( -2i \sin(\frac{\pi}{2m}) \right)}{e^{i \frac{\pi}{2dm}} \left( -2i \sin(\frac{\pi}{2dm}) \right)} \\
    &= e^{i \left( \frac{\pi}{2m} - \frac{\pi}{2dm} \right)} \cdot \frac{\sin(\frac{\pi}{2m})}{\sin(\frac{\pi}{2dm})} = e^{i \frac{\pi(d-1)}{2dm}} \cdot \frac{\sin(\frac{\pi}{2m})}{\sin(\frac{\pi}{2dm})}.
\end{split}
\end{equation}

For the solution term:
\begin{equation}
\begin{split}
    2r^{q^*} &= 2 \left( \omega^{\frac{1}{2m}} \right)^{q^*} = 2 \left( \left( e^{i \frac{2\pi}{d}} \right)^{\frac{1}{2m}} \right)^{q^*} \\
    &= 2 \left( e^{i \frac{2\pi}{d} \cdot \frac{1}{2m}} \right)^{q^*}  = 2 \left( e^{i \frac{\pi}{dm}} \right)^{q^*} \\
    &= 2 e^{i \frac{\pi q^*}{dm}}.
\end{split}
\end{equation}

The probability $\mathbb{P}(x_0)$ is maximized when the magnitude of the difference between these two vectors is maximized. Geometrically, this occurs when the phase difference $\Delta \phi$ is largest:
\begin{equation}
    \Delta \phi = \left| \frac{\pi(d-1)}{2dm} - \frac{\pi q^*}{dm} \right| = \frac{\pi}{dm} \left| \frac{d-1}{2} - q^* \right|.
\end{equation}
This difference is maximal when the distance $| \frac{d-1}{2} - q^* |$ is largest, i.e., when $q^*$ is at the boundaries $0$ or $d-1$.
\end{proof}

\begin{restatable}[Hoeﬀding’s Inequality~\cite{Hoeffding63}]{lm}{HoeﬀdingInequality} \label{lm:Hoeﬀding_Inequality}
Suppose that $X_1, X_2, \ldots, X_n$ are $n$ independent identical random variables such that $a \leq X_i \leq b$ and $\mathbb{E}(X_i) = \mu$ for $i \in \{1, 2, \ldots, n\}$.  Let $\overline{X} = n^{-1} \sum_i^{n} X_i$ be the sample mean of $X_i$. For any $\epsilon > 0$, we have
$\mathbb{P}(\mid \overline{X} - \mu \mid \geq \epsilon) \, \leq \, \delta$ , where $\delta = 2 \exp\left(- \frac{2 n\epsilon^2}{(b-a)^2}\right)$.
\end{restatable}

\begin{restatable}[Normal Approximation to Sample Mean]{lm}{GaussianTailProbability} 
\label{lm:Gaussian_Tail_Probability}
Suppose that $X_1, X_2, \ldots, X_n$ are $n$ independent and identically distributed random variables with expectation $\mathbb{E}[X_i] = \mu$ and finite variance $\mathtt{Var}(X_i) = \sigma^2$. Let $\overline{X} = n^{-1} \sum_{i=1}^{n} X_i$ be the sample mean. If the same size $n$ is sufficiently large, then for any $\epsilon > 0$, we have
\begin{equation}
\mathbb{P}\left( | \overline{X} - \mu | \ge \epsilon \right) \approx \delta, \,\, \mbox{ where } \,\, \delta = 2 \left( 1 - \Phi\left( \frac{\epsilon \sqrt{n}}{\sigma} \right) \right). \nonumber
\end{equation}
Here, $\Phi$ is the cumulative distribution function (CDF) of the standard normal distribution, defined as: $\Phi(x) = \frac{1}{\sqrt{2\pi}} \int_{-\infty}^{x} (e^{-t^2/2}) dt$.
\end{restatable}

\begin{proof}
By the Central Limit Theorem (Theorem 5.8 in \cite{Wasserman04}), the standardized sample mean $Z_n$ converges in distribution to a standard normal distribution:
\begin{equation}
Z_n = \frac{\overline{X} - \mu}{\sigma / \sqrt{n}} \rightsquigarrow N(0, 1) \nonumber
\end{equation}
To calculate the tail probability $\mathbb{P}(\mid \overline{X} - \mu \mid \ge \epsilon)$, we normalize the expression:
\begin{align}
\mathbb{P}\left( \mid \overline{X} - \mu \mid \ge \epsilon \right) &= \mathbb{P}\left( \frac{\mid \overline{X} - \mu \mid}{\sigma / \sqrt{n}} \ge \frac{\epsilon}{\sigma / \sqrt{n}} \right) = \mathbb{P}\left( \mid Z_n \mid \ge \frac{\epsilon \sqrt{n}}{\sigma} \right) \nonumber
\end{align}
For large $n$, using the symmetry of the standard normal distribution $N(0, 1)$:
\begin{align}
\mathbb{P}\left( \mid Z_n \mid \ge \frac{\epsilon \sqrt{n}}{\sigma} \right) &\approx 2 \cdot \mathbb{P}\left( Z \ge \frac{\epsilon \sqrt{n}}{\sigma} \right) = 2 \left[ 1 - \mathbb{P}\left( Z \le \frac{\epsilon \sqrt{n}}{\sigma} \right) \right] = 2 \left[ 1 - \Phi\left( \frac{\epsilon \sqrt{n}}{\sigma} \right) \right] \nonumber
\end{align}
\end{proof}

\begin{restatable}[]{lm}{IDMM} \label{lm:IDM-M}
Let $\mathcal{S}$ be the subspace generated by the basis $\{ \ket{j}\ket{j} \}_d$, where $d = 2^n$ and $j \in \{0, 1, \ldots, 2^n  -1 \}$. Let $\ket{\phi} \in \mathcal{S}$ be any state orthogonal to $\ket{\Phi_d}$. We have 
$
I_{d,m}(\ket{\phi}) = -m.
$
\end{restatable}
\begin{proof}
Let us recall that $A_x^l = \sum_{a=0}^{d-1} \omega^{ a \cdot l} |a\rangle_{xx}\langle a|$ and $\overline{B}_x^l = (A_x^l)^*$, where 
\[ 
\ket{a}_x = \frac{1}{\sqrt{d}} \sum_{k=0}^{d-1} \omega^{ k \cdot(a - \alpha_x)} |k\rangle.  
\]
Then, we can rewrite $A_x^l \otimes \overline{B}_x^l$ as follows:
\[
A_x^l \otimes \overline{B}_x^l = \sum_{a = 0}^{d-1} \sum_{b=0}^{d-1} \omega^{l (a-b)} \left( \ket{a}_{xx} \! \bra{a} \right) \otimes \left( \ket{b}_{xx} \! \bra{b}^*\right).
\]
And, $\ket{b}_x^*$ can be obtained as follows:
\[ 
\ket{b}_x^* = \frac{1}{\sqrt{d}} \sum_{k=0}^{d-1} \omega^{ -k \cdot(b - \alpha_x)} |k\rangle.  
\]
For the measurement round $x$, let us consider an arbitrary state $\ket{\phi}$ represented by the basis $\{ \ket{a}_x \ket{b}_x^* \}_d$. That is, $\ket{\phi} = \sum_{a=0}^{d-1} \sum_{b=0}^{d-1} h_{ab,x} \ket{a}_x \ket{b}_x^*$ , where $h_{ab,x} \in \mathbb{C}$ and $\sum_{a=0}^{d-1} \sum_{b=0}^{d-1} |h_{ab,x}|^2 = 1$. We have the followings.
\[
\bra{\phi} A_x^l \otimes \overline{B}_x^l \ket{\phi} = \sum_{a=0}^{d-1} \sum_{b=0}^{d-1} |h_{ab,x}|^2 \omega^{l (a-b)}
\]
\[
\begin{split}
\sum_{l=1}^{d-1} \bra{\phi} A_x^l \otimes \overline{B}_x^l \ket{\phi} &= \sum_{l=1}^{d-1} \sum_{a=0}^{d-1} \sum_{b=0}^{d-1} |h_{ab,x}|^2 \omega^{l (a-b)}
 = \sum_{a=0}^{d-1} |h_{aa,x}|^2 (d-1) + \sum_{a \neq b} |h_{ab,x}|^2 (-1) \\
 &= d \sum_{a=0}^{d-1} |h_{aa,x}|^2 - \left( \sum_{a=0}^{d-1} |h_{aa,x}|^2 + \sum_{a \neq b} |h_{ab,x}|^2 \right)
 = d \left( \sum_{a=0}^{d-1} |h_{aa,x}|^2 \right) - 1
\end{split}
\]
Therefore, we can rewrite $I_{d,m}(\ket{\phi})$, as follows:
\[
I_{d,m}(\ket{\phi}) = \sum_{x=1}^{m} \sum_{l=1}^{d-1} \bra{\phi} A_x^l \otimes \overline{B}_x^l \ket{\phi} = \sum_{x=1}^{m} \left( d \sum_{a=0}^{d-1} |h_{aa,x}|^2 - 1 \right) = d \left( \sum_{x=1}^{m} \sum_{a=0}^{d-1} |h_{aa,x}|^2 \right) - m
\]
Now, let us rewrite $\ket{\phi}$ on the basis $\{ \ket{j} \ket{j} \}_d$. That is, $\ket{\phi} = \sum_{k=0}^{d-1} \sum_{j=0}^{d-1} \gamma_{kj} \ket{k} \ket{j}$. Since $\ket{\phi} \in \mathcal{S}$, we have Eq.~\ref{eq:subspace}.
\begin{equation} \label{eq:subspace}
\gamma_{kj} = 0, \mbox{ for all } k,j \in \{0, 1, \ldots, d - 1\} \mbox{ and } k \neq j
\end{equation}
Furthermore, $\ket{\phi}$ is orthogonal to $\ket{\Phi_d}$, we have Eq.~\ref{eq:orthogonal}.
\begin{equation} \label{eq:orthogonal}
\sum_{k=0}^{d-1} \gamma_{kk} = 0
\end{equation}

Now, let us try to figure out the relation between $h_{aa,x}$ and $\gamma_{kj}$. Since
\[
\ket{a}_x \ket{a}_x^* = \frac{1}{d} \sum_{k=0}^{d-1} \sum_{j=0}^{d-1} \omega^{ (k-j) \cdot(a - \alpha_x)} \ket{k} \ket{j},
\]
we can obtain their relation, as follows.
\[
h_{aa,x} = \frac{1}{d} \sum_{k=0}^{d-1} \sum_{j=0}^{d-1} \omega^{ (k-j) \cdot(a - \alpha_x)} \cdot \gamma_{kj}
\]
Because $\ket{\phi} \in \mathcal{S}$ is orthogonal to $\ket{\Phi_d}$, we can use Eq.~\ref{eq:subspace} and Eq.~\ref{eq:orthogonal} to conclude
\[
h_{aa,x} = \frac{1}{d} \sum_{k=0}^{d-1} \omega^0 \cdot \gamma_{kk} = \frac{1}{d} \left(  \sum_{k=0}^{d-1} \gamma_{kk} \right) = 0.
\]
Thus, we prove that
\[
I_{d,m}(\ket{\phi}) = d \left( \sum_{x=1}^{m} \sum_{a=0}^{d-1} |h_{aa,x}|^2 \right) - m = d \times 0 - m =  -m.
\]
\end{proof}

\begin{restatable}[Generalized Second Moment]{lm}{GeneralSecondMoment} \label{lm:general_second_moment} ~\\
Let $\ket{\phi} = \sum_{q=0}^{d-1} \gamma_q \ket{q}\ket{q}$ be a quantum state. Consider the random variable $X_{d,m,K}$ whose expected value is $\widehat{I_{d,m,K}}(\ket{\phi})$. The second moment is:
\begin{equation} \label{eq:general_E_X2}
    \mathbb{E}(X_{d,m,K}^2) = \sum_{k=0}^{d-1} (2\alpha_k)^2 \cdot \mathbb{P}(x_k),
\end{equation}
with the probability $\mathbb{P}(x_k)$ defined as follows, where $\omega = e^{\frac{2\pi \mathbf{i}}{d}}$
\begin{equation}
    \mathbb{P}(x_k) = d \left| \frac{1}{d} \sum_{q = 0}^{d-1} \gamma_q \cdot \omega^{q (k + \frac{1}{2m})} \right|^2.
\end{equation}
\end{restatable}

\begin{proof}
From Lemma~\ref{lm:IDM_RandomVar}, we have established that the random variable $X_{d,m,K}$ takes the values $x_k = 2\alpha_k$ with probability $\mathbb{P}(x_k)$. By the definition of the second moment, it follows directly that:
\begin{equation}
    \mathbb{E}(X_{d,m,K}^2) = \sum_{k=0}^{d-1} x_k^2 \cdot \mathbb{P}(x_k) = \sum_{k=0}^{d-1} (2 \alpha_k)^2 \cdot \mathbb{P}(x_k).
\end{equation}

We derive the explicit form of $\mathbb{P}(x_k)$. With $\omega = e^{2\pi \mathbf{i} / d}$ and the measurement settings $\theta_x = \frac{x-1/2}{m}, \zeta_y = \frac{y}{m}$, the outcome probabilities follow:
\begin{align}
    \mathbb{P}\left(\widetilde{(A^1_x)_{\ket{\phi}}} = a, \widetilde{(B^1_y)_{\ket{\phi}}} = b \right) &= \left| \frac{1}{d} \sum_{q=0}^{d-1} \gamma_q \omega^{q (a-b-\theta_x+\zeta_y)} \right|^2 \tag{by Eq.~(S.4) in \cite{SATW17}} \\
    \mathbb{P}\left(\widetilde{(A^1_x)_{\ket{\phi}}} = \widetilde{(B^1_y)_{\ket{\phi}}} + k \right) &= d \cdot \mathbb{P}\left(\widetilde{(A^1_x)_{\ket{\phi}}} = k, \widetilde{(B^1_y)_{\ket{\phi}}} = 0 \right) \tag{by Eq.~(S.5) in \cite{SATW17}} \nonumber \\
    &= d \left| \frac{1}{d} \sum_{q=0}^{d-1} \gamma_q \omega^{q (k-\theta_x+\zeta_y)} \right|^2. 
\end{align}
Substituting the specific settings $x=y=i$, the phase term becomes $-\theta_i + \zeta_i = \frac{1}{2m}$. This yields:
\begin{align}
    \mathbb{P}\left(\widetilde{(A^1_i)_{\ket{\phi}}} = \widetilde{(B^1_i)_{\ket{\phi}}} + k \right) = d \left| \frac{1}{d} \sum_{q = 0}^{d-1} \gamma_q \cdot \omega^{q (k + \frac{1}{2m})} \right|^2. \label{eq:prob_indep}
\end{align}
Observing that Eq.~\eqref{eq:prob_indep} is independent of the measurement index $i$, the average over the $m$ settings simplifies directly:
\begin{align}
    \mathbb{P}(x_k) &= \frac{1}{m} \sum_{i=1}^{m} \mathbb{P}\left( \widetilde{(A^1_i)_{\ket{\phi}}} = \widetilde{(B)^1_i)_{\ket{\phi}}} + k \right) \nonumber = \frac{1}{m} \sum_{i=1}^{m} \left( d \left| \frac{1}{d} \sum_{q = 0}^{d-1} \gamma_q \cdot \omega^{q (k + \frac{1}{2m})} \right|^2 \right) \nonumber \\
    &= \frac{1}{m} \cdot m \cdot d \left| \frac{1}{d} \sum_{q = 0}^{d-1} \gamma_q \cdot \omega^{q (k + \frac{1}{2m})} \right|^2 \nonumber = d \left| \frac{1}{d} \sum_{q = 0}^{d-1} \gamma_q \cdot \omega^{q (k + \frac{1}{2m})} \right|^2.
\end{align}
\end{proof}

\begin{restatable}[]{lm}{OneMinusE} \label{lm:OneMinusE} ~\\
\[
\left| 1 - e^{\mathbf{i} \theta} \right|^2 = 2 (1 - \cos \theta) = 4 \sin^2 \left(\frac{\theta}{2} \right)
\]
\end{restatable}

\begin{proof}
\[
\begin{split}
\left| 1 - e^{\mathbf{i} \theta} \right|^2 &= (1 - e^{\mathbf{i} \theta}) (1 - e^{-\mathbf{i} \theta}) = 1 - e^{-\mathbf{i} \theta} - e^{\mathbf{i} \theta} + e^{\mathbf{i} \theta} e^{-\mathbf{i} \theta} = 2 - (e^{\mathbf{i} \theta} + e^{-\mathbf{i} \theta}) \\
 &= 2 - 2 \cos(\theta) \,\,\,\,\,\; \because e^{\mathbf{i} \theta} + e^{-\mathbf{i} \theta} = 2 \cos (\theta) \\
 &= 2 (1 - \cos (\theta)) = 2 (2 \sin^2 (\frac{\theta}{2})) \,\,\,\,\,\; \because 1 - \cos (\theta) = 2 \sin^2 (\frac{\theta}{2}) \\
 &= 4 \sin^2 \left(\frac{\theta}{2} \right)
\end{split}
\]
\end{proof}

\begin{restatable}[Finite Trigonometric Sums]{lm}{TrigIdentities} \label{lemma:trig_identities} ~\\
For any integer $d \ge 2$ and shift parameter $x \notin \{ \frac{\pi j}{d} \mid j \in \mathbb{Z} \}$, the following identities hold:
\begin{subequations}
    \begin{align}
        \sum_{k=0}^{d-1} \csc^2\left( \frac{\pi k}{d} + x \right) &= d^2 \csc^2(dx), \label{eq:csc2_sum} \\
        \sum_{k=0}^{d-1} \csc^4\left( \frac{\pi k}{d} + x \right) &= d^4 \csc^4(dx) - \frac{2}{3}d^2(d^2-1)\csc^2(dx), \label{eq:csc4_sum} \\
        \sum_{k=0}^{d-1} \cot^3\left( \frac{\pi k}{d} + x \right) &= d^3 \cot^3(dx) + d(d^2-1)\cot(dx). \label{eq:cot3_sum}
    \end{align}
\end{subequations}
\end{restatable}

\begin{proof}
The proof follows from the finite product identity for the sine function~\cite[Formula 1.392.1]{GR14}:
\begin{align}
    \sin(dx) &= 2^{d-1} \prod_{k=0}^{d-1} \sin\left(x + \frac{k\pi}{d}\right) \nonumber\\
    \ln(\sin dx) &= (d-1)\ln 2 + \sum_{k=0}^{d-1} \ln\left[\sin\left(x + \frac{k\pi}{d}\right)\right] \nonumber\\
    \frac{d}{dx} \ln(\sin dx) &= \sum_{k=0}^{d-1} \frac{d}{dx} \ln\left[\sin\left(x + \frac{k\pi}{d}\right)\right] \nonumber\\
    \frac{d \cos(dx)}{\sin(dx)} &= \sum_{k=0}^{d-1} \frac{\cos\left(x + \frac{k\pi}{d}\right)}{\sin\left(x + \frac{k\pi}{d}\right)} \nonumber\\
    d \cot(dx) &= \sum_{k=0}^{d-1} \cot\left(x + \frac{k\pi}{d}\right). \label{eq:base_identity}
\end{align}

\par\medskip
\noindent\textbf{Proof of Eq.~\eqref{eq:csc2_sum}.}
By differentiating both sides of Eq.~\eqref{eq:base_identity} with respect to $x$, and applying the derivative $\frac{d}{du}\cot u = -\csc^2 u$ along with the chain rule, we obtain:
\begin{equation}
    \sum_{k=0}^{d-1} \left[ -\csc^2\left( \frac{\pi k}{d} + x \right) \right] = d \cdot \left[ -d \csc^2(dx) \right] = -d^2 \csc^2(dx). \nonumber
\end{equation}
Multiplying by $-1$ yields the result.

\par\medskip
\noindent\textbf{Proof of Eq.~\eqref{eq:csc4_sum}.}
We differentiate Eq.~\eqref{eq:base_identity} three times with respect to $x$. Recall the third derivative identity $\frac{d^3}{du^3} \cot u = -6 \csc^4 u + 4 \csc^2 u$. Applying this operator, noting that the chain rule introduces a factor of $d^3$ on the right-hand side (combined with the initial $d$ to form $d^4$), we get:
\begin{equation}
    \sum_{k=0}^{d-1} \left( -6 \csc^4 \theta_k + 4 \csc^2 \theta_k \right) = d^4 \left( -6 \csc^4(dx) + 4 \csc^2(dx) \right), \nonumber
\end{equation}
where $\theta_k = \frac{\pi k}{d} + x$. Substituting Eq.~\eqref{eq:csc2_sum} for the $\sum \csc^2 \theta_k$ term and rearranging:
\begin{align}
    -6 \sum_{k=0}^{d-1} \csc^4 \theta_k &= -6 d^4 \csc^4(dx) + 4 d^4 \csc^2(dx) - 4 (d^2 \csc^2(dx)) \nonumber \\
    &= -6 d^4 \csc^4(dx) + 4 d^2(d^2-1) \csc^2(dx). \nonumber
\end{align}
Dividing by $-6$ gives the identity.

\par\medskip
\noindent\textbf{Proof of Eq.~\eqref{eq:cot3_sum}.}
Using the relation $\cot^3 u = \cot u \csc^2 u - \cot u$, we first evaluate the term $\sum \cot \theta_k \csc^2 \theta_k$ by differentiating Eq.~\eqref{eq:csc2_sum} with respect to $x$:
\begin{equation}
    \sum_{k=0}^{d-1} -2\csc^2\theta_k \cot\theta_k = d^2 \cdot \left[ -2d \csc^2(dx) \cot(dx) \right].
\end{equation}
Thus, $\sum \cot\theta_k \csc^2\theta_k = d^3 \cot(dx) \csc^2(dx)$. Substituting this back into the expansion, and recalling the base identity $\sum \cot\theta_k = d\cot(dx)$ (Eq.~\eqref{eq:base_identity}):
\begin{align}
    \sum_{k=0}^{d-1} \cot^3\theta_k &= \sum_{k=0}^{d-1} (\cot\theta_k \csc^2\theta_k - \cot\theta_k) \nonumber
    = d^3 \cot(dx) \csc^2(dx) - d \cot(dx) \nonumber \\
    &= d \cot(dx) \left[ d^2 (\cot^2(dx) + 1) - 1 \right] \nonumber
    = d^3 \cot^3(dx) + d(d^2-1)\cot(dx). \nonumber
\end{align}
\end{proof}

\section{Theorems and Their Proofs} \label{apx:ThmProofs}

\TightBound*

\begin{proof}
We know that $(U_1 \otimes U_2) \ket{\Phi_d} = (I \otimes U_2 U_1^T) \ket{\Phi_d}$ based on \cite{SATW17,SW22}.
Let $\ket{\phi'} = (U_1 \otimes U_2) \ket{\Phi_d} = (I \otimes U_2 U_1^T) \ket{\Phi_d}$.  

Suppose that $U_2 U_1^T = \sum_{j=0}^{d-1} c_j \ket{\lambda_j}\bra{\lambda_j}$ is the orthogonal decomposition of $U_2 U_1^T$, where $c_j \in \mathbb{C}$. Since $\ket{\Phi_d} = \frac{1}{\sqrt{d}} \sum_{j=0}^{d-1} \ket{\lambda_j} \ket{\lambda_j}^*$, we have
\[
\ket{\phi'} = \frac{1}{\sqrt{d}} \sum_{j=0}^{d-1} c_j \ket{\lambda_j} \ket{\lambda_j}^*.
\]
We can decompose $\ket{\phi'}$ in terms of two orthogonal vectors $\ket{\Phi_d}$ and $\ket{\Phi_d^\bot}$, i.e., $\bra{\Phi_d^\bot} \Phi_d \rangle = 0$ and $\ket{\Phi_d}, \ket{\Phi_d^\bot} \in \mathcal{S}$. 
Let $\ket{\phi'} = c_{\Phi_d} \ket{\Phi_d} + c_\bot \ket{\Phi_d^\bot}$ be the decomposition, where $c_{\Phi_d}, c_\bot \in \mathbb{C}$ and $|c_{\Phi_d}|^2 + |c_\bot|^2 = 1$.  Let us figure out $c_{\Phi_d}$.
\[
c_{\Phi_d} = \bra{\Phi_d} \phi' \rangle = \frac{1}{d} \sum_{j = 0}^{d-1} c_j = \frac{1}{d} \cdot \mathsf{Tr}(U_2 U_1^T)
\]
Recall that $D(U_1, U_2) = \sqrt{1 -  \left| \frac{1}{d} \mathsf{Tr}(U_1^T U_2) \right|^2} = \sqrt{1 -  \left| \frac{1}{d} \mathsf{Tr}(U_2 U_1^T) \right|^2}$. Then, we can conclude
\begin{equation} \label{Eq:Relation_D_CPhiD}
    D(U_1, U_2)^2 = 1 - |c_{\Phi_d}|^2.
\end{equation}

To simplify the notation, let us use $\mathcal{M}$ to denote $\sum_{x=1}^{m} \sum_{l=1}^{d-1} \left( A_x^l \otimes \overline{B}_x^l \right)$. Let us focus on $\Omega$ now. 
\[
\begin{split}
\Omega &= I_{d,m}(\ket{\phi'}) = \bra{\phi'} \mathcal{M} \ket{\phi'} = \left( c_{\Phi_d}^* \bra{\Phi_d} + c_\bot^* \bra{\Phi_d^\bot} \right) \mathcal{M} \left( c_{\Phi_d} \ket{\Phi_d} + c_\bot \ket{\Phi_d^\bot} \right) \\
 &= |c_{\Phi_d}|^2 \bra{\Phi_d} \mathcal{M} \ket{\Phi_d} + |c_\bot|^2 \bra{\Phi_d^\bot} \mathcal{M} \ket{\Phi_d^\bot} \\ 
  &= |c_{\Phi_d}|^2 \cdot I_{d,m}(\ket{\Phi_d}) + (1 - |c_{\Phi_d}|^2) \cdot I_{d,m}(\ket{\Phi_d^\bot}) \qquad \because |c_{\Phi_d}|^2 + |c_\bot|^2 = 1 \\ 
  &= |c_{\Phi_d}|^2 \cdot m(d-1) + (1 - |c_{\Phi_d}|^2) \cdot (-m) \qquad \qquad \because \mbox{Lemma~\ref{lm:CircuitEquiv} and Lemma~\ref{lm:IDM-M}}\\ 
  & = md \cdot |c_{\Phi_d}|^2 - m = md \left( 1 - D(U_1, U_2)^2 \right) - m \qquad \qquad \because \mbox{Eq.~\ref{Eq:Relation_D_CPhiD}}
\end{split}
\]
Therefore, we prove that $D(U_1, U_2) = \sqrt{1 - \frac{\Omega + m}{md}}$.
\end{proof}

\UfDotDot*

\begin{proof}
First of all, the state of $\ket{\vec{v}}$ is not changed by $\Ddot{U}_f$, because only the components $C_i$ and $C_i^{-1}$ involve $\ket{\vec{v}}$, but they never change its state.
Now, let us focus on the state of $\ket{F}$. It is modified by a multi-controlled-$Z$ gate, i.e., when all the clauses are satisfied (meaning that $f(\vec{v}) = 1$), the $Z$ gate is applied on $\ket{F}$. We know that $Z(\ket{0}) = \ket{0}$ and $Z(\ket{1}) = - \ket{1}$. Thus, we prove our claim.
\end{proof}

\FalseIdentity*

\begin{proof}
Based on our construction scheme, $\Ddot{U}_f$ has $(n+1)$ qubits. Thus, its dimension is $2^{(n+1)}$ by $2^{(n+1)}$. Since $f$ is unsatisfiable, $f(\vec{v}) = 0$ for all $\vec{v} \in \{0,1\}^n $. With Theorem~\ref{thm:UfDotDot-1} together, we have $\Ddot{U}_f (\ket{\vec{v}}\ket{F}) = \ket{\vec{v}}\ket{F}$ for all $\vec{v}$ and $\vec{F}$. Thus, $\Ddot{U}_f$ is the identity matrix.
\end{proof}

\IDMvsRatio*

\begin{proof}
By Theorem~\ref{thm:TightBound}, we know that
\[
\sqrt{1 - \widehat{I_{d,m}}} = D(\Ddot{U}_f, I) = \sqrt{1 -  \left| \frac{1}{2^{n+1}} \mathsf{Tr}(\Ddot{U}_f^T \cdot I) \right|^2} = \sqrt{1 -  \left| \frac{1}{2^{n+1}} \mathsf{Tr}(\Ddot{U}_f \cdot I^T) \right|^2}. 
\]
Thus, we have \[\widehat{I_{d,m}} = \left| \frac{1}{2^{n+1}} \mathsf{Tr}(\Ddot{U}_f) \right|^2.\] 
By Lemma~\ref{lm:FormulaTrace}, we know that $\mathsf{Tr}(\Ddot{U}_f) = 2^{n+1} - 2K$. Therefore, we prove that  
\[ \widehat{I_{d,m}} = \left( \frac{2^{n+1} - 2K}{2^{n+1}} \right)^2 = \left( 1 - \frac{K}{2^n} \right)^2 = (1 - r)^2. \]
\end{proof}

\StandardVarForNoSol*

\begin{proof}
If $K=0$, the formula $f$ has no solution, and $\Ddot{U}_f$ would be the identity matrix $I$. Thus, $\ket{\phi} = (\Ddot{U}_f \otimes I)\ket{\Phi_d} = (I \otimes I)\ket{\Phi_d} = \ket{\Phi_d} = \frac{1}{\sqrt{d}} \sum_{q=0}^{d-1} \ket{q}\ket{q}$. That is, $\gamma_q = \frac{1}{\sqrt{d}}$ for all $q \in \{0, 1, \ldots, d-1 \}$. By Lemma~\ref{lm:general_second_moment}, we have
\[
\mathbb{E}(X_{d,m,K}^2) = \sum_{k=0}^{d-1} (2\alpha_k)^2 \cdot \mathbb{P}(x_k), \quad \mbox{where} \quad \mathbb{P}(x_k) = d \left| \frac{1}{d} \sum_{q = 0}^{d-1} \gamma_q \cdot \omega^{q (k + \frac{1}{2m})} \right|^2.
\]
Let us simplify $\mathbb{P}(x_k)$ with $\gamma_q =\frac{1}{\sqrt{d}}$, $\xi = \frac{\pi}{2m}$, $\phi_k = \frac{\pi}{d}(k + \frac{1}{2m})$, and $\omega = e^{\frac{2 \pi \mathbf{i}}{d}}$.
\begin{align}
    \mathbb{P}(x_k) &= d \left| \frac{1}{d} \sum_{q = 0}^{d-1} \gamma_q \cdot \omega^{q (k + \frac{1}{2m})} \right|^2 
    = d \left| \frac{1}{d\sqrt{d}} \sum_{q=0}^{d-1} (e^{\mathbf{i} 2\phi_k})^q \right|^2 \nonumber \\
    &= \frac{1}{d^2} \left| \frac{1 - (e^{\mathbf{i} 2\phi_k})^d}{1 - e^{\mathbf{i} 2\phi_k}} \right|^2 \quad \quad \quad  \because \sum_{q=0}^{n-1} r^q = \frac{1-r^n}{1-r} \mbox{ , and let } r = e^{\mathbf{i} 2 \phi_k} \nonumber \\
    &= \frac{1}{d^2} \frac{|1 - e^{\mathbf{i} 2 d \phi_k}|^2}{|1 - e^{\mathbf{i} 2 \phi_k}|^2} \quad \quad \quad \quad \because |a/b|^2 = |a|^2/|b|^2 \nonumber \\
    &= \frac{1}{d^2} \frac{|1 - e^{\mathbf{i} 2 (\pi k + \xi)}|^2}{|1 - e^{\mathbf{i} 2 \phi_k}|^2} \quad \quad \quad \because d\phi_k = d \cdot \frac{\pi}{d}(k + \frac{1}{2m}) = \pi k + \xi \nonumber \\
    &= \frac{1}{d^2} \frac{|1 - e^{\mathbf{i} 2 \xi}|^2}{|1 - e^{\mathbf{i} 2 \phi_k}|^2} \quad \quad \quad \quad \quad \because e^{\mathbf{i} 2(\pi k + \xi)} = e^{\mathbf{i} 2\pi k} \cdot e^{\mathbf{i} 2\xi} = 1 \cdot e^{\mathbf{i} 2\xi} \nonumber \\
    &= \frac{1}{d^2} \frac{4\sin^2 \xi}{4\sin^2 \phi_k} \quad \quad \quad \quad \because \text{Lemma~\ref{lm:OneMinusE}} \nonumber \\
    &= \frac{1}{d^2} \frac{\sin^2 \xi}{\sin^2 \phi_k}. \label{eq:Pk_detailed_proof}
\end{align}

Recalling the definition of $\alpha_k$ in Lemma~\ref{lm:NormBellExp}, we further simplify it with $\xi = \frac{\pi}{2m}$ and $\phi_k = \frac{\pi}{d}(k + \frac{1}{2m})$, as follows:
\begin{equation} \label{eq:alpha_k_definition}
    \alpha_k = \frac{1}{2d} \tan\left(\frac{\pi}{2m}\right) \cot \left( \frac{\pi}{d} (k + \frac{1}{2m}) \right) = \frac{\tan \xi \cot \phi_k}{2d}
\end{equation}
Now, let us rewrite $\mathbb{E}[X_{d,m,0}^2]$ with the conditions of $(2\alpha_k)^2 = \frac{\tan^2 \xi \cot^2 \phi_k}{d^2}$ and $\mathbb{P}(x_k) = \frac{1}{d^2} \frac{\sin^2 \xi}{\sin^2 \phi_k}$, we have

\begin{align}
    \mathbb{E}[X_{d,m,0}^2] &= \sum_{k=0}^{d-1} \mathbb{P}(x_k) (2\alpha_k)^2 = \sum_{k=0}^{d-1} \left( \frac{\sin^2 \xi}{d^2 \sin^2 \phi_k} \right) \left( \frac{\tan^2 \xi \cot^2 \phi_k}{d^2} \right) \nonumber \\
    &= \frac{\tan^2 \xi \sin^2 \xi}{d^4} \sum_{k=0}^{d-1} \csc^2 \phi_k \cot^2 \phi_k \quad \quad \because \frac{1}{\sin \phi_k} = \csc \phi_k \nonumber \\
    &= \frac{\tan^2 \xi \sin^2 \xi}{d^4} \left[ \sum_{k=0}^{d-1} \csc^4 \phi_k - \sum_{k=0}^{d-1} \csc^2 \phi_k \right] \quad \quad \because \cot^2 \phi_k = \csc^2 \phi_k - 1 \nonumber \\
    &= \frac{\tan^2 \xi \sin^2 \xi}{d^4} \bigg[ \left( d^4 \csc^4 \xi - \tfrac{2}{3}d^2(d^2-1)\csc^2 \xi \right) - d^2 \csc^2 \xi \bigg]\nonumber \\
    & \quad \hspace{3cm} \because \mbox{By Eqs.~\eqref{eq:csc2_sum}, \eqref{eq:csc4_sum} in Lemma.~\ref{lemma:trig_identities} with } x = \xi/d \nonumber \\
    &= \frac{\tan^2 \xi}{d^4} \left[ d^4 \csc^2 \xi - \frac{2}{3}d^2(d^2-1) - d^2 \right] \quad \quad \because \sin^2 \xi \csc^2 \xi = 1 \nonumber \\
    &= \tan^2 \xi \csc^2 \xi - \frac{2(d^2-1)}{3d^2}\tan^2 \xi - \frac{1}{d^2}\tan^2 \xi \nonumber \\
    &= (1 + \tan^2 \xi) - \frac{2d^2-2}{3d^2}\tan^2 \xi - \frac{3}{3d^2}\tan^2 \xi \quad \because \tan^2 \xi \csc^2 \xi = 1 + \tan^2 \xi \nonumber \\
    &= 1 + \tan^2 \xi \left( \frac{3d^2 - (2d^2 - 2) - 3}{3d^2} \right) = 1 + \frac{d^2-1}{3d^2} \tan^2 \xi. \nonumber
\end{align}

With the fact that $\mathbb{E}[X_{d,m,0}] = 1$, we finish our proof in the following.
\begin{align}
    \sigma^2(X_{d,m,0}) &= \mathbb{E}[X_{d,m,0}^2] - \mathbb{E}[X_{d,m,0}] = \left(1 + \frac{d^2-1}{3d^2} \tan^2 \xi \right) - 1 = \frac{d^2-1}{3d^2} \tan^2 \xi \nonumber \\
    \sigma(X_{d,m,0}) &= \frac{1}{d} \sqrt{\frac{d^2-1}{3}} \tan\left(\frac{\pi}{2m}\right). \nonumber
\end{align} 
\end{proof}

\StandardVarForOneSol*

\begin{proof}
When $K = 1$, according to Lemma~\ref{lemma:max_variance_K1}, the maximum variance occurs at $q^* = 0$. Thus, the coefficients of $\ket{\phi}$ would be $\gamma_0 = -1/\sqrt{d}$ and $\gamma_q = 1/\sqrt{d}$ for $q \in \{1,2, \ldots, d-1 \}$. 
Let us simplify $\mathbb{P}(x_k)$ further with $\xi = \frac{\pi}{2m}$, $\phi_k = \frac{\pi}{d}(k + \frac{1}{2m})$, and $\omega = e^{\frac{2 \pi \mathbf{i}}{d}}$.
\begin{align}
    \mathbb{P}(x_k) &= d \left| \frac{1}{d} \sum_{q = 0}^{d-1} \gamma_q \cdot \omega^{q (k + \frac{1}{2m})} \right|^2 \\
    &= \frac{1}{d} \left|  \left( \sum_{q = 0}^{d-1} \frac{1}{\sqrt{d}} \cdot \omega^{q (k + \frac{1}{2m})} \right) - 2 \left( \frac{1}{\sqrt{d}} \right) \omega^0 \right|^2 \nonumber \\
    &= \frac{1}{d^2} \left| \left(\sum_{q=0}^{d-1} r^q\right) - 2r^0 \right|^2 \quad \quad \mbox{, where} \quad r = e^{\mathbf{i} 2\phi_k} \nonumber \\
    &= \frac{1}{d^2} \left| \frac{1-r^d}{1-r} - 2 \right|^2 \\
    &= \frac{1}{d^2} \left| \frac{2r - (1+r^d)}{1-r} \right|^2 \nonumber \\
    &= \frac{1}{d^2} \frac{|2e^{i 2\phi_k} - (1 + e^{i 2 d \phi_k})|^2}{|1 - e^{i 2\phi_k}|^2} \quad \because |a/b|^2 = |a|^2/|b|^2 \nonumber \\
    &= \frac{1}{d^2} \frac{|2e^{i 2\phi_k} - (1 + e^{i 2(\pi k + \xi)})|^2}{|1 - e^{i 2\phi_k}|^2} \quad \because d\phi_k = d \cdot \frac{\pi}{d}(k + \frac{1}{2m}) = \pi k + \xi \nonumber \\
    &= \frac{1}{d^2} \frac{|2e^{i 2\phi_k} - (1 + e^{i 2\xi})|^2}{|1 - e^{i 2\phi_k}|^2} \quad \because e^{i 2(\pi k + \xi)} = e^{i 2\pi k} e^{i 2\xi} = 1 \cdot e^{i 2\xi} \nonumber \\
    &= \frac{1}{d^2} \frac{|2e^{i 2\phi_k} - e^{i\xi}(e^{-i\xi} + e^{i\xi})|^2}{4\sin^2\phi_k} \quad \because |1-e^{i\theta}|^2 = 4\sin^2(\theta/2) \text{ and factor out } e^{i\xi} \nonumber \\
    &= \frac{1}{d^2} \frac{|2e^{i 2\phi_k} - 2e^{i\xi}\cos\xi|^2}{4\sin^2\phi_k} \quad \because e^{i\xi} + e^{-i\xi} = 2\cos\xi \nonumber \\
    &= \frac{|e^{i 2\phi_k} - e^{i\xi}\cos\xi|^2}{d^2 \sin^2 \phi_k}. \quad \because \text{factor out constant 2 and cancel } 4/4
\end{align}
Expanding the squared modulus in the numerator yields:
\begin{align}
    \mathbb{P}(x_k) &= \frac{1}{d^2 \sin^2 \phi_k} (e^{i 2\phi_k} - e^{i\xi}\cos\xi)(\overline{e^{i 2\phi_k} - e^{i\xi}\cos\xi}) \quad \because \text{Property } |z|^2 = z\bar{z} \nonumber \\
    &= \frac{1}{d^2 \sin^2 \phi_k} (e^{i 2\phi_k} - e^{i\xi}\cos\xi)(e^{-i 2\phi_k} - e^{-i\xi}\cos\xi) \quad \because \overline{e^{i\theta}} = e^{-i\theta} \text{ and } \cos\xi \in \mathbb{R} \nonumber \\
    &= \frac{1}{d^2 \sin^2 \phi_k} (e^{i 2\phi_k}e^{-i 2\phi_k} + e^{i\xi}e^{-i\xi}\cos^2\xi - e^{i 2\phi_k}e^{-i\xi}\cos\xi - e^{-i 2\phi_k}e^{i\xi}\cos\xi) \nonumber \\
    &= \frac{1}{d^2 \sin^2 \phi_k} (1 + \cos^2\xi - \cos\xi [e^{i(2\phi_k-\xi)} + e^{-i(2\phi_k-\xi)}]) \quad \because e^{i\theta}e^{-i\theta} = 1 \nonumber \\
    &= \frac{1}{d^2 \sin^2 \phi_k} (1 + \cos^2\xi - 2\cos\xi \cos(2\phi_k - \xi)) \quad \because e^{i\theta} + e^{-i\theta} = 2\cos\theta \nonumber \\
    &= \frac{1}{d^2 \sin^2 \phi_k} \left( 1 + \cos^2\xi - 2\cos\xi [ \cos(2\phi_k)\cos\xi + \sin(2\phi_k)\sin\xi ] \right) \nonumber \\
    & \quad \because \cos(A-B) = \cos A \cos B + \sin A \sin B \nonumber \\
    &= \frac{1}{d^2 \sin^2 \phi_k} \left( 1 + \cos^2\xi - 2\cos^2\xi \cos(2\phi_k) - 2\cos\xi \sin\xi \sin(2\phi_k) \right) \nonumber \\
    &= \frac{1}{d^2 \sin^2 \phi_k} \left( 1 + \cos^2\xi - 2\cos^2\xi (1 - 2\sin^2\phi_k) - \sin(2\xi) (2\sin\phi_k \cos\phi_k) \right) \nonumber \\
    & \quad \because \cos(2\phi_k) = 1-2\sin^2\phi_k \text{ and } 2\sin\xi\cos\xi = \sin(2\xi) \text{ and } \sin(2\phi_k) = 2\sin\phi_k\cos\phi_k \nonumber \\
    &= \frac{1}{d^2 \sin^2 \phi_k} \left( 1 + \cos^2\xi - 2\cos^2\xi + 4\cos^2\xi \sin^2\phi_k - 2\sin(2\xi) \sin\phi_k \cos\phi_k \right) \nonumber \\
    &= \frac{1}{d^2 \sin^2 \phi_k} \left( (1 - \cos^2\xi) + 4\cos^2\xi \sin^2\phi_k - 2\sin(2\xi) \sin\phi_k \cos\phi_k \right) \nonumber \\
    &= \frac{1}{d^2 \sin^2 \phi_k} \left( \sin^2\xi + 4\cos^2\xi \sin^2\phi_k - 2\sin(2\xi) \sin\phi_k \cos\phi_k \right) \quad \because 1 - \cos^2\xi = \sin^2\xi \nonumber \\
    &= \frac{1}{d^2} \left( \frac{\sin^2\xi}{\sin^2\phi_k} + \frac{4\cos^2\xi\sin^2\phi_k}{\sin^2\phi_k} - \frac{2\sin(2\xi)\sin\phi_k\cos\phi_k}{\sin^2\phi_k} \right) \nonumber \\
    &= \frac{1}{d^2} \left[ \sin^2\xi \csc^2\phi_k + 4\cos^2\xi - 2\sin(2\xi)\cot\phi_k \right]. \quad \because \csc \phi = \frac{1}{\sin \phi} \text{ and } \cot \phi = \frac{\cos \phi}{\sin \phi} \nonumber
\end{align}
Recalling the definition in Eq.~\eqref{eq:alpha_k_definition}, we have $(2\alpha_k)^2 = \frac{\tan^2 \xi}{d^2} \cot^2 \phi_k$. Substituting this expression along with the simplified $\mathbb{P}(x_k)$ into the second moment formula yields:

\begin{align}
    \mathbb{E}[X_{d,m,1}^2] &= \sum_{k=0}^{d-1} \mathbb{P}(x_k) (2\alpha_k)^2 \nonumber \\
    &= \frac{\tan^2\xi}{d^4} \sum_{k=0}^{d-1} \left[ \sin^2\xi \csc^2\phi_k + 4\cos^2\xi - 2\sin(2\xi)\cot\phi_k \right] \cot^2\phi_k \nonumber \\
    &= \frac{\tan^2\xi}{d^4} \sum_{k=0}^{d-1} \left[ \sin^2\xi (\csc^4\phi_k - \csc^2\phi_k) + 4\cos^2\xi (\csc^2\phi_k - 1) - 2\sin(2\xi)\cot^3\phi_k \right] \nonumber \\
    & \quad \because \cot^2\phi = \csc^2\phi - 1 \text{ and } \csc^2\phi\cot^2\phi = \csc^4\phi - \csc^2\phi \nonumber \\
    &= \frac{\tan^2\xi}{d^4} \left[ \sin^2\xi (\mathcal{S}_{csc4} - \mathcal{S}_{csc2}) + 4\cos^2\xi (\mathcal{S}_{csc2} - d) - 2\sin(2\xi) \mathcal{S}_{cot3} \right], \label{eq:moment_sum_expanded}
\end{align}
where $\mathcal{S}_{csc2}, \mathcal{S}_{csc4}, \mathcal{S}_{cot3}$ are the trigonometric sums from Lemma~\ref{lemma:trig_identities} with $x = \xi/d$. Substituting these identities into Eq.~\eqref{eq:moment_sum_expanded} and performing the algebraic reduction:
\begin{align}
    \mathbb{E}[X_{d,m,1}^2] &= \frac{1}{d^4} \left[ d^4 \sec^2\xi - \left(\tfrac{2}{3}d^4 + \tfrac{1}{3}d^2\right) \tan^2\xi + (4d^2 - 4d\sin^2\xi) + (-4d^3 + 4d\sin^2\xi) \right] \nonumber \\
    &= \frac{1}{d^4} \left[ \frac{d^2(d^2-1)}{3}\tan^2\xi + (d^4 - 4d^3 + 4d^2) \right]. \nonumber
\end{align}
With $\mathbb{E}[X_{d,m,1}] = (1 - \frac{2}{d})^2$, the variance is derived as:
\begin{align}
    \sigma^2(X_{d,m,1}) &= \mathbb{E}[X_{d,m,1}^2] - (\mathbb{E}[X_{d,m,1}])^2 \nonumber \\
    &= \frac{1}{d^4} \left[ \frac{d^2(d^2-1)}{3}\tan^2\xi + (d^4 - 4d^3 + 4d^2) - (d-2)^4 \right] \nonumber \\
    &= \frac{1}{d^4} \left[ \frac{d^2(d^2-1)}{3}\tan^2\xi + (d^4 - 4d^3 + 4d^2) - (d^4 - 8d^3 + 24d^2 - 32d + 16) \right] \nonumber \\
    &= \frac{1}{d^4} \left[ \frac{d^2(d^2-1)}{3}\tan^2\xi + (4d^3 - 20d^2 + 32d - 16) \right] \nonumber \\
    &= \frac{1}{d^4} \left[ \frac{d^2(d^2-1)}{3}\tan^2\xi + 4(d-1)(d-2)^2 \right]. \nonumber
\end{align}
Taking the square root of both sides, we prove that
\begin{equation}
    \sigma(X_{d,m,1}) = \frac{1}{d^2} \sqrt{ \frac{d^2(d^2-1)}{3}\tan^2\left(\frac{\pi}{2m}\right) + 4(d-1)(d-2)^2 }.
\end{equation}
\end{proof}

\ShannonExpansion*

\begin{proof}
(1) $f$ is $f_\bot$ $\iff$ $f_{[v_i]} = (f_{\bot})_{v_i \leftarrow 0} \vee (f_{\bot})_{v_i \leftarrow 1} = f_\bot \vee f_\bot$ $\iff$ $f_{v_i}$ is $f_\bot$.

(2) From Fig.~\ref{fig:Shannon_Expansion}, we understand that Shannon expansion is to merge the two sub solution spaces (each of which has $2^{n-1}$ elements) into one with only $2^{n-1}$ elements. When the merge is performed, solutions to $f$ have two cases. Let $\ket{v_i}\ket{\vec{x}}$ be the general form of the assignments to $f$, where $\vec{x} \in \{0,1\}^{|V \setminus \{v_i\}|}$ be an assignment in the subspace without $v_i$. The first case is that both assignments $\ket{0}\ket{\vec{x}}$ and $\ket{1}\ket{\vec{x}}$ are solutions to $f$. When the merge is performed, $\ket{0}\ket{\vec{x}}$ will cancel out $\ket{1}\ket{\vec{x}}$, which we consider as the worst case, marked in red color in Fig.~\ref{fig:Shannon_Expansion}. The second case is that either $\ket{0}\ket{\vec{x}}$ or $\ket{1}\ket{\vec{x}}$ is a solution to $f$, but not both. In this case, no solution will be canceled out, which we consider as the best case, marked in green color in Fig.~\ref{fig:Shannon_Expansion}. Assume formula $f$ has $k$ solutions, i.e., $r_f = \frac{k}{2^n}$.  One can easily check that if all solutions hit the worst case, then $r_{f_{[v_i]}} = \frac{k/2}{2^{n-1}} = \frac{k}{2^{n}} = r_f$. If all solutions hit the best case, then $r_{f_{[v_i]}} = \frac{k}{2^{n-1}} = 2 \cdot \frac{k}{2^{n}} = 2 \cdot r_f$. Thus, we prove that $r_f \leq r_{f_{[v_i]}} \leq 2 \cdot r_f$.

\begin{figure}[tb]
\centering
\begin{minipage}{0.56\linewidth}
\centering
\includegraphics[width=0.7\linewidth]{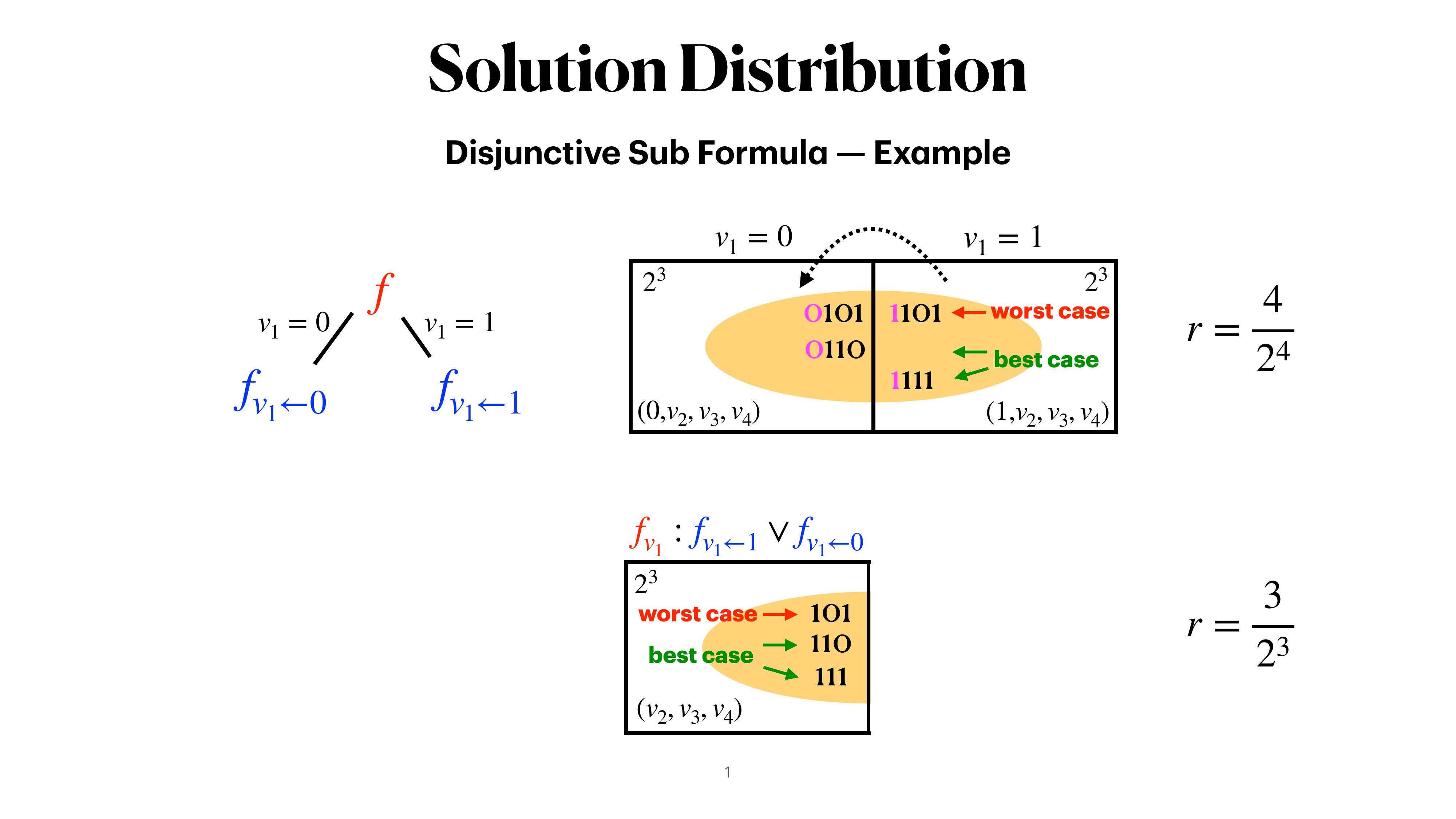}
\end{minipage}
~
\hspace{1mm}
~
\begin{minipage}{0.38\linewidth}
\centering
\includegraphics[width=0.5\linewidth]{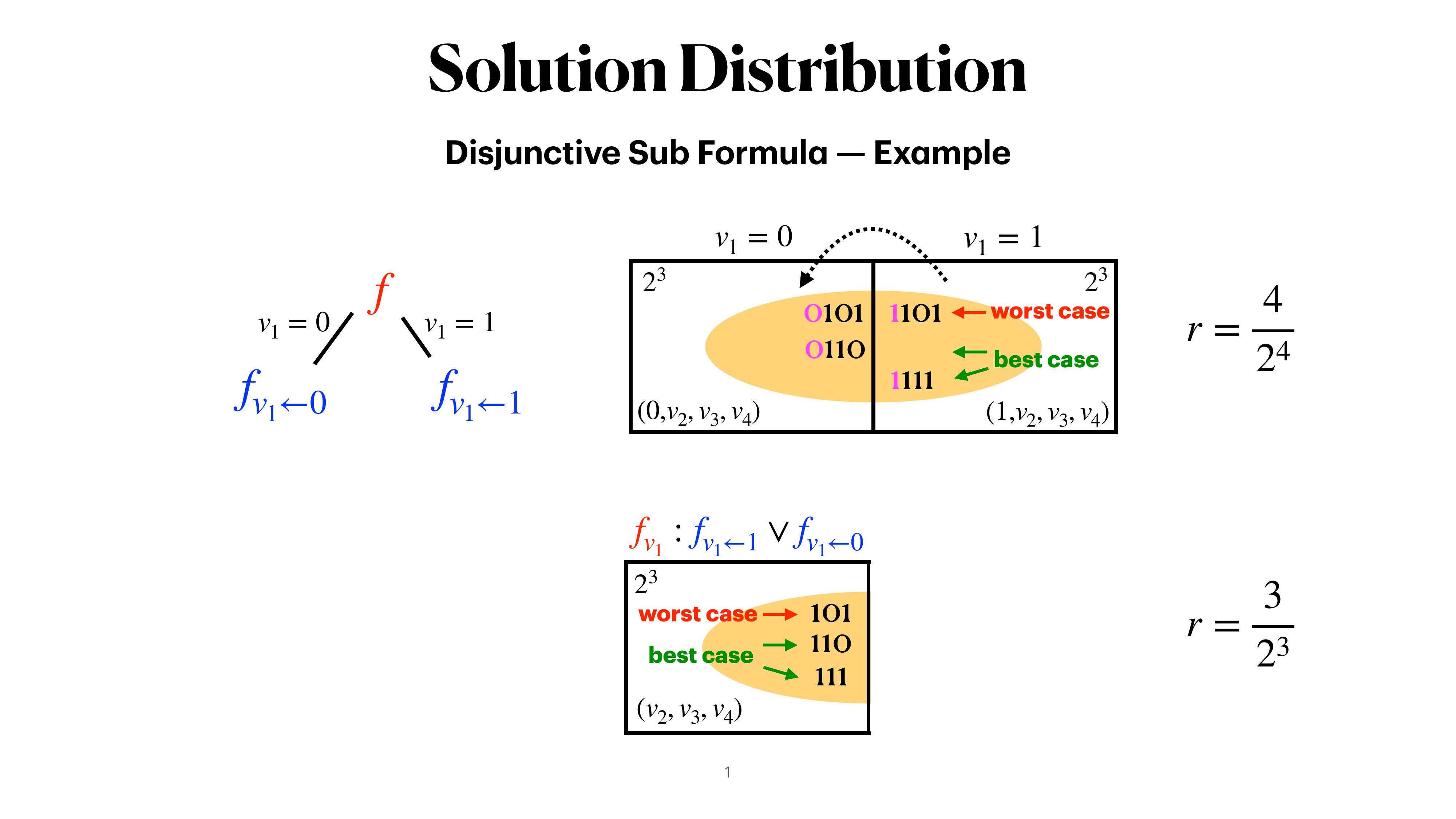}
\end{minipage}
\\[1mm]
\begin{minipage}{0.56\linewidth}
\centering
(a) before, $r = \frac{4}{2^4}$
\end{minipage}
~
\begin{minipage}{0.38\linewidth}
\centering
(b) after, $r = \frac{3}{2^3}$
\end{minipage}
\caption{Shannon Expansion for Formula $f$ over $V = \{ v_1, v_2, v_3, v_4 \}$}
\label{fig:Shannon_Expansion}
\end{figure}

(3) Assume formula $f$ has $2^t$ solutions. If we apply Shannon expansion recursively on $f$ for $n$ times, the changing of solution ratios in the worst case would be as follows. Notice that the red color indicates the turning point (when the numerator becomes $1$). After this point, it can be easily verified that only the best case can happen, i.e., the solution ratio will be doubled for $(n-k)$ iterations and then hits $1$, and the final result formula $F_V$ would be $f_\top$.
\[
\frac{2^t}{2^n} \rightarrow \frac{2^{t-1}}{2^{n-1}} \rightarrow \frac{2^{t-2}}{2^{n-2}} \rightarrow \cdots \rightarrow {\color{red} \frac{2^{0}}{2^{n-t}} } \rightarrow {\color{blue} \frac{2^{0}}{2^{(n-t)-1}} } \rightarrow {\color{blue} \frac{2^{0}}{2^{(n-t)-2}} } \rightarrow \cdots \rightarrow {\color{blue} \frac{2^{0}}{2^{0}} }
\]
If formula $f$ has no solutions, one can easily check that the numerators of the above sequence will be all $0$, and the final result formula would be $f_\bot$.
\end{proof}

\ShanExpCircuitCorrect*

\begin{proof}
First of all, the state of $\ket{\vec{v}}$ is not changed by $\Ddot{U}_{f_{[H]}}$, because only the components $U_{f_{x_1 x_2 \cdots x_h}}$ and $U_{f_{x_1 x_2 \cdots x_h}}$ involve $\ket{\vec{v}}$, but they never change its state.
Now, let us focus on the state of $\ket{F}$. There are two cases: (1) the input value of \ket{F} is $0$, and (2) the input value of \ket{F} is $1$.

Case~(1): $\ket{F} = 0$. The qubit $\ket{F}$ is modified only by a $Z$-gate, followed by a multi-controlled-$Z$-gate. We know that $Z(\ket{0}) = \ket{0}$. That is, whether the $Z$-gates are applied or not, the state of $\ket{F}$ is not changed. Thus, the final state would still be $\ket{\vec{v}} \ket{0} = (-1)^0 \cdot \ket{\vec{v}} \ket{0} = (-1)^{f_{[H]}(\vec{v}) \wedge 0} \ket{\vec{v}} \ket{0}$.

Case~(2): $\ket{F} = 1$. Initially, a $Z$-gate is applied on $\ket{F}$, and we know that $Z(\ket{1}) = -\ket{1}$. Thus, a ``$-1$'' phase is obtained in the beginning. Then, a multi-controlled $Z$-gate is applied on $\ket{F}$, i.e., the $Z$-gate is applied on $\ket{F}$ when each qubit $\ket{F_{x_1 x_2 \cdots x_h}}$ is $0$. Notice that qubit $\ket{F_{x_1 x_2 \cdots x_h}}$ indicates the truth value of $f_{x_1 x_2 \cdots x_h}(\vec{v})$. That is, if $f_{H}(\vec{v}) = 0$, the $Z$-gate is applied on $\ket{F}$ causing another ``$-1$'' phase, which cancels out the initial ``$-1$'' phase; otherwise, the $Z$-gate is not applied, and the initial ``$-1$'' phase remains. Thus, we have $\Ddot{U}_{f_{[H]}}(\ket{\vec{v}}\ket{1}) = (-1)^{f_{[H]}(\vec{v})} \ket{\vec{v}} \ket{1}$.
\end{proof}

\QSATCorrectness*

\begin{proof}
Let $\mathcal{E}$ denote the event that Algorithm~\ref{alg:QSAT} returns an incorrect result. The algorithm performs a sequence of statistical tests on the iteratively updated formula $\mathcal{F}_t$. Let $\mathcal{E}_t$ be the event that a statistical misclassification occurs at iteration $t$.

The global error event $\mathcal{E}$ is the union of all local error events $\mathcal{E} = \bigcup_{t=0}^{|V|} \mathcal{E}_t$. We analyze the local error $\mathcal{E}_t$ by considering the extremal cases of the solution count $K$:
\begin{itemize}
    \item \textbf{Type I Error (False Positive):} Occurs when $\mathcal{F}_t$ is UNSAT ($K=0$) but is misclassified as SAT. We utilize the standard deviation $\sigma_0$ (derived in Theorem~\ref{thm:StandardVar4NoSol}) to bound this error.
    \item \textbf{Type II Error (False Negative):} Occurs when $\mathcal{F}_t$ is SAT ($K \ge 1$) but is misclassified as UNSAT. Since $K=1$ represents the case most proximal to the UNSAT boundary (the lowest solution ratio), it constitutes the \emph{worst-case} scenario for Type II errors. We utilize $\sigma_1$ (derived in Theorem~\ref{thm:StandardVar4OneSol}) as a conservative bound for all $K \ge 1$.
\end{itemize}

In accordance with the \textbf{Bonferroni Correction} principle~\cite{Bonferroni36}, we configure the per-iteration significance level to be $\alpha = \frac{1-L}{|V|+1}$. In Lines 7--8 of Algorithm~\ref{alg:QSAT}, the thresholds $\epsilon_0$ and $\epsilon_1$ are computed by solving:
\begin{equation}
    2 \left( 1 - \Phi\left( \frac{\epsilon \sqrt{s}}{\sigma} \right) \right) = \alpha
\end{equation}
By constructing the thresholds using the extremal standard deviations $\sigma_0$ and $\sigma_1$, we guarantee that $\Pr(\mathcal{E}_t) \le \alpha$ for any number of solutions $K$. Specifically, if $K > 1$, the expected measurement $\mu_t$ moves further away from the decision boundary, rendering the error probability strictly lower than the bound established for $K=1$.

To derive the global reliability, we apply the \textbf{Union Bound} (Boole's inequality), which states that the probability of a union of events is at most the sum of their individual probabilities.
\begin{equation}
\Pr(\mathcal{E}) = \Pr\left( \bigcup_{t=0}^{|V|} \mathcal{E}_t \right) \le \sum_{t=0}^{|V|} \Pr(\mathcal{E}_t) \le \sum_{t=0}^{|V|} \frac{1-L}{|V|+1} = 1 - L
\end{equation}

Thus, the probability of returning the correct result is $\Pr(\text{Correct}) = 1 - \Pr(\mathcal{E}) \ge L$, confirming the algorithm satisfies the global confidence level $L$.
\end{proof}

\ExpectedIterations*

\begin{proof}
Let $N = 2^{|V|}$ be the search space size, and $a \in (1, 2]$ be the amplification factor for the solution ratio $r$ due to Shannon expansion (c.f.~Theorem~\ref{thm:Shannon_Expansion}{.2}). According to Algorithm~\ref{alg:QSAT}, the Shannon-expanded formula in the $t$-th iteration is classified as SAT if the observation $\mu_t \le 1 - \epsilon_0$. To reliably separate the confidence intervals of SAT and UNSAT instances, the expected value of $\mu_t$ must be bounded. Here, we introduce a global constant $\epsilon_{max}$. To be conservative, $\epsilon_{max} = \max_{t \ge 0} (\epsilon_{0,t} + \epsilon_{1,t})$, where $\epsilon_{0,t}$ and $\epsilon_{1,t}$ are the values of $\epsilon_0$ and $\epsilon_1$ solved in the $t$-th iteration, respectively. That is, it is required that $\mathbb{E}[\mu_t] \le 1 - \epsilon_{max}$.

By Theorem \ref{thm:IDMvsRatio}, the expected result in the $t$-th iteration corresponding to the solution ratio $r_t = \frac{K \cdot a^t}{N}$ is $\mathbb{E}[\mu_t] = (1 - r_t)^2$. Since $(1 - r_t)^2 \le 1 - r_t$ for $r_t \in [0, 1]$, a sufficient condition is $1 - r_t \le 1 - \epsilon_{max}$, which simplifies to $r_t \ge \epsilon_{max}$. Substituting $r_t$ in the inequality yields the required expansion steps $t$ for $K \ge 1$:
\begin{equation}
    \frac{K \cdot a^t}{N} \ge \epsilon_{max} \implies a^t \ge \frac{N \epsilon_{max}}{K} \implies t = \left\lceil \log_a \left( \frac{N \epsilon_{max}}{K} \right) \right\rceil. \label{eq:IterationNum_t}
\end{equation}

For UNSAT instances ($K=0$), Algorithm~\ref{alg:QSAT}, in the worst case, performs Shannon expansions for $|V|$ times, reaching the maximum value of $t = |V| = \log_2 N$. 
Since the probability of a formula with $K$ solutions is $\mathbb{P}(K) = \frac{1}{2^N}\binom{N}{K}$, we can calculate the expected number of Shannon expansions $\mathbb{E}[t]$ as follows:
\begin{align}
    \mathbb{E}[t] &= \mathbb{P}(0) \cdot \log_2 N + \sum_{K=1}^{N} \mathbb{P}(K)\cdot \left\lceil \log_a \left( \frac{N \epsilon_{max}}{K} \right) \right\rceil \nonumber \\
    &= \frac{\log_2 N}{2^N} + \sum_{K=1}^{N} \frac{1}{2^N} \binom{N}{K} \log_a \left( \frac{N \epsilon_{max}}{K} \right). \label{eq:expected_t_sum}
\end{align}

For large $N$, the binomial distribution is strongly dominated by the peak at $K = N/2$. We approximate the summation by evaluating the logarithmic term at this dominant point:
\begin{align}
    &\sum_{K=1}^{N} \frac{1}{2^N} \binom{N}{K} \log_a \left( \frac{N \epsilon_{max}}{K} \right) \nonumber \\
    &\approx \log_a \left( \frac{N \epsilon_{max}}{N/2} \right) \sum_{K=1}^{N} \frac{1}{2^N} \binom{N}{K} \quad \quad \because \text{Peak at } K = N/2 \nonumber \\
    &\approx \log_a (2\epsilon_{max}) \cdot 1 = \log_a(2\epsilon_{max}). \quad \quad \because \sum_{K=0}^N \frac{\binom{N}{K}}{2^N} = 1 \label{eq:dominant_approx_t}
\end{align}

Substituting Eq.~\ref{eq:dominant_approx_t} back into Eq.~\ref{eq:expected_t_sum}, we have
\begin{equation}
    \mathbb{E}[t] \approx \frac{\log_2 N}{2^N} + \log_a(2\epsilon_{max}). \nonumber
\end{equation}

As $N \to \infty$, the term $\frac{\log_2 N}{2^N} \to 0$. Therefore, the expected number of expansion steps converges to $\mathbb{E}[t] \approx \log_a(2\epsilon_{max})$. Since $\epsilon_{max}$ and $a$ are algorithm parameters determined by the fixed configurations, the expected depth is effectively bounded by $\mathcal{O}(1)$.
\end{proof}

\ExpectedExecTime*

\begin{proof}
Consider a Boolean formula $f$ with $K$ solutions. According to Eq.~\ref{eq:IterationNum_t}, since the number of iterations $t$, required by Algorithm~\ref{alg:QSAT} to solve $f$, depends on $K$, we use the symbol $t_K$ in the following. 
The total execution time $T$ to solve $f$ would be $T = s \cdot t_K \cdot \rho$, where $s$ is the number of samplings for each iteration, $t_K$ is the number of iterations, and $\rho$ is the execution time for each sampling. Since $\rho$ would be doubled when a Shannon expansion is performed, we use $\rho = c \cdot 2^{t_K}$ to be conservative, where $c$ is a constant representing the initial sampling time.

The expected execution time $\mathbb{E}[T]$ over all possible solution counts $K$ is directly formulated as $\mathbb{E}[T] = \sum_{K=0}^{N} \mathbb{P}(K) \cdot \left( s \cdot t_K \cdot (c \cdot 2^{t_K}) \right)$. As established in Theorem \ref{thm:expected_iterations}, the distribution $\mathbb{P}(K)$ is overwhelmingly concentrated at the mean $K = N/2$. This causes the contribution of tail cases to vanish asymptotically, as the exponential decay of the binomial tail strictly dominates the polynomial growth of the execution time. Thus, $\mathbb{E}[T]$ would be dominated by this peak so that we have
\begin{equation}
    \mathbb{E}[T] \approx s \cdot t_{N/2} \cdot (c \cdot 2^{t_{N/2}}).
\end{equation}

Substituting the dominant $t_{N/2} = \log_a(2\epsilon_{max})$ into the expectation yields:
\begin{align}
    \mathbb{E}[T] &\approx s \cdot \log_a(2\epsilon_{max}) \cdot c \cdot 2^{\log_a(2\epsilon_{max})} \nonumber \\
    &= s \cdot c \cdot \underbrace{ \log_a(2\epsilon_{max}) \cdot (2\epsilon_{max})^{\log_a 2} }_{\text{Constant bounded by the fixed parameters}} \quad \quad \because x^{\log_y z} = z^{\log_y x} \nonumber \\
    &= s \cdot O(1). \label{eq:final_time_complexity}
\end{align}

Because the global bounds $\epsilon_{max}$ and the amplification factor $a$ are determined by fixed algorithm configurations, the term marked by the underbrace is a constant. Thus, the expected execution time only grows linearly to the sampling size $s$.
\end{proof}

\clearpage

\section{Quantum Equivalence Checking ($\mathtt{QEC}$) Algorithm} \label{apx:QEC}

Please see Algorithm~\ref{alg:QEC}.

\LinesNumbered
\begin{algorithm}[tb]
\SetKwInOut{Input}{input}
\SetKwInOut{Output}{output}
\SetKwComment{Comment}{$\quad //$}{}
\Input{ $\ket{\phi}$: input state; \quad $n$: number of qubits; \quad $U_1$: Alice's circuit; \\ $U_2$: Bob's circuit; \quad
$m$: number of measurements; \quad $s$: number of samplings}
\Output{the estimation of $\widehat{I_{d,m}}\left((U_1 \otimes U_2) \ket{\phi}\right)$}
\BlankLine

$d \longleftarrow 2^{n}$ \;

\For{$t = 1$ \KwTo $s$}{
    $\ket{\psi} \longleftarrow (U_1 \otimes U_2) \ket{\phi}$ \;
    
    randomly select $r \in \{0,1\}$ and randomly select $i \in \{1, 2, \ldots, m\}$ \;

    \If{$r=0$}{
        $a \longleftarrow \widetilde{(A_i^1)_{\ket{\psi}}}$ \; 
        $b \longleftarrow \widetilde{(B_i^1)_{\ket{\psi}}}$ \;
        $X_t \longleftarrow 2 \cdot \alpha_{(a - b \mod d)}$ \;
    }
    \Else{
        $a \longleftarrow \widetilde{(A_{i+1}^1)_{\ket{\psi}}}$ \;
        $b \longleftarrow \widetilde{(B_i^1)_{\ket{\psi}}}$ \;
        $X_t \longleftarrow 2 \cdot \alpha_{(b - a \mod d)}$ \;
    }
}
\Return{$\frac{1}{s} \sum_{t = 1}^{s} X_t$}
    
\caption{Quantum Equivalence Checking ($\mathtt{QEC}$) ~\cite{SW22}}
\label{alg:QEC}
\end{algorithm}

\section{Early Termination of Algorithm~\ref{alg:QSAT}} \label{apx:EarlyTermination}

Another way to proactively avoid the worst case is to terminate Algorithm~\ref{alg:QSAT} earlier. If Algorithm~\ref{alg:QSAT} is terminated in $t$-th iteration, we could actually predict the result using the current $\mu_t$ based on its \emph{Bayes factor}~\cite{KR95}, as formulated in Definition~\ref{def:BayesFactor}.

\begin{definition} [Bayes Factor for $\mu_t$] \label{def:BayesFactor}
Let $H_0$ be the null hypothesis ($f$ has no solution) and $H_1$ be the alternative hypothesis ($f$ has solutions). For the $\mu_t$ obtained in the $t$-th iteration of Algorithm~\ref{alg:QSAT}, its Bayes factor, denoted by $\mathtt{BF}_{1,0}(\mu_t)$, is defined as $\mathtt{BF}_{1,0}(\mu_t) = \frac{\mathbb{P}(\mu_t \mid H_1)}{\mathbb{P}(\mu_t \mid H_0)}$, where $\mathbb{P}(\mu_t \mid H_K) = \sqrt{\frac{s}{2\pi\sigma^2(X_{d,m,K})}} \exp\left( -\frac{s(\mu_t - \widehat{I_{d,m,K}})^2}{2\sigma^2(X_{d,m,K})} \right)$.
\end{definition}

Basically, $\mathtt{BF}_{1,0}(\mu_t)$ provides quantitative evidence (strength) to support the alternative hypothesis $H_1$ (i.e., $f$ has solutions) over the null hypothesis $H_0$ (i.e., $f$ has no solution). To interpret the value of $\mathtt{BF}_{1,0}(\mu_t)$, we adopt the scale proposed by \cite{KR95}. A value of $\mathtt{BF}_{1,0}(\mu_t) \in (1, 3]$ is considered weak evidence, while $\mathtt{BF}_{1,0}(\mu_t) \in (3, 20]$ and $\mathtt{BF}_{1,0}(\mu_t) \in (20, 150]$ represent positive and strong evidence for $H_1$, respectively. For values $\mathtt{BF}_{1,0}(\mu_t) < 1$, the same scale is applied to the reciprocal $\mathtt{BF}_{0,1}(\mu_t) = 1/\mathtt{BF}_{1,0}(\mu_t)$ to support $H_0$ over $H_1$. Our experiments (c.f.~Section~\ref{sec:Evaluation}) show that Bayes factor is effective to avoid the worst case.



\zhe{The above is a bit confusing because $H_1$ is for the case of ``has solutions'', ie., the number of solutions could be 1, 2, 3,... But then we can't use a single $K$ in $\mathbb{P}(\mu_t \mid H_K)$. This makes it look like $H_1$ is the case of ``has 1 solution.'' Perhaps we can just simplify it to say $H_1$ represents exactly one solution, and then say that calculating the Bayes factor for $H_1$ versus  $H_0$ provides a conservative lower bound on the evidence. If the Bayes factor strongly supports $K = 1$ over $K = 0$, it will support $K > 1$ even more strongly.}

\end{document}